\def\N{\mathbb{N}}
\def\ta{\mathtt{a}}
\def\tb{\mathtt{b}}
\def\tc{\mathtt{c}}
\def\td{\mathtt{d}}
\def\tA{\mathtt{A}}
\def\tU{\mathtt{U}}
\def\tG{\mathtt{G}}
\def\tC{\mathtt{C}}
\DeclareMathOperator{\Fact}{Fact}
\DeclareMathOperator{\Pref}{Pref}
\DeclareMathOperator{\Suff}{Suff}
\DeclareMathOperator{\alphabet}{alph}
\DeclareMathOperator{\subseq}{Subseq}
\DeclareMathOperator{\enc}{enc}
\DeclareMathOperator{\binary}{bin}
\DeclareMathOperator{\decimal}{dec}
\newcounter{algno}
\newcommand{\alg}[1]{\refstepcounter{algno}\label{#1}}
\newcommand{\bigo}{\mathcal{O}}
\newcommand{\hpd}{%
\begin{tikzpicture}[scale=0.3]%
\draw (0,0) -- (1,0);%
\draw (0,0.2) -- (1,0.2);%
\draw (0.3,0) -- (0.3,0.2);%
\draw (0.6,0) -- (0.6,0.2);%
\draw (0.9,0) -- (0.9,0.2);%
\draw (1,0) arc (-150:150:0.2);%
\end{tikzpicture}
}
\newcommand{\hpdubi}{\overset{\infty*}{\hpd}}
\newcommand{\hpdb}{\hpd}
\newcommand{\hpdbe}{\mathrel{\hpd}}
\newcommand{\hpdbo}{\overset{1}{\hpd}}
\newcommand{\hpdboe}{\mathrel{\overset{1}{\hpd}}}
\newcommand{\hpdbp}{\overset{\mathtt{p}}{\hpd}}
\newcommand{\hpdbpe}{\mathrel{\overset{\mathtt{p}}{\hpd}}}
\newcommand{\hpdblin}{\overset{\mathtt{lin}}{\hpd}}
\newcommand{\hpdblino}{\overset{\mathtt{1'lin}}{\hpd}}
\newcommand{\hpdblinp}{\overset{\mathtt{p'lin}}{\hpd}}
\newcommand{\hpdbcon}{\overset{\mathtt{con}}{\hpd}}
\newcommand{\hpdbcono}{\overset{\mathtt{1'con}}{\hpd}}
\newcommand{\hpdbconp}{\overset{\mathtt{p'con}}{\hpd}}
\newcommand{\hpdblog}{\overset{\mathtt{log}}{\hpd}}
\newcommand{\hpdbloge}{\mathrel{\overset{\mathtt{log}}{\hpd}}}
\newcommand{\hpdblogo}{\overset{\mathtt{1'log}}{\hpd}}
\newcommand{\hpdblogp}{\overset{\mathtt{p'log}}{\hpd}}
\theoremstyle{plain}%
\newtheorem{theorem}{Theorem}
\newtheorem{proposition}[theorem]{Proposition}
\newtheorem{corollary}[theorem]{Corollary}
\newtheorem{lemma}[theorem]{Lemma}
\newtheorem{observation}[theorem]{Observation}
\theoremstyle{remark}%
\newtheorem{example}[theorem]{Example}%
\newtheorem{remark}[theorem]{Remark}%
\theoremstyle{definition}%
\newtheorem{question}{Question}
\newtheorem{problem}{Problem}
\begin{document}

\title[A Formalization of Co-Transcriptional Splicing]{A Formalization of Co-Transcriptional Splicing as an Operation on Formal Languages}

\author[1]{\fnm{Da-Jung} \sur{Cho}}\email{dajungcho@ajou.ac.kr}

\author[2]{\fnm{Szil\'ard Zsolt} \sur{Fazekas}}\email{szilard.fazekas@ie.akita-u.ac.jp}

\author[3]{\fnm{Shinnosuke} \sur{Seki}}\email{s.seki@uec.ac.jp}

\author[4]{\fnm{Max} \sur{Wiedenh\"oft}}\email{maw@informatik.uni-kiel.de}

\affil[1]{\orgdiv{Department of Software and Computer Engineering}, \orgname{Ajou University}, \orgaddress{\street{206 World cup-ro}, \city{Suwon-si}, \state{Gyeonggi-do}, \postcode{16499}, \country{Republic of Korea}}}

\affil[2]{\orgdiv{Graduate School of Engineering Science}, \orgname{Akita University}, \orgaddress{\street{1-1 Tegatagakuen-machi}, \city{Akita City}, \postcode{010-0852}, \state{Akita}, \country{Japan}}}

\affil[3]{\orgname{University of Electro-Communications}, \orgaddress{\street{1-5-1 Chofugaoka}, \city{Chofu}, \postcode{1828585}, \state{Tokyo}, \country{Japan}}}

\affil[4]{\orgdiv{Department of Computer Science}, \orgname{Kiel University}, \orgaddress{\street{Christian-Albrechts-Platz 4}, \city{Kiel}, \postcode{24118}, \state{Schleswig-Holstein}, \country{Germany}}}

\abstract{ 
RNA co-transcriptionality is the process where RNA sequences are spliced while being transcribed from DNA templates. This process holds potential as a key tool for molecular programming. Co-transcriptional folding has been shown to be programmable for assembling nano-scale RNA structures, and recent advances have proven its Turing universality. While post-transcriptional splicing has been extensively studied, co-transcriptional splicing is gaining attention for its potential to save resources and space in molecular systems. However, its unpredictability has limited its practical applications.
In this paper, we focus on engineering co-transcriptional splicing, moving beyond natural occurrences to program RNA sequences that produce specific target sequences through DNA templates. 
We introduce contextual lariat deletion operations under three energy models—linear loop penalty, logarithmic loop penalty, and constantly bounded loop length—as well as bracketed contextual deletion, where deletion occurs solely based on context matching, without any structural constraints from hairpin loops.
We examine the complexity of the template constructability problem associated with these operations and study the closure properties of the languages they generate, providing insights for RNA template design in molecular programming systems.

}

\keywords{
Transcription,
RNA co-transcriptionality, Co-transcriptional splicing, 
Hairpin deletion operations, 
Formal Model,
DNA template design
}

\maketitle
 
    \section{Introduction}
    \label{sect:intro}
RNA co-transcriptionality refers to any process that an RNA sequence undergoes while it is being synthesized from its DNA template (\textit{transcribed}). This phenomenon plays a fundamental role in molecular computing in nature and is expected to become a key primitive in molecular programming.
Indeed, Geary, Rothemund, and Andersen have proved that one of such processes called co-transcriptional folding is programmable for assembling nano-scale single-stranded RNA structures \textit{in vitro}~\cite{GearyRA14} and then Seki, one of the authors, proved in collaboration with Geary and others that it is Turing universal by using an \textit{oritatami} model~\cite{GearyMSS19}. 

\begin{figure}[b]
    \centering
    \includegraphics[scale=0.5]{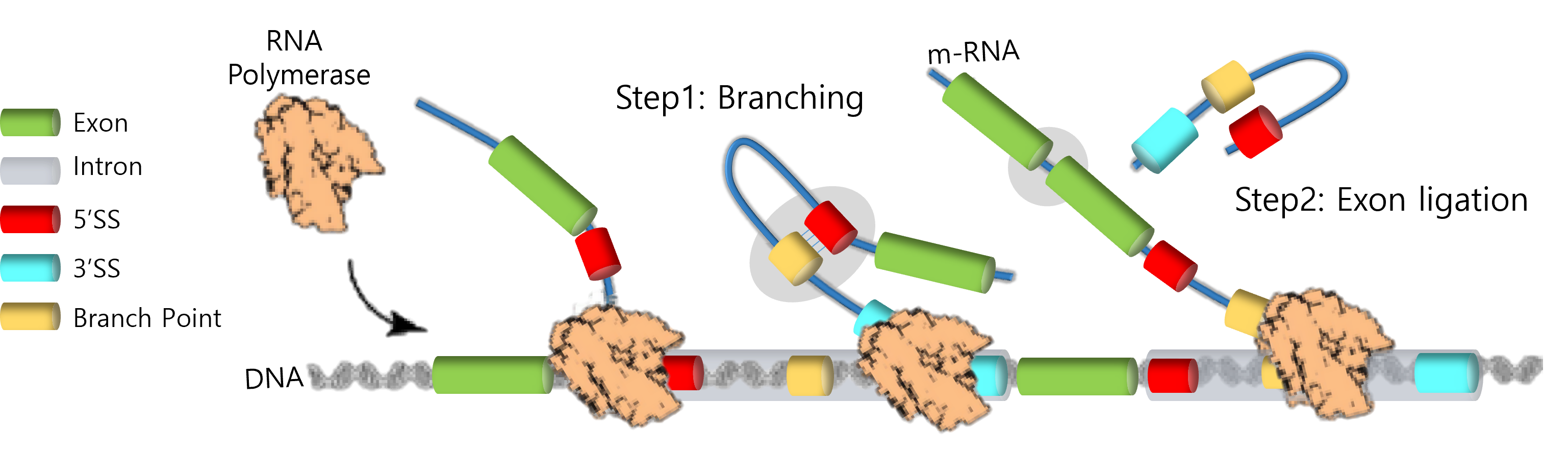}
    \caption{An illustration of co-transcriptional splicing: Introns are excised from primary transcripts by cleavage at conserved sequences known as splice sites~(SS), located at the 5’ and 3’ ends of introns. The splicing process initiates with the transcript folding into a hairpin shape, binding to the 5’SS along with the downstream branch point. The hairpin is then removed along with the remaining transcript, referred to as a lariat, at the 3’SS. Consequently, the exons are linked covalently, and the lariat containing the intron is released.}
    \label{fig:splicing}
\end{figure}

In addition to folding, RNA sequences can also undergo co-transcriptional splicing~\cite{MerkhoferHJ14}, where a subsequence is excised during transcription if it is flanked by specific motifs known as 5'- and 3'-splice sites~(SS). 
Figure~\ref{fig:splicing} illustrates co-transcriptional splicing.
As soon as transcribed, the 5'-SS is recognized by the polymerase-spliceosome complex and kept nearby, awaiting hybridization with its complementary sequence upon subsequent transcription.
If the 3'-SS is transcribed immediately after, the factor is excised.
It is much more extensively studied how RNA sequences are edited after being fully transcribed. 
This type of post-transcriptional splicing is non-deterministic because the final RNA product can vary depending on how the spliceosome interacts with different splice sites, generating distinct mRNA variants and enabling multiple RNA sequences to be encoded from a single template.

Co-transcriptional splicing has significant potential for economizing both resources and space in programming molecular systems. 
However, co-transcriptionality, with its inherent drive for efficiency and predictability, may be too greedy to accommodate the variability introduced by non-determinism. RNA sequences are widely utilized as key materials in molecular systems.
Due to their higher cost of commercial synthesis, systems are provided not with RNA sequences \textit{ab initio} but with their DNA templates instead along with polymerases for run-time synthesis of the RNA sequences. 
In this setting, co-transcriptional splicing offers an appealing mechanism for generating multiple RNA products from a single DNA template in a compact and predictable manner. This motivates the central question of this work: \emph{Can we encode a (finite) set of RNA sequences of a given system onto a single DNA template in such a way that all the sequences can be obtained by having the template be spliced co-transcriptionally, possibly along with those which do no harm to the system?} 
In this paper, we study this template construction problem in a formal model of co-transcriptional splicing. 
Our work aligns with the tradition of formal language theory that investigates computational properties of languages under biologically inspired operations~\cite{DomaratzkiO04, KariT96, KariT95}.

Co-transcriptional splicing remains far from being fully understood~\cite{HornGLEL23,SanchezGR22}. 
A factor forms a hairpin when a specific region of RNA folds back on itself, creating a loop of unbound bases and a stem of stacked base pairs. 
In nature, it remains unclear whether this hairpin formation is essential for co-transcriptional splicing or merely a by-product of splice site recognition. We propose its necessity, as the 5' splice site (5'-SS) alone is not sufficiently long to maintain the required proximity between the upstream and downstream regions of the RNA for ligation without the stabilization provided by the hairpin structure.
Co-transcriptionality is primarily governed by kinetic factors and operates in a greedy manner, acting on splice sites as soon as they are transcribed. 
As splicing decisions depend more on local interactions than on achieving the most stable configuration, the model may not need to take full account of thermodynamics. 
A hairpin results from having complementary prefix and suffix, say $x$ and $\theta(x)$, of a sequence of the form $x\ell \theta(x)$ hybridize with each other and the infix $\ell$ serve as a loop. 
Each interface between adjacent base pairs in the stem contributes a constant energy value, determined by the surrounding four bases, which reduces the overall free energy of the hairpin structure.
The stability of such structures is influenced by both components: a longer stem generally increases stability due to base pairing and stacking interactions, while a longer loop tends to destabilize the structure. Based on these characteristics, we consider three models for evaluating hairpin stability: \emph{linear loop penalty} where loop destabilization grows linearly with its length, \emph{logarithmic loop penalty} where it grows logarithmically, and 
\emph{constantly-bounded loop-length} where the loop length is restricted by a fixed constant. These models reflect different assumptions about how local RNA folding affects splicing decisions and provide a basis for our formal definition of co-transcriptional hairpin deletion operations.
Taking into account the hairpin models, we introduce operations of \emph{contextual lariat deletions} as a computation of co-transcriptional splicing.
We also define \emph{bracketed contextual deletion}, where the deletion occurs based solely on splicing site recognition, without considering the influence of hairpin structures.
Finally, we introduce the notions of ``parallel”, and ``iterated" operations for each bracketed contextual deletion and contextual lariat deletions.

Returning to the problem of designing DNA templates to produce a given set of RNA sequences, we investigate the complexity of the decision problems related to bracketed contextual deletion and contextual lariat deletion operations and provide insights into this process, which could lead to approximations for template design in specific practical problems. Table~\ref{tab:resultsofproblems} summarizes the computational complexity of the template constructibility problems for iterated bracketed contextual deletion~$\to_{[C]}^*$, 1-step of contextual lariat deletion~$\hpdbo_{\lbrack C \rbrack}$, and parallel lariat deletion~${\hpdbp_{\lbrack C \rbrack}}$.
\begin{table}[h!]
\centering

\begin{tabular}{ | c | c | c | c | } 
 \hline
 \mbox{} & $\to_{[C]}^*$ & $\hpdbo_{[C]}$ & $\hpdbp_{[C]}$ \\
 \hline
 Problem~\ref{problem:construct-exact-match}    
    & \makecell{P \\(Prop.~\ref{prop:problem-1-P})} 
    & {\color{gray}Not-Considered}  
    & {\color{gray}Not-Considered} \\ \hline

 Problem~\ref{problem:main-construction}  
    & \makecell{NP-Complete ($|\Sigma|\geq 4$) \\ (Theorem~\ref{thm:main-prob-decision-hard})}   
    & ??            
    & \makecell{NP-Complete ($|\Sigma|\geq 4$) \\ (Prop.~\ref{prop:parallelconstructability})} \\ \hline

 Problem~\ref{problem:verification}         
    & \makecell{P \\(Prop.~\ref{prop:problem-1-P})}                                  
    & \makecell{P \\ (Prop.~\ref{prop:hpa-verification-in-p})}                 
    & \makecell{P \\ (Prop.~\ref{prop:hpa-verification-in-p})} \\ 
 \hline
\end{tabular}
\vspace{2mm}
\caption{Summary of results for the template construction problems that ask whether a single DNA template can be engineered to generate a prescribed family of RNA words through co-transcriptional splicing: exact constructability~(Problem~\ref{problem:construct-exact-match}), more practical length-bounded constructability~(Problem~\ref{problem:main-construction}), and verification~(Problem~\ref{problem:verification}). Each result for the 1-step contextual lariat deletion and parallel lariat deletion holds across all three considered models: the linear loop penalty model, the constantly bounded loop-length model, and the logarithmic loop penalty model.}
\label{tab:resultsofproblems}
\end{table}

This motivates a study of the languages generated by these operations, with particular focus on their closure properties.
Table~\ref{tab:resultsofclosure} provides a summary of the computational power of bracketed contextual deletion~$\to_{[C]}$, and contextual lariat deletion~$\hpdbe_{\lbrack C \rbrack}$, where the latter is defined under three distinct energy models: the linear loop penalty model~$\hpdblin_{\lbrack C \rbrack, (d_1, d_2, d)}$, the constantly bounded loop-length model~$\hpdbcon_{\lbrack C \rbrack, (d)}$, and the logarithmic loop penalty model~$\hpdblog_{\lbrack C \rbrack, (d_1, d_2, d)}$.

\begin{table}[h]
\centering

\begin{tabular}{ | c | c | c | c | c | } 
 \hline
  & $\to_{[C]}^*$ & $\hpdblin_{[C],(d_1,d_2,d)}$ & $\hpdbcon_{[C],(c)}$ & $\hpdblog_{[C],(d_1,d_2,d)}$ \\
 \hline
 Regular        & \makecell{REG \\(Prop.~\ref{prop:regclosure})}       
                & \makecell{REG\\(Lemma~\ref{lemma:linear-hairpin-deletion-regular-languages-closed})}       
                & \makecell{REG\\(Corollary~\ref{corollary:regularclosed})}     
                & \makecell{REG\\(Corollary~\ref{corollary:reg}) } \\ \hline 

 1-Counter      & \makecell{Undecidable \\ (Prop.~\ref{prop:onecounternotclosed})}        
                & ??      
                & ??      
                & ?? \\ \hline

 Linear         & \multirow{2}{*}{%
                   \parbox[c][3.8em][c]{2.9cm}{\centering Undecidable\\(Prop.~\ref{prop:linclosure})}}  
                & \makecell{Not closed\\(Prop.~\ref{prop:linear-haipin-deletion-liner-languages-not-closed})}        
                & ??      
                & \makecell{Not closed\\(Corollary~\ref{corollary:notclosedcfl})} \\
 \hhline{|-|~|---|} 
 Context-Free   && \makecell{Not closed\\(Lemma~\ref{lemma:H-lin-not-context-free})}        
                & \makecell{Undecidable\\(Theorem~\ref{theorem:constant-bound-cf-to-undec})}    
                & \makecell{Not closed\\(Corollary~\ref{corollary:notclosedcfl})} \\
 \hline
\end{tabular}
\caption{Summary of the closure properties of regular, 1-counter, linear, and context-free languages under bracketed contextual deletion and contextual lariat deletion operations.}
\label{tab:resultsofclosure}
\end{table}

\newpage

    \section{Preliminaries}
    \label{sect:preliminaries}
Let $\N$ denote the set of positive integers and let $\N_0 = \N\cup\{0\}$. 
Let $\mathbb{Z}$ denote the set of integers. For some $m\in\N$, denote by $[m]$ the set $\{1,2,...,m\}$ and let $[m]_0$ = $[m]\cup\{0\}$. Let $\Sigma$ be a finite set of letters, called an \emph{alphabet}.
A \emph{word} over $\Sigma$ is a finite sequence of letters from $\Sigma$. 
With $\varepsilon$, we denote the \emph{empty word}. 
The set of all words over $\Sigma$ is denoted by $\Sigma^*$. Additionally, set $\Sigma^+ = \Sigma^*\setminus\{\varepsilon\}$. 
The \emph{length} of a word $w\in\Sigma^*$, i.e., the number letters in $w$, is denoted by $|w|$; hence, $|\varepsilon| = 0$. 
For some $w\in\Sigma^*$, if we have $w = xyz$ for some $x,y,z\in\Sigma^*$, then we say that $x$ is a \emph{prefix}, $y$ is a \emph{factor}, and $z$ is a \emph{suffix} from $w$. 
Being different from $w$ (strictly shorter), they are said to be \textit{proper}. 
We denote the set of all factors, prefixes, and suffixes of $w$ by $\Fact(w)$, $\Pref(w)$, and $\Suff(w)$, respectively. A word $u\in\Sigma^*$ is a subsequence of $w$ if $u$ can be obtained from $w$ by deleting arbitrary letters. The set of all subsequences of $w$ is denoted by $\subseq(w)$.
By $w[i]$ we mean the $i$-th letter of $w$ and $w[i..j]=w[i]\cdots w[j]$. 
For each regular expression $r$, writing it down, immediately refers to its language, e.g., writing $\ta(\ta|\tb)^*\tb$ refers to the set $\{\ta w\tb \mid w\in \{\ta,\tb\}^*\}$. 
We refer to languages obtained by pushdown automata with unary stacks (i.e., stacks with an alphabet of size 1) as 1-counter languages.

RNA is a chemically-oriented chain of nucleotides (letters) of four types: {\tt A}, {\tt C}, {\tt G}, and {\tt U}. 
Two RNA sequences of the same length, $x = a_1a_2 \cdots a_n$ and $y = b_n \cdots b_2 b_1$ can \textit{hybridize in the antiparallel manner} with each other if $a_i$ can bind to $b_i$ for all $1 \le i \le n$ (i.e., the first letter of $x$ with the last one of $y$, the second letter of $x$ with the second last one of $y$, and so on). 
Hybridization can be modeled using an \textit{antimorphic involution}, which is a function $\theta: \Sigma^* \to \Sigma^*$ that satisfies $\theta(\theta(a)) = a$ for all $a \in \Sigma$ and $\theta(wb) = \theta(b)\theta(w)$ for all $b \in \Sigma$ and $w \in \Sigma^*$. 
For instance, if $\theta$ maps {\tt A} to {\tt U} and vice versa, and {\tt C} to {\tt G} and vice versa (so-called Watson-Crick complementarity), then $\theta({\tt ACC}) = \theta({\tt C})\theta({\tt AC}) = {\tt G}\theta({\tt C})\theta({\tt A}) = {\tt GGU}$, and these two RNA sequences hybridize with each other as $\begin{array}{ccccc}
     {-} & {\tt A} & {\tt C} & {\tt C} & \to \\
     \gets & \rotatebox[origin=c]{180}{\tt U} & \rotatebox[origin=c]{180}{\tt G} & \rotatebox[origin=c]{180}{\tt G} & {-}
\end{array}$; then an RNA sequence on which these sequences occur as non-overlapping factors can fold into a hairpin-like structure as {\tt \underline{ACC}UUC\underline{GGU}}. 
A \textit{hairpin} can be modeled as a word of the form $x\ell \theta(x)$ for some $x, \ell \in \Sigma^*$, where $x$ and $\theta(x)$ are to hybridize with each other into a \textit{stem} and leaves $\ell$ as a \textit{loop}. 
By letting both $x$ and $\ell$ be empty, this definition enables us to handle the scenario that hairpin formation in co-transcriptional splicing is nothing but a by-product of splice-site recognitions in a unified framework proposed below. 
In reality, neither of them can be empty, and in particular, $\ell$ must be of length at least 3 due to the stiffness of RNA sequences \cite{KariLKST06}. 
Let 
\begin{equation}\label{eq:hairpin_set}
    H_{\Sigma, \theta} = \{x\ell \theta(x) \mid x, \ell \in \Sigma^*\}
\end{equation} 
be the set of all hairpins; the subscript is omitted whenever $\Sigma$ and $\theta$ are clear from the context. 

Unstable structures in general may form, but cannot stay for a long time. 
The stability of a hairpin is rewarded by a longer stem (more base pairs) while it is penalized by a longer loop (more unbound bases). 
It is well known that contribution by a stem of $n$ base pairs is a sum of \textit{stacking} contributions by the $n{-}1$ interfaces between neighboring bases pairs in the stem, which have been experimentally calculated for all possible combinations of two base pairs \cite{FreierKJSCNT86}. 
Destabilization by a loop is known to depend only on the loop length, but little is known about how the penalty depends on the length. 
Tetraloops, that is, loops of length 4 are prevailing in nature, but at the same time, hairpins with a considerably longer loop than a stem are observed therein. 
Therefore, we study the following three energy models: \textit{logarithmic loop penalty}, \textit{linear loop penalty}, and \textit{constantly-bounded loop-length}. 
The last one is in fact not an energy model, but rather aims at modeling greediness of RNA co-transcriptionality, which may prevent a factor of an RNA sequence from waiting for hybridization for a long time. 
It is simply parameterized by a single constant $d$ which bounds the loop length of a hairpin to be considered valid. 
For the sake of arguments below, it shall be considered as an energy model, and any hairpin whose loop is of length at most the constant shall be rather said stable. 
In the other two models, we rather say that a hairpin $x \ell \theta(x)$ is \textit{stable} if $d_1|\ell| -d_2|x| \le d$ (linear penalty) or $d_1 \log(|\ell|) - d_2|x| \le d$ (logarithmic penalty) for some constants $d_1, d_2, d$, which should be determined experimentally. 
By $H_{{\rm con}, (c)}$, $H_{{\rm lin}, (d_1, d_2, d)}$, and $H_{{\rm log}, (d_1, d_2, d)}$, we denote the respective sets of stable hairpins in the constantly-bounded loop-length, linear loop penalty, and logarithmic loop penalty models. 
Both the specification of a model and that of parameters in the subscript may be omitted. 

\vspace{2mm}
\begin{example}
    Assuming $\theta$ defined as the Watson-Crick complementarity from above, the word $u_1 = \mathtt{\underline{UUAGGA}GUAA\underline{UCCUAA}}$ can be considered as a hairpin with a tetra-loop $\mathtt{CUUG}$. Regarding the energy models defined above, setting a constant $d = 4$ would make $u_1$ a valid hairpin under the constantly-bounded loop-length model. For the linear penalty model, if we set $d_1 = d_2 = 1$ and $d = 0$, under which, for a stable hairpin to form, the stem must be at least as long as the loop, then $u_1$ is also a stable hairpin, even if we cut the stem and are only left with $\mathtt{\underline{AGGA}GUAA\underline{UCCU}}$. For the logarithmic penalty model, under the same parameters, we could even cut the stem even further, e.g., to $\mathtt{\underline{GA}GUAA\underline{UC}}$, and still obtain a stable hairpin.
\end{example}
\vspace{2mm}

Having defined hairpins and the three energy models for their stability, now we are ready to formally model RNA co-transcriptional splicing. 
As shown in Figure~\ref{fig:splicing}, this phenomenon is a variant of contextual hairpin deletion, in which a short left context $\alpha$ called the \textit{5' splice site (5'SS)} is first recognized and kept close to the RNA polymerase along with a subsequent factor $x$ until the complement $\theta(\alpha x)$ is transcribed and then hybridized, yielding a hairpin of the form $\alpha x \ell \theta(\alpha x)$ for some $\ell$; this hairpin is spliced out along with a short subsequent factor that ends with a right context called the \textit{3' splice site (3'SS)}; the structure spliced out looks more like a lariat. 

Let us model this contextual deletion of a lariat. 
Given an alphabet $\Sigma$, an antimorphic involution $\theta: \Sigma^* \to \Sigma^*$, and a finite set $C \subseteq \Sigma^* \times \Sigma^*$ of pairs of words (\textit{contexts}), this deletion can be modeled as an operation to obtain $uv$ from $u \alpha x \ell \theta(\alpha x) z \beta v$ for some $(\alpha, \beta) \in C$ and $u, x, \ell, z, v \in \Sigma^*$ such that $\alpha x \ell \theta(\alpha x)$ is a stable hairpin in an energy model to be considered in the context, and $|z| \le g$ for some bound $g$ on the length of the \textit{gap} between the end of the hairpin $\alpha x \ell \theta(\alpha x)$ (called branch point; see Figure~\ref{fig:splicing}) and the right context $\beta$. 
Then we write $u \alpha x \ell \theta(\alpha x) z \beta v \hpdbe_{\lbrack C \rbrack} uv$. 
In order to clearly state which energy model is considered under what parameters, we augment this notation as: $\hpdblog_{\lbrack C \rbrack, (d_1, d_2, d)}$, $\hpdblin_{\lbrack C \rbrack, (d_1, d_2, d)}$, and $\hpdbcon_{\lbrack C \rbrack, (d)}$. 
Let us call this operation \textit{contextual lariat deletion}. 
Note that we use the term ``context'' interchangeably for an element of $C$ and for the set itself, as long as no confusion arises. 

\vspace{2mm}
\begin{example}\label{example:contextual-lariat-deletion}
    Assume $\theta$ defined as the Watson-Crick complementarity from above. 
    Let $C = \{({\color{red}\tA\tA},{\color{blue}\tG\tG\tG})\}$ be a context-set with only one context. Consider the word 
    $$w = \mathtt{CUC\underline{{\color{red}AA}A}CGGC{\underline{UUU}}{\color{purple}CC}{\color{blue}GGG}C}.$$
    Then, assuming the gap-length bound $\delta$ is big enough such that $|z| \leq \delta$, we can factorize $w$ into $u = \tC\tU\tC$, $\alpha = \underline{{\color{red}\tA\tA}}$, $x = \underline{\tA}$, $\ell = \tC\tG\tG\tC$, $\theta(\alpha x) = \underline{\tU\tU\tU}$, $z = {\color{purple}\tC\tC}$, $\beta = {\color{blue}\tG\tG\tG}$, and $v = \tC$ and, using the context $({\color{red}\tA\tA},{\color{blue}\tG\tG\tG})$, obtain the following derivation by contextual lariat deletion:
    $$w = u\alpha x\ell\theta(\alpha x)z\beta v = \mathtt{CUC\underline{{\color{red}AA}A}CGGC{\underline{UUU}}{\color{purple}CC}{\color{blue}GGG}C} \hpdbe_{[C]} \tC\tU\tC\tC = uv $$
    Notice, that under the constantly-bounded energy model, $d\geq 4$ must be set for $\alpha x\ell\theta(\alpha x)$ to form a stable hairpin, as $|\ell| = 4$. Assuming the parameters $d_1 = d_2 = 1$ and $d = 0$, notice that $\alpha x\ell\theta(\alpha x)$ only forms a stable hairpin under the logarithmic energy model, not the linear one, as the stem ($|\alpha x|$) is shorter than the loop ($|\ell|$).
\end{example}
\vspace{2mm}

At this primitive stage of theoretical research, we have not been provided with enough evidence to dismiss the possibility that co-transcriptional splicing merely requires to recognize the 5' and 3' splice sites and hairpin formation is just a by-product of these recognitions. 
In this scenario, 1-step co-transcriptional splicing should be formulated rather as an operation to remove a factor from a word on condition that the factor is of the form $\alpha w \beta$ for some $(\alpha, \beta) \in C$ as $u \alpha w \beta v \to uv$. 
Unlike in most variants of contextual deletion \cite{ChoHKS18, KariT96}, contexts ($\alpha$ and $\beta$ here) do not remain. 
As Kari and Thierrin \cite{KariT96} used the notation $\xrightarrow[(\alpha, \beta)]{}$ for their contextual deletion as $u \alpha w \beta v \xrightarrow[(\alpha, \beta)]{} u\alpha \beta v$, we rather employ $\xrightarrow[\lbrack \alpha, \beta \rbrack]{}$ (actually $\to_{[\alpha, \beta]}$ to save space) to emphasize that the factor to be removed spans up to $\alpha$ leftward and up to $\beta$ rightward, that is, $u \alpha w \beta v \to_{[\alpha, \beta]} uv$. 
We can naturally extend this operation for a finite set $C$ of contexts as $u \alpha w \beta v \to_{[C]} uv$ if $(\alpha, \beta) \in C$. 
Then we call this operation \textit{bracketed contextual deletion}. 

These two operations should be applied greedily as well as iterated or parallelized for more realistic models of co-transcrptional splicing. 
There are left-greediness and right-greediness. 
The derivation $u\alpha w \beta v \to_{[C]} uv$ is \textit{left-greedy} if for any proper prefix $u_p$ of $u$ and a context $(\alpha', \beta') \in C$, if $u_p \alpha'$ is a prefix of $u\alpha w \beta v$ as $u_p \alpha' x = u\alpha w \beta v$, then $\beta'$ does not occur in $x$. 
It is rather said \textit{right-greedy} if $\beta$ is not a proper factor of $w\beta$, that is, $\beta$ at the end of $w\beta$ is the first occurrence of $\beta$ after $\alpha$ is fully read. 
Both definitions can be adapted to the context of contextual lariat deletion. 
Being applied left and right-greedily, the respective brackets are replaced by double square brackets like $\to_{[[C]}$ and $\to_{[[C]}$. 
The definition of left-greediness should be controversial particularly in the context of co-transcriptional splicing; encountering at $\alpha'$, how can co-transcriptional folding know that the matching $\beta'$ never appears?
From the perspective of molecular programming, we should engineer a DNA template that is free from such a pseudo-context. 
In this paper, hence we consider only the right-greediness. 

\vspace{2mm}
\begin{example}
    Let $C = \{({\color{red}\tA\tC\tA},{\color{blue}\tG\tG\tU}),({\color{purple}\tC\tC\tG},{\color{orange}\tA\tU\tA})\}$ be some context-set. Let
    $$w = \mathtt{A {\color{purple}CCG} AA {\color{red}ACA} AA {\color{blue}GGU} AAA {\color{blue}GGU} AA}.$$
    Then, with bracketed contextual deletion, we could remove the factor $\mathtt{{\color{red}ACA} AA {\color{blue}GGU}}$ or the factor $\mathtt{{\color{red}ACA} AA {\color{blue}GGU} AAA {\color{blue}GGU}}$. With right greedy bracketed contextual deletion, no fitting right context can be a proper factor of the removed part, hence, here we can only remove the factor $\mathtt{{\color{red}ACA} AA {\color{blue}GGU}}$ (using the first occurrence of the right context after a choosing a left context). Notice, that there exists no right context ${\color{orange}\tA\tU\tA}$ for the left context ${\color{purple}\tC\tC\tG}$ in $w$. A left-greedy perspective might assume that we are stuck to ${\color{purple}\tC\tC\tG}$ and cannot remove anything else, which could be unpractical.
\end{example}
\vspace{2mm}

The iterated extensions of the previously introduced operations can be modeled as their reflexive and transitive closure, as is done conventionally, and denoted by $\hpdbe_{\lbrack C \rbrack}^*$ and $\to_{[C]}^*$, respectively. 
As for non-overlapping, parallel, extensions, the contextual lariat deletion is parallelized recursively as 
\[
    w \ {\hpdbp_{\lbrack C \rbrack}} \ w'
\]
if either $w \hpdbe_{\lbrack C \rbrack} w'$ or $w$ and $w'$ can be factorized as $w = uv$ and $w' = u'v'$ for some $u, v, u', v' \in \Sigma^*$ such that $u \hpdbe_{\lbrack C \rbrack} u'$ and $v \ {\hpdbp_{\lbrack C \rbrack}} \ v'$. A single application of contextual lariat deletion, as defined above, is also called \emph{1-step} deletion in this context, denoted by $w\hpdboe_{[C]}w'$. Notice that 1-step deletion is the most basic case of parallel deletion.  Later on, this will be important in the context of the sets of all words obtainable by a single operation type. If we consider the parallel or 1-step variants of contextual lariat deletion in the context of different energy models, instead of writing just $\mathtt{p}$ or $1$, we write $\mathtt{p'X}$ (resp. $1'X$), for $X\in\{\mathtt{lin},\mathtt{log},\mathtt{con}\}$, above the operator.
The bracketed contextual deletion can be parallelized likewise.

\vspace{2mm}
\begin{example}
    Assume $\theta$ defined as the Watson-Crick complementarity from above. Let $C = \{({\color{red}\tA\tA},{\color{blue}\tG\tG\tG})\}$, again, be a context-set with only one context.
    Consider an extended version of the word considered in Example \ref{example:contextual-lariat-deletion}:
    $$w = \mathtt{CUC\underline{{\color{red}AA}A}CGGC{\underline{UUU}}{\color{purple}CC}{\color{blue}GGG}CC\underline{{\color{red}AA}UAU}CUUC\underline{AUAUU}{\color{purple}C}{\color{blue}GGG}C}.$$ 
    Applying parallel contextual lariat deletion, we can remove two factors at once (factorizations marked by colors and underline as in the previous examples) and, assuming the gap-length bound $\delta$ to be larger or equal to $|z|$ for both removed factors, we obtain
    $$ w = \mathtt{CUC\underline{{\color{red}AA}A}CGGC{\underline{UUU}}{\color{purple}CC}{\color{blue}GGG}CC\underline{{\color{red}AA}UAU}CUUC\underline{AUAUU}{\color{purple}C}{\color{blue}GGG}C} \hpdbpe_{\lbrack C \rbrack} \tC\tU\tC\tC\tC\tC.$$
    Notice, that a constant loop-length bound of $d \geq 4$ still suffices to remove both lariat-structures. If we assume the parameters $d_1 = d_2 = 1$ and $d = 0$, then contextual lariat deletion under the linear energy model only allows for the second factor to be removed. Under the logarithmic model, however, we could still remove both factors at once.
\end{example}
\vspace{2mm}

Let ${\rm OP}$ be one of the operations defined so far. 
Then, given a word $w$, we denote by $[w]_{\rm OP}$ the set of all the words that can be obtained by applying the operation {\rm OP} to $w$. 
For a language $L$, let $[L]_{\rm OP} = \bigcup_{w \in L} [w]_{\rm OP}$.

    \section{Template Construction Problem}
    \label{sect:temp-cons-hpd}
This section formalizes the problems of deciding whether a single DNA template can be engineered so that co-transcriptional splicing yields a prescribed family of RNA words. We begin (3.1) with bracketed contextual deletion and introduce three decision problems: verification, exact constructability, and the more practical length-bounded constructability. Polynomial-time algorithms are given for verification and for the exact variant, whereas the length-bounded problem is proved NP-complete over alphabets with at least four symbols.
In 3.2 we repeat the analysis for the biologically motivated contextual lariat deletion model (constant, linear, logarithmic loops). Here, verification remains in P, but the general constructability task is again NP-complete in the parallel setting.

First we consider the problem of encoding a finite set of RNA sequences into a single DNA template (or a limited number of templates, though not being considered in this paper) from which all the RNA sequences can be retrieved by co-transcriptional splicing. 
This problem can be formalized in the most uncompromising way as follows: 

\vspace{2mm}
\begin{problem}[Exact Template Constructability]\label{problem:construct-exact-match}
    Given a finite set of target (RNA) words $R = \{w_1, ..., w_n\} \subseteq \Sigma^*$ and a finite set $C$ of contexts, does there exist a (template) word $t \in\Sigma^*$ such that the set of all words obtainable from $t$ by co-transcriptional splicing is exactly $R$ in a supposed model?
\end{problem}
\vspace{2mm}

This formalization can, of course, allow for contextual lariat deletion instead. 
However, as observed in the following, this setting is too restrictive for arbitrary targets to be accommodated in a single template, as co-transcriptional splicing is intrinsically length-decreasing. 

\vspace{2mm}
\begin{observation}\label{lem:exactsetfromword}
    A solution $t$ to Problem~\ref{problem:construct-exact-match} must satisfy all the following properties: 
    \begin{enumerate}
        \item such $t$ is unique;
        \item $t \in R$ and all the other words in $R$ are strictly shorter than $t$.
    \end{enumerate}
\end{observation}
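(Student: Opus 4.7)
The plan is to deduce both properties from the single structural fact that co-transcriptional splicing is length non-increasing, and strictly length-decreasing whenever the removed factor is non-empty. I would establish Property~2 first and obtain Property~1 as an immediate consequence. Throughout the argument I take the operation in Problem~\ref{problem:construct-exact-match} to be the reflexive-transitive closure (i.e.\ $\to_{[C]}^*$ or $\hpdbe_{[C]}^*$), which is the reading consistent with Table~\ref{tab:resultsofproblems} and is in fact the only reading under which $t\in R$ can even be expected to hold.

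The first step is length-monotonicity for a single deletion. For bracketed contextual deletion, $u\alpha x\beta v \to_{[C]} uv$ removes the factor $\alpha x\beta$, so $|uv| = |u\alpha x\beta v| - (|\alpha|+|x|+|\beta|)$, with equality iff $\alpha=x=\beta=\varepsilon$. For contextual lariat deletion under any of the three energy models, $u\alpha x\ell\theta(\alpha x)z\beta v \hpdbe_{[C]} uv$ removes a factor of length $2|\alpha x|+|\ell|+|z|+|\beta|$, whose vanishing likewise forces every component to be empty. In both cases the only length-preserving single step is a trivial one, and passing to the reflexive-transitive closure yields: whenever $t\to_{[C]}^* w$ (or $t\hpdbe_{[C]}^* w$) we have $|w|\le|t|$, with equality only when $w=t$.

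Property~2 now follows cleanly: by reflexivity $t \in [t]_{\rm OP}=R$, so $t\in R$; and for any other $w\in R$ the derivation from $t$ must contain at least one non-trivial step, giving $|w|<|t|$. Property~1 is then a one-line consequence of Property~2: if $t_1$ and $t_2$ are two solutions, then both lie in $R$ and each is strictly longer than every other element of $R$; applying this statement to each in turn forces $t_1=t_2$, since neither can be strictly longer than the other.

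No serious obstacle is expected; the observation is essentially the formal record of the remark, already made in the paragraph preceding it, that co-transcriptional splicing is intrinsically length-decreasing. The only point worth flagging is the handling of degenerate contexts such as $(\varepsilon,\varepsilon)\in C$, which enable trivial length-preserving deletions; these are precisely the cases absorbed by the equality clause of the length inequality and cause no difficulty in the transitive-closure step, since any sequence of trivial steps simply returns $t$ itself.
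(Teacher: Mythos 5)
Your proposal is correct and follows exactly the reasoning the paper relies on: the paper gives no explicit proof of this Observation, justifying it only by the remark that co-transcriptional splicing is intrinsically length-decreasing, and your argument (reflexivity gives $t\in R$, strict length decrease of any non-trivial step gives $|w|<|t|$ for $w\neq t$, and uniqueness follows by applying Property~2 to two candidate solutions) is precisely the intended one, spelled out carefully. Your flag about the reflexive-transitive reading of the operation and about degenerate contexts is a reasonable clarification rather than a deviation.
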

\vspace{2mm}

Tolerance in nature and experiments encourages the formalization to be relaxed. 
Biological processes are inherently stochastic, and RNA is naturally degraded by RNase. 
These target RNA sequences are to interact with each other via designated complementary factors called \textit{domains}. 
Thus, unless competing with target sequences for these domains, any byproduct should be tolerated. 
It then suffices to concatenate (finite number of) target sequences as $t = xyw_1xyw_2xy \cdots xyw_nxy$ via $xy$'s using a single context $(x, y)$; from this $t$ we can retrieve all the targets by iterated bracketed contextual deletion $\to_{[x, y]}^*$, along with non-targets like $w_1w_2$ and $xyw_2$. 
This trivial but costly solution motivates the length bound $k$ in the following formalization. 

\vspace{2mm}
\begin{problem}[Template Constructability]\label{problem:main-construction}
    Given a finite set of words $R = \{w_1,...,w_n\} \subseteq \Sigma^*$, a finite set of contexts $C$, and an integer $k \ge 1$, does there exist a word $t\in\Sigma^*$ of length at most $k$ such that all the words in $R$ can be obtained from $t$ by co-transcriptional splicing in a supposed model?
\end{problem}
\vspace{2mm}

Let us also formalize its verification variant. 

\vspace{2mm}
\begin{problem}[Template Verification]\label{problem:verification}
    Given a finite set of words $R = \{w_1,...,w_n\} \subseteq \Sigma^*$, a finite set of contexts $C$, and a word $t \in\Sigma^*$, can all the words in $R$ be obtained from $t$ by co-transcriptional splicing in a supposed model?
\end{problem}
\vspace{2mm}

\subsection{On the bracketed contextual deletion}

 First we argue that the Exact Template Constructability problem can be decided efficiently due to the fact the set $R$ in the positive instances must contain a single largest word and all other elements of the transcription set of this largest word with respect to some context-set $C$. Formally, from Observation~\ref{lem:exactsetfromword}, we get that Problem~\ref{problem:construct-exact-match} has a solution if and only if all of the following hold: (1) there is a single longest word $w_{max}$ in $R$, (2) the closure $[w_{max}]_{\hpdubi_C}$ contains $R$ and (3) $[w]_{\hpdubi_C}\setminus R=\emptyset$. Conditions (1) and (3) can be checked efficiently using standard polynomial time algorithms found in most textbooks, whereas condition (2) is equivalent to Problem~\ref{problem:verification} (Template Verification) that we will tackle after, and here we use as a subroutine. As we will see in what follows, the complexity of Template Verification dominates the complexity of checking conditions (1) and (3), so please see the complexity discussion there.

In summary, Problem~\ref{problem:construct-exact-match} is decidable in polynomial time (implicitly yielding also the template $w$ if it exists). By Remark \ref{remark:right_greedy} we have that an analogous construction works also for the greedy version of Problem \ref{problem:construct-exact-match} yielding efficient decidability there as well.

Next we show that template verification can be decided in the iterated bracketed contextual deletion model in polynomial time by effectively constructing a DFA recognizing $[w]_{\to_{[C]}^*}$ for a given word $w$, using standard algorithmic properties of regular languages.

\vspace{2mm}
\begin{remark}
    It does not suffice to explicitly construct each word in $[w]_{\to_{[C]}}^*$, as this set might contain an exponential number of words in $|w|$, $|C|$ and $|\Sigma|$. Consider the word $w = a_1...a_n$, for some alphabet $\Sigma_n = \{a_1, a_2, \ldots, a_n\}$, and large $n\in\N$. 
    The context $C = \{\ (a_i,a_j)\mid a_i,a_j\in\Sigma_n, i < j\ \}$ is of size $\sum_{i\in[n]}(i)$, which is polynomial in $n$, but $[w]_{\to_{[C]}}^*$ is the set of all subsequences of $w$ where each missing factor has at least length $2$.
\end{remark}

        \alg{algo:inverse}\paragraph*{Algorithm~\ref{algo:inverse}: construct an NFA for $[w]_{\to_{[C]}^*}$ from a given $w$}
        \begin{enumerate}
            \item[0.] Construct DFA $A=(Q,\Sigma,q_0,\delta,F)$ that accepts the language $\{w\}$.
            \item[1.] Set $changed=\mathrm{False}$
            \item[2.] For each pair $(p,q)\in Q\times Q$ with $p\neq q$ and for each context $(x,y)\in C$:
            \begin{enumerate}
                \item[3.] If there is a word $xzy$ such that $q\in \delta(p, xzy)$, add an $\varepsilon$-transition between $p$ and $q$ and remove $(p,q)$ from $Q$, set $changed=\mathrm{True}$.
                
            \end{enumerate}
            \item[4.] Remove the $\varepsilon$-transitions by the $\varepsilon$-closure method.
            \item[5.] If $Q$ is empty or $changed=\mathrm{False}$, stop. Otherwise, continue from 1.
        \end{enumerate}

    \begin{example}
        Let $w=abcd$ and $C=\{(a,c),(c,d)\}$. 
        The minimal DFA accepting only $w$ is $(\{q_0,\dots,q_4\},\{a,b,c,d\},q_0,\delta,\{q_3\})$ where $\delta(q_i,w[i+1])=q_{i+1}$ for all $i\in [0,2]$. After setting $changed=\mathrm{False}$, we enter the loop in 2., and find the pair of states $p=q_0, q=q_3$, context $(x,y)$ and word $xzy$ with $x=a$, $z=b$, $y=c$ such that $\delta(p,xzy)=\delta(q_0,abc)=q_3$, so we add an $\varepsilon$-transition between $q_0$ and $q_3$. We do the same for $q_2$ and $q_4$ due to the context $(c,d)$. We set $changed=\mathrm{True}$ and move to step $4.$ Here we remove the $\varepsilon$-transitions by the classical method of constructing $\varepsilon$-closures for each state. This results in adding a transition $q_4\in\delta(q_0,d)$ and adding $q_2$ to the set of final states, since its $\varepsilon$-closure includes $q_4$. The new automaton, on top of $abcd$, now also accepts the words $ab$ and $d$, consistent with the hairpin deletion of the factors $cd$ and $abc$, respectively. In the next round, $changed$ is set to $\mathrm{False}$ again and this time the loop in $2.$ does not modify the value, so the algorithm finishes after step 5.
    \end{example}

\vspace{2mm}
\begin{proposition}\label{prop:problem-1-P}
    Algorithm~\ref{algo:inverse} constructs an NFA without $\varepsilon$-transitions accepting $[w]_{\to_{[C]}^*}$. 
    As a consequence, template verification with respect to the iterated bracketed contextual deletion is decidable in $\mathcal{O}(|w|^7\cdot |C|^2 + |w|^2\cdot \sum_{u\in W}|u|)$ time.
\end{proposition}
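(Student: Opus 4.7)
The plan is to split the statement into two halves: correctness of the NFA construction, and a running-time tally covering the construction phase and the membership tests on the targets.

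For correctness, I will induct on the iteration count of the outer loop, establishing the invariant that after the $k$-th pass through steps 1--5 the current automaton accepts exactly the set of words reachable from $w$ by at most $k$ applications of $\to_{[C]}$. The base case $k=0$ is immediate since the initial DFA recognizes $\{w\}$. For the inductive step, the key observation is that inserting an $\varepsilon$-transition from $p$ to $q$ whenever $q \in \delta(p,xzy)$ for some $(x,y)\in C$ is semantically identical to the deletion rule: any accepting run of the form $q_0 \xrightarrow{u} p \xrightarrow{xzy} q \xrightarrow{v} q_f$ also witnesses acceptance of $uv$, matching exactly $u\,xzy\,v \to_{[x,y]} uv$. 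Thus one further pass enlarges the accepted language by precisely one more layer of deletions. Both inclusions then follow: each added $\varepsilon$-edge can be expanded back into its witness factor to produce an explicit $\to_{[C]}^*$-derivation from $w$, and conversely every derivation of length $k$ is tracked no later than by iteration $k$.

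Termination of the outer loop rests on a monotonicity argument: the transition relation only grows, and its size is bounded by $|Q|^2 \cdot (|\Sigma|+1)$, which is polynomial in $|w|$; hence after polynomially many passes the flag $changed$ stays $\mathrm{False}$ and the loop exits. For the running time, each iteration inspects $\mathcal{O}(|w|^2)$ state pairs against $|C|$ contexts; the test ``is there some word $xzy$ taking $p$ to $q$ in the current NFA'' reduces to an $x$-run from $p$, a pure reachability check, and a $y$-run ending at $q$, each polynomial in $|w|$ and in the length of the relevant context. Multiplying the per-iteration cost by the polynomially many iterations and absorbing the $\varepsilon$-closure overhead yields the $|w|^7\cdot|C|^2$ summand. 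Once the NFA is built, deciding whether each target lies in $[w]_{\to_{[C]}^*}$ is standard NFA-membership, costing $\mathcal{O}(|w|^2 \cdot |u|)$ per target $u$ and $\mathcal{O}\bigl(|w|^2 \cdot \sum_{u \in R} |u|\bigr)$ overall.

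The main obstacle I expect is making the correspondence between algorithm passes and derivation depth airtight. Checking $q \in \delta(p,xzy)$ in the updated NFA rather than in the original DFA is exactly what allows chains of deletions to be captured by chains of iterations, but one must simultaneously verify that no $\varepsilon$-edge is ever inserted without an underlying $\to_{[C]}^*$-derivation from $w$. The cleanest way is to carry an invariant attaching to every current transition a witnessing derivation whose accepting run uses that transition, and to show inductively that every new $\varepsilon$-edge extends some such witness by exactly one deletion step. A secondary bookkeeping nuisance is pinning down the exact constant in the $|w|^7\cdot|C|^2$ bound, which requires tracking context lengths against $|w|$ inside the reachability subroutine.
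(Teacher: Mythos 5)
Your proposal follows essentially the same route as the paper's proof: analyze Algorithm~\ref{algo:inverse} directly, justify soundness and completeness of the $\varepsilon$-edges via witnessing deletions, bound the number of productive passes polynomially in $|w|$ and $|C|$, implement the per-pair test for a word in $x\Sigma^*y$ from $p$ to $q$ as a product/reachability check, and finish with standard NFA membership for each target word at cost $\mathcal{O}(|w|^2\cdot|u|)$. The only nit is that your loop invariant (``accepts \emph{exactly} the words reachable in at most $k$ deletion steps'') is too strong, since a single pass can capture several deletion layers once earlier $\varepsilon$-edges feed later checks; the two one-sided inclusions you state immediately afterwards (at least the $k$-step-reachable words, at most the full closure) are what is actually needed and suffice at the fixpoint.
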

    \begin{proof}
        Constructing the DFA $A$ accepting $w$ is trivial and takes $\mathcal{O}(|w|)$ time, so let $A=(\{q_0,\dots,q_m\},\Sigma, q_0, \delta , \{q_m\})$, where $m=|w|$, such that $\delta(q_{i-1},a_i)=q_i$ for all $i\in [m]$, where $w=a_1\cdots a_m$. This machine recognizes $\{w\}$, and it will be modified to obtain a NFA for $[w]_{\to_{[C]}^*}$. Let $Q_2=\{(q_i,q_j) \mid i\neq j\}$. 
    \noindent
    To execute Algorithm~\ref{algo:inverse}, first, for each $(x,y)\in C$ the DFA $M_{x,y}$, accepting $x\Sigma^*y$ is constructed. This can be done in $\bigo(|xy|^2)$ time and results in $M_{x,y}$ having $|xy|+1$ states (the sink state is omitted). 
    The whole loop in the algorithm is executed at most $|Q_2|\cdot |C| \in \bigo(m^2|C|)$ times. In each cycle and for each pair of $(p,q)\in Q_2$ and $(x,y)\in C$, step 3. is executed by checking whether the current NFA $A'$ accepts any word in $x\Sigma^*y$ starting from state $p$ and ending at $q$. As $A'$ has the same $|Q|$ number of states as $A$, and $\bigo(|Q|^2)$ transitions, this can be done in time $\bigo(|Q|^2\cdot |xy|)=\bigo(m^2\cdot |xy|)$ by checking $L(A')\cap L(M_{x,y})=\emptyset$ using the product automaton of $A'$ and $M_{x,y}$. This means that step 2. finishes in time $$\bigo(m^2\cdot C\cdot m^2\cdot |xy|)=\bigo(m^4C\cdot |xy|)\subseteq\bigo(m^5C),$$ 
    where the last inclusion follows from the fact that for contexts $(x,y)$ such that $|xy|>|w|$ we do not need to perform any checks, as the context pair cannot occur without overlap in $w$.
    
    Removing the $\varepsilon$ transitions from $A'$ in step 4. takes at most $\bigo(m^4\cdot |\Sigma|)$ time as $\bigo(m^2)$ is the maximum number of $\varepsilon$ transitions and $\bigo(m^2\cdot |\Sigma|)$ is the number of other transitions. The complexity of this step is dominated by that of step 2, so we get that the algorithm construct the NFA $A'$ accepting $[w]_{\to_{[C]}^*}$ in time $\bigo(m^2|C|\cdot m^5|C|)=\bigo(m^7\cdot|C|^2)$.
        
    After running Algorithm~\ref{algo:inverse}, an NFA with $m+1$ states without $\varepsilon$-transitions is obtained that accepts the language $[w]_{\to_{[C]}^*}$. Whether a word $w_i$ is in the language can be decided in time $\bigo(m^2|w_i|)$. Decide that for each $w_i$, in turn, and answer yes, if each of them is in the language.
\end{proof}

\begin{remark}\label{remark:right_greedy}
    Notice that Proposition \ref{prop:problem-1-P} can be adapted for the use of right-greediness.
    It suffices to add in step 2 of Algorithm \ref{algo:inverse} the constraint that $q\in\delta(p,xzy)$ for $y$ not being a factor of $zy[1...|y|-1]$. This increases the complexity bound somewhat, as now we are checking $L(A')\cap x\Sigma^* y\cap \overline{x\Sigma^* y\Sigma^+}$ for emptiness (where the last term in the intersection filters out words that have an occurrence of $y$ prior to the suffix $y$). The constructed product NFA for that step still has size and construction time that is polynomial in the size of the given instance of template verification. 
\end{remark}
\vspace{2mm}

Unlike the verification, template constructability turns out to be NP-complete, at least over an alphabet of size at least 4. 
The proof is by a reduction from the decision variant of the Shortest Common Supersequence problem (SCSe), which is known to be NP-complete for binary or larger alphabets \cite{Middendorf1994}.

\vspace{2mm}
\begin{remark}
    Problem \ref{problem:main-construction} is defined for the non-greedy version of bracketed contextual deletion. Notice, that all following results also work with greedy iterated bracketed contextual deletion, resulting in NP-completeness of such a problem variant as well.
\end{remark}
\vspace{2mm}

\begin{theorem}\label{thm:main-prob-decision-hard}
    Template Constructability with respect to the iterated bracketed contextual deletion is NP-complete for $|\Sigma| \geq 4$.
\end{theorem}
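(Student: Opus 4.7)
My plan is to show NP-hardness by polynomial reduction from the binary-alphabet decisional Shortest Common Supersequence problem (SCSe), cited as NP-complete in \cite{Middendorf1994}; NP-membership follows immediately because any candidate template $t$ with $|t|\le k$ is a polynomial-size certificate, verifiable in polynomial time by Proposition~\ref{prop:problem-1-P}.

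Given an SCSe instance $(W=\{w_1,\ldots,w_n\}\subseteq\{0,1\}^*,\ k')$, I would use exactly four letters (meeting the hypothesis $|\Sigma|\ge 4$): set $\Sigma=\{0,1,L,R\}$, define $\phi(b)=LbR$ for $b\in\{0,1\}$ and extend homomorphically, and output the Template Constructability instance with target set $\{\phi(w_i)\mid i\in[n]\}$, context set $C=\{(L,R)\}$, and length bound $k=3k'$. The forward direction is direct: a common supersequence $s$ of $W$ with $|s|\le k'$ gives the template $\phi(s)=Ls_1R\,Ls_2R\cdots Ls_MR$ of length $3|s|\le k$; for each $w_i\in\subseq(s)$, the blocks of $\phi(s)$ absent from some embedding of $w_i$ in $s$ partition into maximal contiguous runs of the form $Ls_aR\cdots Ls_bR$, and each such run is itself a factor starting with $L$ and ending with $R$, hence removable by $\to_{[C]}$ in one step, so iterated deletions realise $\phi(w_i)$.

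The backward direction is where the real work lies. Given a valid template $t$ with $|t|\le 3k'$, I need to produce a common supersequence of $W$ of length at most $k'$. The structural fact I would exploit is that a nonempty string $x$ can be iteratively deleted to $\varepsilon$ under context $(L,R)$ iff $x$ begins with $L$ and ends with $R$ (since such an $x$ is already a single $LzR$ factor); equivalently, a set $K\subseteq[|t|]$ realises the kept positions of some derivation $t\to_{[C]}^*\phi(w_i)$ iff $t$ restricted to $K$ equals $\phi(w_i)$ and every maximal contiguous run of $[|t|]\setminus K$ starts with $L$ and ends with $R$ in $t$. The sequence $s^\star$ of $\{0,1\}$-letters of $t$ that belong to at least one derivation's kept set is then automatically a common supersequence of $W$, and the reduction is correct provided the counting inequality $\ell+r\ge 2c$ holds, where $\ell,r,c$ count the $L$s, $R$s, and $\{0,1\}$-letters of $t$ after pruning letters in no derivation's kept set (such a letter lies strictly inside a removed run in every derivation, and removing an interior position of an $LzR$ run leaves an $LzR$ run, so pruning preserves every derivation). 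Combined with $|t|=\ell+r+c\le 3k'$, the inequality forces $c\le k'$, making $s^\star$ the desired supersequence.

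The principal obstacle is establishing $\ell+r\ge 2c$ in the pruned template. The inequality is immediate within a single derivation because the kept pattern $LbRLbR\cdots$ pairs its $|w_i|$ kept letters with $|w_i|$ distinct kept $L$s and $R$s; the subtle case is two useful letters coming from derivations of different $w_i$s that compete for the same $L$- or $R$-position of $t$. I would resolve this by a Hall-type matching argument: form the bipartite graph between useful letters and $L$-positions of $t$, with an edge whenever some derivation certifies that $L$ as the bracket immediately preceding that letter in its kept pattern, and argue that a deficient set would force several letters to be packed between the same consecutive pair of brackets in every derivation that uses them, contradicting per-derivation distinctness; a symmetric argument handles $R$s. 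Should the direct matching prove technically delicate, a fallback is to iteratively normalise $t$ by relocating or excising misplaced brackets until it attains the canonical form $\phi(s)$, with $|t|$ acting as a monovariant that witnesses $s$ as the desired short common supersequence.
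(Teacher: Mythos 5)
Your reduction is the same as the paper's (your $L,R$ play the role of its $\#_s,\#_e$, with the same per-letter encoding, the same single context, and the same bound $3k'$), your forward direction matches the paper's, and your characterization of realizable kept sets (every maximal removed run starts with $L$ and ends with $R$) is correct. The genuine gap is exactly where you say the real work lies: the inequality $\ell+r\ge 2c$ is asserted but not established, and the Hall-type matching you sketch does not go through as described. Assigning a useful letter to ``the $L$ immediately preceding it in its kept pattern'' need not be injective across derivations: two derivations can both keep the same $L$ at position $x$ and then keep \emph{different} letters $p<q$ as their respective next kept positions, provided $t[x+1]=L$ (one removes the run $t[x+1..p-1]$, the other removes $t[x+1..q-1]$, both of which can be legal maximal runs). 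So ``per-derivation distinctness'' does not by itself refute a deficient set, and the Hall condition would need a different argument that you do not supply; the fallback normalization is likewise only gestured at.

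The inequality is nevertheless true, and a short local argument -- close to, but not the one you propose -- closes the gap: for a useful letter at position $p$ with witnessing derivation $D$, either its preceding kept $L$ is $t[p-1]$ itself, or $t[\lambda+1..p-1]$ is a maximal removed run of $D$ and hence $t[p-1]=R$; either way $t[p-1]\in\{L,R\}$, and symmetrically $t[p+1]\in\{L,R\}$. Moreover no bracket position $q$ can be flanked by useful letters on both sides: if $t[q]=L$, any derivation keeping the letter at $q+1$ must keep $q$ as that letter's preceding $L$, and then the maximal run it removes just before $q$ (or the prefix $t[1..q-1]$ it removes) must end in an $R$ at position $q-1$, which would then not be a letter; the case $t[q]=R$ is symmetric. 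Hence the $2c$ neighbour positions are pairwise distinct brackets and $\ell+r\ge 2c$ holds counting \emph{all} brackets, so no pruning or matching is needed. The paper avoids this counting altogether by normalizing the template in three derivability-preserving, length-non-increasing steps (no two adjacent $\Sigma$-letters, nonempty end blocks, every internal bracket block of length at least $2$) and reading $|t|\ge 3m$ off the canonical form. Either route completes the proof, but as written your argument for the counting step is incomplete.
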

\begin{proof}
     First we give an argument for the containment in NP. As a witness for a yes instance of the problem, it needs only to contain the template word $w$ with length $\leq k$ (as mentioned before, $k$ is linearly upper bounded by the size of the words in $R$ and the size of contexts in $C$, as otherwise trivial solutions always exist), a series of contexts from $C$, and the indices at which the bracketed contextual deletion steps occur that yield each $w_i\in R$. The length of the series of contexts and indices is linear in $|w|$, as each splicing removes at least one symbol. \\

    Let us now turn our attention to the hardness. 
    Let $(W,k)$ be an instance of SCSe such that $W\subset\Sigma^*$ is a finite set of words over some alphabet $\Sigma$ with $|\Sigma|\geq 2$ and with $k\in\N$. SCSe asks whether there exists a common supersequence of $W$ with length at most $k$. Encode the instance $(W,k)$ to an instance $(R',C,k')$ of Problem \ref{problem:main-construction} in the following way. Let $W = \{w_1, w_2, ... , w_n\}$ for some words $w_i\in\Sigma^*$ and $i\in[n]$. Then we construct $R' = \{w_1', w_2',...,w_n'\}$ by setting 
    $$w_i' = \#_s w_i[1] \#_e\ \#_s w_i[2] \#_e\ ... \#_s w_i[|w_i|] \#_e$$
    for some new letters $\#_s,\#_e\notin\Sigma$. Next, set $C = \{(\#_s, \#_e)\}$. Finally, let $k' = 3k$.
    
    First, assume there exists a solution $w\in\Sigma^*$ for the SCSe instance $(W,k)$ such that $|w| \leq k$ and $W \subseteq \subseq(w)$. Then, construct a word $w' \in (\Sigma \cup \{\#_s,\#_e\})^*$ such that
    $$ w' = \#_s w[1] \#_e\ \#_s w[2] \#_e\ ...\ \#_s w[|w|] \#_e. $$
    Then $|w'| = 3|w| \leq 3k$. Notice that $w\in\subseq(w')$ and $w_i\in\subseq(w)$, so $w_i\in\subseq(w')$. As $w_i\in\subseq(w')$, there exists a sequence $(\ell_1,\ell_2,...,\ell_{|w_i|})\in[|w'|]^{|w_i|}$ of indices in $w'$ such that
    $$w_i = w'[\ell_1]w'[\ell_2]...w'[\ell_{|w_i|}].$$
    By construction of $w'$, every occurrence of a letter of $\Sigma$ in $w'$ is surrounded by $\#_s$ and $\#_e$. So, $w_i' = \#_s w'[\ell_1] \#_e \#_s w'[\ell_2] \#_e ... \#_s w'[\ell_{|w_i|}]\#_e$ is a subsequence of $w'$. Also, by construction, notice that all other factors can be removed by bracketed contextual deletion, as their structure can be split up in factors $\#_s a \#_e$ for $a\in\Sigma$ and $(\#_s,\#_e)\in C$. Hence, $w' \to_{[C]}^* w_i'$ is obtained and by that it can be seen that $R'\subseteq [w']_{\to_{[C]}^*}$.
    
    Now, assume there exists a solution word $w'\in(\Sigma\cup\{\#_s,\#_e\})^*$ for the constructed instance $(R',C,3k)$ with $|w'|\leq 3k$ such that $R'\subseteq [w']_{\to_{[C]}^*}$. Begin this direction of the proof by showing that one can assume $w'$ to be of a very specific structure.
    
    (1) Suppose $w'$ contains a factor $a_ia_j\in\Sigma^2$, so $w' = u_1 a_ia_j u_2$ for some $u_1,u_2\in(\Sigma\cup\{\#_s,\#_e\})^*$. No word in $R'$ contains any factor of two subsequent letters from $\Sigma$. Also, no splicing context in $C$ contains any letter of $\Sigma$. By that, since $w'\to_{[C]}^* w_i'$ for all $w_i'\in R'$, also $u_1u_2 \to_{[C]}^* w_i'$ holds for all $w_i'\in R'$. Notice that $|u_1u_2| < |w'| \leq 3k$. Hence, from now on assume w.l.o.g. that $w'$ does not contain such a factor and by that
    $$w' = u_1\ a_{j_1}\ u_2\ a_{j_2}\ u_3\ ...\ u_m\ a_{j_m}\ u_{m+1}$$
    for some $a_{j_1},...,a_{j_m}\in\Sigma$, $u_2,...,u_m,\in\{\#_s,\#_e\}^+$, and $u_1,u_{m+1}\in\{\#_s,\#_e\}^*$.
    
    (2) Next, suppose $u_1 = \varepsilon$ ($u_{m+1} = \varepsilon$). Then no word from $R'$ is obtainable by bracketed contextual deletion as no word in $R'$ starts (ends) with a letter from $\Sigma$ and $a_{j_1}$ ($a_{j_m}$) cannot be removed by bracketed contextual deletion with the given $C=\{\#_s,\#_e\}$. So, $|u_1| \geq 1$ and $|u_{m+1}| \geq 1$.
    
    (3) Now, suppose $|u_i| = 1$ for some $i\in[m]\setminus{\{1\}}$. Then $u_i = \#_s$ or $u_i = \#_e$.
        First, assume $u_i = \#_s$. Then one has 
        $$w' = v_1\ u_{i-1}\ a_{j_{i-1}}\ \#_s\ a_{j_i}\ u_{i+1}\ v_2$$
        for some $v_1,v_2\in(\Sigma\cup\{\#_s,\#_e\})^*$ and $u_{i-1},u_{i+1}\in\{\#_s,\#_e\}^+$. Notice that if $u_{i+1}v_2$ contains some occurrence of $\#_e$, then we could remove $\#_sa_{j_i}v_3$ by bracketed contextual deletion, for some $v_3\in(\Sigma\cup\{\#_s,\#_e\})$, and keep $a_{j_{i-1}}$. However, as $u_i$ does not contain an occurrence of $\#_e$,
        one cannot remove $a_{j_{i-1}}\#_s$ by bracketed contextual deletion to only keep that occurrence of $a_{j_i}$. Also, $a_{j_{i-1}}\#_sa_{j_i}$ does not occur in any word from $R'$ as a factor. So, the only possible scenarios to obtain some word in $W'$ are that the occurrence of $a_{j_{i-1}}$ could be kept after deletion, removing $\#_sa_{j_i}$, or that the whole factor $a_{j_{i-1}}\#_s a_{j_i}$ is removed by bracketed contextual deletion. Hence, if for some $w_i'\in R'$ one has
        $$w' = v_1u_{i-1}\ a_{j_{i-1}}\#_sa_{j_i}\ u_{i+1}v_2 \to_{[C]}^* w_i',$$
        then also
        $$ v_1u_{i-1}\ a_{j_{i-1}}\#_s\ u_{i+1}v_2 \to_{[C]}^* w_i',$$
        i.e., we can just delete that occurrence of $a_{j_i}$, as it is never used in any obtained $w_i'$. This effectively merges $u_i$ with $u_{i+1}$, resulting in a new factor $u_k\in\{\#_s,\#_e\}^+$ of at least length $2$.
        The same can be obtained by a symmetric argument if $u_i = \#_e$, just in the other direction, i.e., if
        $$w' = v_1u_{i-1}\ a_{j_{i-1}}\#_e a_{j_i}\ u_{i+1}v_2 \to_{[C]}^* w_i',$$
        then also
        $$ v_1u_{i-1}\ \#_ea_{j_i}\ u_{i+1}v_2 \to_{[C]}^* w_i'.$$
        Most importantly, see that the size of the alternative template is smaller than the one of $w'$.
        
        (4) By the arguments in (1), (2), and (3) it can be concluded that if there exists a word $w'\in(\Sigma\cup\{\#_s,\#_e\})$ with $|w'|\leq 3k$ and $R'\subseteq [w']_{\hpdubi_C}$, then there also exists a word $w''\in(\Sigma\cup\{\#_s,\#_e\})$ with $|w''| \leq |w'| \leq 3k$, i.e., of smaller or equal size to $w'$, for which we also have $R'\subseteq [w'']_{\to_{[C]}^*}$ and which has the structure
        $$w'' = u_1\ a_{k_1}\ u_2\ a_{k_2}\ u_3\ ...\ u_{m'}\ a_{k_{m'}}\ u_{m'+1}$$
        for $a_{k_1},...,a_{k_{m'}}\in\Sigma$, $u_1,u_{m'+1}\in\{\#_s,\#_e\}^+$, and $u_2,...,u_{m'}\in\{\#_s,\#_e\}^{\geq 2}$. 
        
        Using $w''$ of (4), the existence of a supersequence $w\in\Sigma^*$ of $W$ of at most length $k$ is now shown. Let $w = a_{k_1}a_{k_2}...a_{k_{m'}}$. Then $|w| \leq k$ as $|w''| \leq 3k$ and $|u_1u_2...u_{m'+1}| \geq 2|w|$ by the previous assumptions on the structure of $w''$. Consider an arbitrary word $w_i\in W$. By (4), one knows that $w'' \to_{[C]}^* w_i'$. So, $w_i'$ is a subsequence of $w''$. As $w_i$ is a subsequence of $w_i'$ by construction, $w_i$ is also a subsequence of $w''$. By that, as $w$ is the sequence of all letters in $w''$ which are from $\Sigma$ and $w_i$ only contains letters from $\Sigma$, it follows that $w_i$ has to appear in $w$ as a subsequence. Hence, it can be seen that $w$ is a supersequence of $W$ with $|w| \leq k$. This concludes both directions of the reduction. 
\end{proof}

This concludes the results regarding the template constructibility problem for bracketed contextual deletion. Notice that the general variant of the problem is NP-hard, but in restricted cases, efficient ways to construct such a template may still be obtained. Identifying those and providing specific constructions is to be done in future research. In the following final part of this section, we extend the view of the problems of this subsection to contextual lariat deletion.

\subsection{On the contextual lariat deletion}

We now consider the template verification and constructability with respect to contextual lariat deletion. Actually, this time, we consider parallel deletion instead of iterated deletion, as it is more suitable for this more realistic model. We do not explicitly consider the exact variant of the Template Constructability Problem as the observed restrictions for valid instances in the bracketed contextual deletion model also follow in a similar manner for contextual lariat deletion, resulting in a very restrictive setting here as well.

Let us begin with verification. 
As it turns out, extending the idea from Algorithm~\ref{algo:inverse} and Proposition \ref{prop:problem-1-P}, it can be shown that templates can be verified efficiently even with respect to contextual lariat deletion under all energy models.

\vspace{2mm}
\begin{proposition}\label{prop:hpa-verification-in-p}
    Template verification with respect to parallel as well as 1-step contextual lariat deletion is in $P$ under all three energy models. Moreover, an NFA accepting $[w]_{\hpdbp_{[C]}}$ (resp. $[w]_{\hpdbo_{[C]}}$) can be constructed in time polynomial in $|w|$ and $|C|$ for some given word $w\in\Sigma^*$, a set of contexts $C$, and energy parameters.
\end{proposition}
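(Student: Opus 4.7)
The plan is to adapt Algorithm~\ref{algo:inverse} to the contextual lariat deletion setting. Starting with the linear DFA $A = (\{q_0, \dots, q_m\}, \Sigma, q_0, \delta, \{q_m\})$ for $\{w\}$, where $m=|w|$, I would, for each pair $(q_i, q_j)$ with $i<j$ and each context $(\alpha, \beta) \in C$, test whether the factor $w[i+1..j]$ admits a factorization $\alpha x \ell \theta(\alpha x) z \beta$ with $|z|\le g$ and with $\alpha x \ell \theta(\alpha x)$ stable under the prescribed energy model; whenever such a factorization exists, an $\varepsilon$-transition from $q_i$ to $q_j$ is added. Unlike in Algorithm~\ref{algo:inverse}, no outer iteration is required, since parallel deletion selects non-overlapping factors from the original word $w$ and never composes deletions with themselves.

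For a fixed pair $(q_i, q_j)$ and context $(\alpha, \beta)$, the feasibility check enumerates the admissible values of $|z| \in [0, \min(g, |w|)]$ and, for each, the admissible stem-extension lengths $|x|$; for each choice, the candidate $\alpha x = w[i+1..i+|\alpha|+|x|]$ is determined, the proposed right-arm occupies $w[j-|\beta|-|z|-|\alpha|-|x|+1..j-|\beta|-|z|]$, and comparing it to $\theta(\alpha x)$ takes $\bigo(|w|)$ time after a single $\bigo(|w|)$-time precomputation of $\theta$ applied to $w$. The integer $|\ell|$ is then determined and the stability predicate evaluates in $\bigo(1)$ in each model: $|\ell| \le c$ (constantly bounded), $d_1|\ell| - d_2|\alpha x| \le d$ (linear), and $d_1\log|\ell| - d_2|\alpha x| \le d$ (logarithmic). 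Iterating over all $\bigo(|w|^2)$ state pairs, all $|C|$ contexts, and the parameter ranges above yields a polynomial-time construction of the augmented automaton.

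For the parallel NFA, once all valid $\varepsilon$-transitions have been added in a single pass, I would apply the standard $\varepsilon$-closure procedure to obtain an $\varepsilon$-free NFA. Because the underlying DFA is a chain and every added $\varepsilon$-transition points strictly forward, each accepting run uses a sequence $(q_{i_1}, q_{j_1}), \dots, (q_{i_s}, q_{j_s})$ with $j_k \le i_{k+1}$, matching precisely a non-overlapping parallel application of $s$ single lariat deletions to $w$. For $[w]_{\hpdbo_{[C]}}$, it suffices either to enumerate the polynomially many single-deletion outcomes $w[1..i]\cdot w[j+1..m]$ explicitly and take a union of their minimal DFAs, or to duplicate the state space of $A$ into pre- and post-deletion copies and route each candidate $\varepsilon$-transition across the two copies, ensuring exactly one deletion per accepted word.

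The principal obstacle is justifying that the one-pass parallel construction yields exactly $[w]_{\hpdbp_{[C]}}$ and not a larger iterated closure. The forward direction is immediate, since every non-overlapping parallel derivation corresponds to a selection of added $\varepsilon$-transitions; for the converse, I would observe that every $\varepsilon$-transition is defined from $w$ itself, so any ``second-order'' deletion that would become valid only after a prior splicing cannot appear in the NFA, exactly matching the recursive definition of $\hpdbp_{[C]}$. With correctness established, template verification proceeds by standard NFA membership testing on each $w_i \in R$ in $\bigo(|w|^2 \cdot |w_i|)$ time, delivering the claimed polynomial-time procedure under all three energy models.
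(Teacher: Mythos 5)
Your proposal is correct and follows essentially the same route as the paper: augment the chain automaton for $\{w\}$ with $\varepsilon$-transitions for every state pair and context admitting a stable-hairpin factorization (checked in polynomial time per pair by enumerating gap and stem lengths, where the paper instead computes the longest admissible stem directly), perform no outer iteration so that only deletions valid in the original $w$ are represented, and use a two-copy automaton to restrict to a single deletion for the 1-step variant. The membership-testing step for verification and the overall complexity accounting likewise match the paper's argument.
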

\begin{proof}
    Start with the same NFA $A = (Q,\Sigma,q_0,\delta,F)$ with $\{q_1,...,q_{|w|}\}$ and $F = \{q_{|w|}\}$ as in Proposition \ref{prop:problem-1-P}. Hence, $A$ is representing a line of states that may only read $w$. Similar to the proof given in Proposition~\ref{prop:problem-1-P}, for each pair $(q_i,q_j)\in Q$ for $i < j$ and for each splicing context $(\alpha,\beta)\in C$, we check whether we have $w[i..j] = \alpha x \ell \theta(x\alpha) z \beta$ for some $x,\ell,z\in\Sigma^*$ and $(\alpha,\beta)\in C$, i.e., whether $q\in(p,\alpha x \ell \theta(x\alpha) z \beta)$, and we check whether $\alpha x \ell \theta(x\alpha)\in H$ is a hairpin (possibly for some energy function $lin$, $con$, or $log$ and corresponding given parameters). If this is the case, add an $\varepsilon$-transition between $q_i$ and $q_j$, hence allowing for the word $w[1..i-1]w[j+1..|w|]$ to be read. Do this for all state-pairs and consider only transitions from the original automaton $A$. This way, we obtain an NFA $B$ that reads $[w]_{\hpdbp_{[C]}}$. To obtain $[w]_{\hpdbo_{[C]}}$, we create an unmodified copy of $A$ and move from $B$ to $A$ once we apply bounded hairpin-deletion. That way, bounded hairpin-deletion can only be applied once.

    To show containment of Problem \ref{problem:verification} for contextual lariat deletion in $P$, we give a rough polynomial upper bound on the time complexity. First of all, to construct the resulting automata with the above described procedure, we need to check whether the factor $w[i..j]$ is a hairpin regarding the considered energy function and parameters for each $i,j\in[|w|]$ with $i < j$. Assume the time to check hairpin-property is $T_H$. Then, this procedure takes $\bigo(w^2T_H)$. 
    
    Notice that $T_H$ must have a polynomial bound regarding the length of the checked factor for all three energy models considered here. First, for each splicing context $(\alpha,\beta)\in C$, we check whether the considered factor $w[i..j]$ has $\alpha$ as its prefix and $\beta$ as its suffix. Then, for each possible gap-length $|z|$ below the given gap-length bound $\delta$, we check whether the prefix $w[i..j-|z|-|\beta|]$ forms a hairpin in general. We can do that by computing the length of the longest possible stem and the remaining loop of $w[i..j-|z|-|\beta|]$ in linear time in $|w[i..j-|z|-|\beta|]|$. Finally, we compute whether the difference of these lengths results in a stable hairpin under the considered energy model. In total, this procedure takes polynomial time in $|w[i..j]|$.
    
    For the adaptations needed to construct the automaton for $[w]_{\hpdbo}$, we notice that the copy of $A$ takes, depending on implementation, $|A| = \bigo(|w|)$ time. Otherwise, the complexity is the same, resulting in $\bigo(|w| + |w|^2T_H) = \bigo(|w|^2T_H)$ time. 
    
    Now, after constructing the automaton, we just have to check in time linear in $|R|$ whether each word can be accepted. If that is the case, we return yes, otherwise return no. On total we obtain the time complexity $\bigo(|w|^2\cdot T_H + |R|)$.
\end{proof}

This concludes that templates can be verified efficiently with respect to contextual lariat deletion. 
Now, we show that template construction is rather computationally hard using a very similar construction to the one given for Theorem~\ref{thm:main-prob-decision-hard}.

\vspace{2mm}
\begin{proposition}\label{prop:parallelconstructability}
    Template constructability with respect to the parallel contextual lariat deletion under any of the three energy models is NP-complete with $|\Sigma| \geq 4$.
\end{proposition}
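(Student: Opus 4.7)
The plan is to mimic the reduction of Theorem~\ref{thm:main-prob-decision-hard} from the shortest common supersequence problem (SCSe), replacing the inert separator pair $(\#_s,\#_e)$ with a block whose removal is a genuine parallel contextual lariat deletion step valid in all three energy models. Containment in NP is immediate: a polynomial certificate consists of the template $t$ together with, for each $w_i\in R$, a list of non-overlapping lariat factorizations whose simultaneous removal produces $w_i$; each such list is verified in polynomial time by the algorithmic analysis underlying Proposition~\ref{prop:hpa-verification-in-p}.

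For hardness, I would reduce from SCSe over the binary alphabet $\Sigma_W=\{a_1,a_2\}$. Set $\Sigma=\Sigma_W\cup\{c,d\}$ with $\theta(c)=d$, and given an SCSe instance $(W,k)$ together with the energy parameters, choose the smallest $n\in\mathbb{N}$ such that the length-$1$-loop hairpin $c^{n}\ell d^{n}$ is stable in the chosen model. This $n$ is polynomial in the input: any $n\ge 1$ works for the constant-loop and logarithmic models, and $n=\lceil(d_1-d)/d_2\rceil$ suffices for the linear model. Encode each $w_i\in W$ as
\[
    w_i' \;=\; \prod_{j=1}^{|w_i|} \bigl(c^{n}\,w_i[j]\,d^{\,n+1}\bigr),
\]
set $R'=\{w_i'\mid w_i\in W\}$, $C=\{(c,d)\}$, and $k'=k(2n+2)$. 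For the forward direction, a common supersequence $w$ of $W$ with $|w|\le k$ yields the template $t=\prod_{j=1}^{|w|}(c^{n}w[j]d^{\,n+1})$ of length at most $k'$; for each $w_i$ with embedding $w_i=w[\ell_1]\cdots w[\ell_{|w_i|}]$ in $w$, the blocks of $t$ outside the embedding are pairwise non-overlapping and each factors as a valid lariat (with $\alpha=c$, $x=c^{n-1}$, loop $w[j]$, $\theta(\alpha x)=d^{n}$, $z=\varepsilon$, $\beta=d$, whose hairpin is stable by choice of $n$), so their simultaneous removal produces $w_i'$ in a single parallel step.

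For the converse, I would run a normalization argument analogous to steps (1)--(4) in the proof of Theorem~\ref{thm:main-prob-decision-hard}. Given a solution $t$ with $|t|\le k'$, a sequence of local modifications shows that, without increasing $|t|$ and while preserving reachability of every $w_i'$, one may assume that $t$ has the block form $\prod_{j=1}^{m}(c^{n}a_{j}d^{n+1})$ with $a_j\in\Sigma_W$: any $\Sigma_W$-letter in $t$ not retained in any $w_i'$, together with the $c$'s and $d$'s that serve only to splice it out, can be excised; and the hairpin stability condition, tight at stem length $n$ for a loop of length $1$, forbids adversarial lariats from packing several $\Sigma_W$-letters into one loop or from sharing $c^{n}/d^{n+1}$ wrappers between two distinct retained $\Sigma_W$-letters. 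Setting $w=a_{1}a_{2}\cdots a_{m}\in\Sigma_W^{*}$, each $w_i$ is a subsequence of $w$, since the $\Sigma_W$-letters retained when producing $w_i'$ spell out $w_i$; and $|t|=m(2n+2)\le k'$ gives $|w|\le k$, so $w$ is the desired common supersequence.

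The main obstacle is the normalization in the backward direction, specifically the no-sharing claim: two distinct retained $\Sigma_W$-letters in $t$ cannot pool parts of their $c^{n}/d^{n+1}$ wrappers. The key lemma needed is that, under the chosen parameters (with $n$ tight for loop-length-$1$ stability), any valid lariat factorization whose loop contains a $\Sigma_W$-letter retained in some $w_i'$ must have its $c$-stem and its $d$-tail lying immediately adjacent to that letter in $t$; this rigidity follows from $c\neq d$ and from the failure of shorter stems to meet the stability threshold in each of the three energy models. Once this lemma is in place, the length bound is a counting argument over $\Sigma_W$-letters with their mandatory wrappers, and the reduction goes through uniformly across all three energy models, each with its own polynomial-size $n$.
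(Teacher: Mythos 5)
Your proposal is correct and follows essentially the same route as the paper: both reduce from SCSe by wrapping each letter of the supersequence in a block of two fresh, $\theta$-complementary letters, use the single context formed by those letters with gap bound $0$, append one extra right-context letter beyond the stem so that deletions must align with block boundaries, and defer the backward direction to a normalization in the style of Theorem~\ref{thm:main-prob-decision-hard}. The only difference is that you keep the stem length $n$ as a function of given energy parameters, whereas the paper simply fixes the parameters (constant loop bound $1$; $d_1=d_2=1$, $d=0$) so that blocks of the form $\#_s\,a\,\#_e\#_e$ (your $n=1$ case) suffice, which also sidesteps the degenerate parameter regimes your general $n$ would have to exclude.
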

\begin{proof}
    NP containment of the problem follows similarly to the use of bracketed contextual deletion. 
    Given some solution $w$, checking whether $R \subseteq[w]_{\hpdb}$ can be done in $P$ by Proposition \ref{prop:hpa-verification-in-p}. 
    Hence, verification of a solution can be decided in $P$.

    One can reduce in polynomial time the NP-hard problem SCSe over some alphabet $|\Sigma'|$ to the Template Constructability Problem for parallel contextual lariat deletion over some alphabet $|\Sigma|$ of size $|\Sigma'|+2$.
    Due to its similarity to the proof provided in Theorem~\ref{thm:main-prob-decision-hard}, we focus on the differences and refer to the first proof for parts that follow from the same logic. 
    Let $(W,k)$ be an instance of SCSe such that $W\subset\Sigma^*$ is a finite set of words over some alphabet $\Sigma$ with $|\Sigma|\geq 2$. 
    We encode that instance to an instance of the template constructability, which consists of a finite language $R'$, a length bound $k'$, and parameters for an energy model to be considered. 
    Let $\#_s$ and $\#_e$ be letters not in $\Sigma'$, and let us define an antimorphic involution $\theta$ as  $\theta(\#_s) = \#_e$, $\theta(\#_e) = \#_s$, and $\theta(a) = a$ for all $a\in\Sigma'$. 
    The context is, again, $C = \{(\#_s,\#_e)\}$ and the gap bound is set to 0, forcing the right context to follow immediately after the hairpin. 
    We encode the words in $W$ into the words in $R'$ almost exactly in the same manner as done in Theorem~\ref{thm:main-prob-decision-hard}. 
    The only difference is that for each written $\#_e$, we now write $\#_e^2$ instead. 
    This is due to the need for the stem ending before the right context. 
    
    Model dependent energy parameters are set as follows: 
    for constantly-bounded loop length model, the loop length is bounded by $c = 1$; while for the other two models, we set $d_1 = d_2 = 1$ and $d = 0$. 
    Notice that, due to the definition of $C$, any deleted factor must start with $\#_s$ and end with $\#_e$. So a factor $\#_s\ta\#_e\#_e$ may always be deleted for some $\ta\in\Sigma'$, with $\#_s\ta\#_e$ forming the hairpin. For each longer factor $\#_s\ta_1\#_e\#_e...\#_s\ta_2\#_e\#_e$, for some $\ta_1,\ta_2\in\Sigma'$, also notice that the choice of the right context must always be the occurrence of $\#_e$ immediately after another $\#_e$, as otherwise no stem can be formed. So, a longer factor $\#_s\ta_1\#_e\#_e...\#_s\ta_2\#_e\#_e$ can only be deleted as a whole. By that, the reduction follows, using this construction, analogously to the proof of Theorem~\ref{thm:main-prob-decision-hard}.
\end{proof}

In the next section, we shift our attention to the general computational power of the introduced models for co-transcriptional splicing. 
By examining the language closure properties associated with these operations, we aim to deepen our understanding of their theoretical capabilities and potential occurrences in natural systems. This exploration provides valuable insights into the broader implications of the model.

    \section{Computational power of co-transcriptional splicing}
    \label{sect:comp-pow-hpd}
Having settled the decision-problem landscape, this section investigates what kinds of language can be generated or recognized in the proposed models of co-transcriptional splicing. 
As a brief summary, this section progresses from basic closure results to non-closure and even undecidability for rather simple language classes, showcasing the expressive power of the contextual deletion operations proposed under all considered energy models. We start with the most permissive model (4.1). The class of regular languages is shown to be closed under iterated, even greedy, unbounded deletion, but already starting from a linear language one can obtain an undecidable language.
Section 4.2 then turns to the three bounded energy models. While the regular class stays closed under one-step and parallel deletion, both linear and context-free language classes lose closure. Already in the linear energy setting we obtain undecidability via a reduction from the Post-Correspondence Problem. The separate subsections of 4.2 compare the linear, constant, and logarithmic loop bounds, demonstrate further closure and non-closure properties and end with open questions that mark the frontier for future work.

\subsection{Language Properties of Bracketed Contextual Deletion}

First, we show that the class of regular languages is closed under both iterated bracketed contextual deletion and its right-greedy variant.

\vspace{2mm}
\begin{proposition}\label{prop:regclosure}
For a regular language $L$ and a finite set of contexts $C$, both languages $[L]_{\to_{[C]}^*}$ and $[L]_{\to_{[C]]}^*}$ are regular.
\end{proposition}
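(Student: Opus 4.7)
The plan is to generalize Algorithm~\ref{algo:inverse} from a single word to an arbitrary NFA recognizing $L$. Starting from $A = (Q, \Sigma, q_0, \delta, F)$ with $L(A) = L$, I would iteratively augment $A$ with $\varepsilon$-transitions as follows: for each pair $(p,q) \in Q \times Q$ and each context $(\alpha,\beta) \in C$, test whether any word in $\alpha \Sigma^* \beta$ is read from $p$ to $q$ by the current automaton. Since $\alpha\Sigma^*\beta$ is regular, this reduces to an emptiness check on the product of $A$ (with $p$ as sole start state and $q$ as sole accept state) against a small DFA for $\alpha\Sigma^*\beta$. Whenever the product is nonempty, add an $\varepsilon$-transition from $p$ to $q$. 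After each full pass over $Q\times Q \times C$, eliminate the newly added $\varepsilon$-transitions via the standard $\varepsilon$-closure construction, and repeat until a pass introduces no new transitions.

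Termination is immediate: the state set $Q$ is fixed throughout, so at most $|Q|^2$ new $\varepsilon$-transitions can ever be inserted, bounding the number of outer iterations. Correctness proceeds by a two-sided induction on derivation length. For soundness, every accepting path in the final NFA is decomposed at its added $\varepsilon$-transitions; an $\varepsilon$-transition from $p$ to $q$ introduced at round $r$ witnesses a factor $\alpha w \beta$ readable from $p$ to $q$ in the automaton of round $r-1$, which by the inner inductive hypothesis can itself be realized by iterated contextual deletion applied to an originally readable word, so the whole path corresponds to a derivation $w_0 \to_{[C]}^* w$ with $w_0 \in L$. For completeness, given $w_0 \to_{[C]} w_1 \to_{[C]} \cdots \to_{[C]} w_k$ with $w_0 \in L$, an easy induction shows that the deletion executed at step $i$ is exposed as an $\varepsilon$-transition no later than the $i$-th outer iteration, so the final NFA accepts $w_k$.

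For the right-greedy variant $\to_{[C]]}^*$, the only modification is to the emptiness check: we intersect $\alpha\Sigma^*\beta$ with the complement of $\alpha\Sigma^*\beta\Sigma^+$, as in Remark~\ref{remark:right_greedy}, so that $\beta$ occurs exactly once in the deleted factor, at its suffix. This refined language remains regular and the rest of the construction carries over verbatim. The point that warrants care, and which I expect to be the main subtlety, is that right-greediness is nominally a property of the intermediate word at the moment of deletion rather than of a factor read between two fixed states; however, since the condition depends only on where $\beta$ appears within $w\beta$ (i.e., purely inside the deleted segment), it is a local property of the deleted factor itself and hence is faithfully captured by the local regular test. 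With this single modification the construction produces an NFA for $[L]_{\to_{[C]]}^*}$, establishing both regularity claims.
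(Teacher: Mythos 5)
Your proposal is correct and follows essentially the same route as the paper: fix an NFA for $L$, repeatedly add an $\varepsilon$-transition from $p$ to $q$ whenever some word of $\alpha\Sigma^*\beta$ (refined by intersecting with $\overline{\alpha\Sigma^*\beta\Sigma^+}$ in the right-greedy case) is readable from $p$ to $q$, and iterate to a fixed point, with termination guaranteed by the bounded number of possible $\varepsilon$-transitions. Your explicit two-sided induction for correctness and your observation that right-greediness is a local property of the deleted factor are details the paper leaves implicit, but the construction is the same.
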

\begin{proof}
    Let $L$ be given by the NFA $M=(Q,\Sigma, q_1, \delta, F)$. For each pair of states $p,q\in Q$, define the language of words taking $p$ to $q$ as $L_{pq}=\{\ w \mid q\in\hat{\delta}(p,w)\ \}$. For each context $(u,v)\in C$ and for each pair $p,q$ such that $L_{pq}\cap u\Sigma^* v\neq \emptyset$, add an $\varepsilon$-transition from $p$ to $q$. For each $p,q$ pair recompute $L_{pq}$ to reflect any changes introduced by the new transitions. Repeat this until there are no more state pairs that satisfy the condition and are not yet connected by $\varepsilon$-transitions. This procedure must terminate, as the number of $\varepsilon$-transitions we can introduce is finite and because there can only be a finite number of paths between two states $q$ and $p$ where a factor $uxv\in\Sigma^*$, $(u,v)\in C$, occurs.
        
    This construction can be adapted for the right greedy variant by changing $L_{pq}\cap u\Sigma^* v\neq \emptyset$ to $L_{pq}\cap u\Sigma^*v \cap \overline{\Sigma^*v\Sigma^+}\neq \emptyset$.
\end{proof}

A 1-reversal-bounded stack and iterated \textit{right-greedy} contextual deletion can collaboratively encode a Turing machine as demonstrated below. 

\vspace{2mm}
\begin{proposition}\label{prop:linclosure}
    There exist a linear language $L$ and a finite set of contexts $C$ such that $[L]_{\to_{[C]]}^*}$ is undecidable.
\end{proposition}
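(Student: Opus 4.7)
I would prove undecidability by reducing the halting problem for deterministic Turing machines to membership in $[L]_{\to_{[C]]}^*}$, following the hint that a linear language (the class accepted by 1-turn pushdown automata) combined with iterated right-greedy contextual deletion is Turing universal.

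Fix a deterministic Turing machine $M$ and encode its instantaneous configurations as strings over a working alphabet $\Gamma$ in the standard way. I would take $L$ to be a linear language of ``mirror-padded candidate computations'' of the form
\[
    C_0 \# C_1 \# \cdots \# C_n \,\$\, C_n^R \# \cdots \# C_1^R \# C_0^R,
\]
where the $C_i$ range freely over $\Gamma^*$ and $w^R$ denotes reversal. Such words are generated by a linear grammar $S \to a S a$ (for each $a \in \Gamma \cup \{\#\}$) together with $S \to \$$, whose one-turn PDA pushes the left half and pops while matching the right half. Thus $L$ itself merely certifies that the right half mirrors the left, without enforcing any transition consistency among the $C_i$'s; this is the ``1-reversal-bounded stack'' piece of the simulation.

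Next, I would design $C$ so that each step of $\to_{[C]]}$ excises a block straddling the middle marker $\$$ and drops one configuration pair, but only when the two consecutive configurations in question are related by a legal $M$-transition. Because TM transition validity is a purely local property of two adjacent configurations, it can be compiled into short context pairs $(\alpha,\beta)\in C$; right-greediness then forces the innermost eligible pair to be removed first, making the chain of deletions deterministic and simulating $M$'s forward execution. For each input $x$ let $t_x$ be the short residue that remains once every configuration pair has been legitimately peeled off (for instance the encoding of $C_0^{\mathrm{init}}(x)$ together with its reversal). By construction $t_x \in [L]_{\to_{[C]]}^*}$ if and only if all required transitions were applicable, that is, iff $M$ halts on $x$; undecidability of halting then transfers to membership in $[L]_{\to_{[C]]}^*}$.

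The main obstacle will be engineering the context markers, together with static padding inserted into the words of $L$, so that one right-greedy deletion removes exactly the material of a single TM transition step, preserves the mirror-padded shape for subsequent reductions, and admits no alternative factorisation that would correspond to an invalid transition. Standard string-rewriting gadgets --- guard symbols around the head/state position, hierarchical separators, bracket markers that can only close against their own opener --- should suffice, but they must be baked into the left/right-symmetric skeleton produced by the linear grammar, so I would also need to verify that the augmented language is still linear (equivalently, still produced by a one-turn PDA). The secondary technical point is to ensure that the residue left after each right-greedy step is genuinely unique and that no spurious sequence of deletions can manufacture $t_x$ from a computation that $M$ never performs.
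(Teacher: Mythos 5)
Your high-level strategy (a linear language of folded candidate computations, with iterated right-greedy deletion acting as a filter, and a reduction from the halting problem) has the same flavour as the paper's, but you invert the division of labour between the one-turn PDA and the deletion, and the inverted version has a gap at its central step. A bracketed contextual deletion step removes a factor $\alpha w \beta$ with $(\alpha,\beta)\in C$ and $w\in\Sigma^*$ \emph{arbitrary}; the only thing such a step can ``check'' is the presence of the two fixed finite words $\alpha$ and $\beta$ at the ends. Verifying $C_i \vdash_M C_{i+1}$ for two adjacent configurations is not local in the relevant sense: besides the bounded change around the head, one must confirm that the unboundedly long remainder of the tape is copied verbatim, and no single deletion between fixed finite contexts can perform that unbounded comparison. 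If you instead try to do the comparison by \emph{iterating} letter-by-letter deletions at the junction between $C_i$ and $C_{i+1}$, two further problems arise: (i) such peeling compares $C_i$ against $C_{i+1}$ in reversed orientation and \emph{consumes both}, so $C_{i+1}$ is no longer available to be compared with $C_{i+2}$; and (ii) right-greediness does not ``force the innermost eligible pair to be removed first'' --- it only pins down the right context once a left context has been chosen, so the process is not deterministic in the way your argument needs. Your language $L$ does contain each configuration twice, but the two copies sit mirror-symmetrically about $\$$, far apart, which is not a position from which iterated local deletions can exploit the duplication.

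The paper resolves exactly this tension by flipping the roles. The one-turn PDA checks the \emph{transitions}: in the folded word $u_0 \$ v_2^R h_{\#}(u_2) \$ \cdots \$ u_1 h_{\#}(v_1^R)$ the factors $u_j$ and $v_{j+1}$ occupy matching push/pop positions, so $u_j \vdash_M v_{j+1}$ is verified by the stack. The deletion then checks only \emph{equality} $u_j = v_j$ of the two copies of each configuration, which it can do because $v_j^R$ and a marked interleaved copy $h_{\#}(u_j)$ are placed adjacently, and the contexts $(0\#_0,0')$ and $(1\#_1,1')$ compare exactly one letter per right-greedy deletion step at the junction. If you want to keep transition-checking on the deletion side, you need something like the paper's far heavier 1-counter construction (its next proposition), where every configuration is duplicated and interleaved with markers specifically so that the successor relation can be peeled off locally; your mirror language does not have that structure, so the reduction as proposed does not go through.
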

\begin{proof}
Let $M$ be a Turing Machine, $Q$ be its (finite) set of states, and $Id_M$ be a set of its instantaneous descriptions (ID) of the form $\alpha(q, i)\beta$ for a current state $q \in Q$, the content $i \in \{0, 1\}$ of the cell where the input head is, and two binary words $\alpha, \beta \in \{0, 1\}^*$ that represent the non-blank portion of the input tape to the left and to the right of the cell, respectively. Using a binary alphabet $\{0', 1'\}$, disjoint from $\{0, 1\}$, and a homomorphism $h_{\#}: \{0, 1\}^* \to (\{0', 1'\} \cup \{\#_0, \#_1\})^*$ defined as $h_{\#}(0) = \#_0 0'$ and $h_{\#}(1) = \#_1 1'$, let us define the following language: 
    \begin{align*}
        L =&\ \{\ u_0 \$ v_2^R h_{\#}(u_2) \$\ \cdots\ v_{2k}^R \$ u_{2k-1} h_{\#}(v_{2k-1}^R)\$\ \cdots\ \$ u_3 h_{\#}(v_3^R) \$ u_1 h_{\#}(v_1^R) \mid \\
           &\ \ k \ge 0, u_0, u_1, \ldots, u_{2k-1}, v_1, \ldots, v_{2k} \in Id_M, \text{$u_0$ is an initial ID}, \\
           &\ \ \text{$u_j \vdash_M v_{j+1}$ for all $0 \le j < 2k$}, \text{and $M$ halts in $v_{2k}$}\}. 
    \end{align*}   
    This language is linear as it can be accepted by a one-turn deterministic PDA (see Figure~\ref{fig:lin-to-undec-bracketed} for an illustration). Indeed, the PDA pushes the input as it is up to the factor $\$v_{2k}^R\$$, which is distinguished from other factors sandwiched by $\$$'s by being free from $\#_0$ and $\#_1$, and then matching it with the remaining input in order to make sure that $v_{j+1}$ is a successor of $u_j$ according to the transition function of $M$ for all $0 \le j < k$. It also verifies that $u_0$ is an initial ID and $M$ halts in $v_{2k}$; these checks require no stack. 
    
    An element 
    \[
        w = u_0 \$ v_2^R h_{\#}(u_2) \$ \cdots \$ v_{2k}^R \$ \cdots \$ u_1 h_{\#}(v_1^R)
    \]
    of $L$ represents a valid halting computation $u_0 \vdash_M u_1 \vdash_M \cdots \vdash_M u_{2k-1} \vdash_M v_{2k}$ by $M$ only if $v_j = u_j$ for all $1 \le j < 2k$. $L$ contains also invalid computations in this sense. 
    Greedy deletion based on the contexts $(0\#_0, 0')$ and $(1\#_1, 1')$ iteratively filters out all these invalid ones. 
    See the factor $v_2^R h_{\#}(u_2)$. Let $u_2 = a_1 a_2 \cdots a_m$ and $v_2 = b_1 b_2 \cdots b_n$ for some $a_1, \ldots, a_m, b_1, \ldots b_n \in \{0, 1\}$. Then this factor is $b_n \cdots b_2 \underline{b_1 \#_{a_1}} \ \underline{a_1'} \#_{a_2} a_2' \cdots \#_{a_m} a_m'$, where the underlines indicate the sole factors to which the greedy bracketed contextual deletion can be applied, but it can be actually applied if and only if $b_1 = a_1$, resulting in $b_n \cdots b_3 \underline{b_2 \#_{a_2}} \ \underline{a_2'} \#_{a_3} a_3' \cdots \#_{a_m} a_m'$ (see Figure~\ref{fig:lin-to-undec-bracketed-equality-condition} for an illustration of this property). 
    A word in $u_0 \$^*$ is thus obtainable from $w$ if and only if $u_j = v_j$ for all $1 \le j < 2k$. 
    This means that $[L]_{\to_{[C]]}^*} \cap \Sigma^* \$^*$ is undecidable. 
    Its intersection with a regular language being undecidable, $[L]_{\to_{[C]]}^*}$ must be also undecidable. 
\end{proof}

\begin{figure}[h]
    \centering
    \includegraphics[scale=0.4]{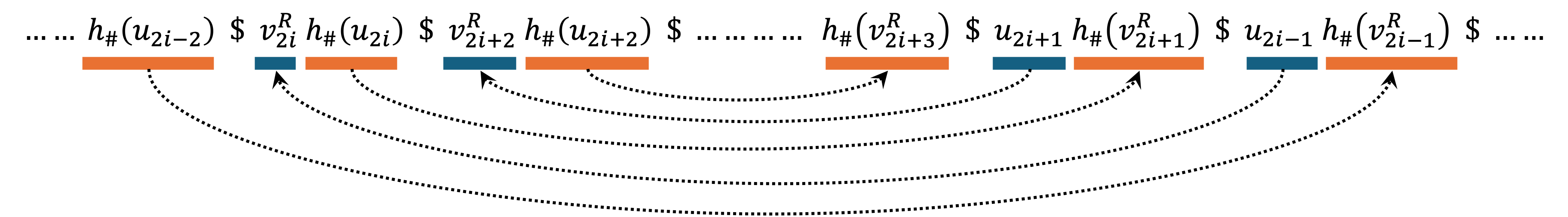}
    \caption{An illustration of the general construction used in the proof of Proposition~\ref{prop:linclosure}. Same colored blocks conncted by an arrow indicate two configurations $u_i$ and $v_{i+1}$ for which $u_i\vdash_M v_{i+1}$. Notice that we can guess some configurations on the left side and generate the corresponding right side using a 1-turn PDA.}
    \label{fig:lin-to-undec-bracketed}
\end{figure}

\begin{figure}[h]
    \centering
    \includegraphics[scale=0.60]{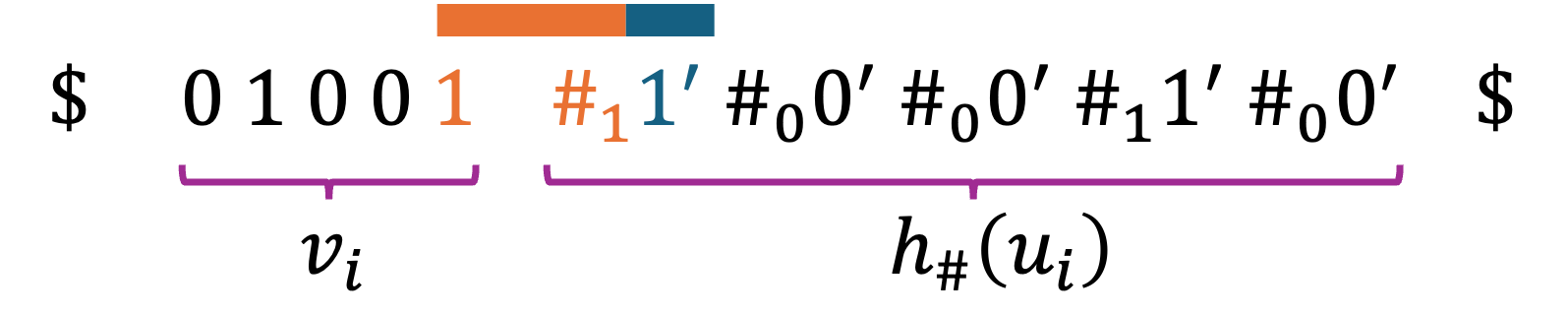}
    \caption{An illustration of the fact that, in Proposition~\ref{prop:linclosure}, we can only remove everything between $\$$ symbols using bracketed contextual deletion if and only if $v_i = u_i$. The left context is marked by the orange bar while the right context is marked with the blue bar. Notice that left contexts can only form if $v_i$ and $h_{\#}(u_i)$ are both not empty. By the definition of $h_{\#}$, if a left context is found, then by the definition of right-greediness, directly after that the corresponding right context will occur in the same block surrounded by $\$$ signs.}
    \label{fig:lin-to-undec-bracketed-equality-condition}
\end{figure}

Right-greediness is critical in this proof for filtering; it remains open how computationally powerful $[L]_{\to_{[C]}^*}$ (not right-greedy) can be for a linear language $L$. 

To further investigate the tight bounds on computational capabilities of the right-greedy variant, let us consider the class of \textit{1-counter} languages, that is, those recognized by a pushdown automaton whose stack is implemented on a unary alphabet $\{a\}$ (the bottom is indicated by a special letter $\perp$). 
This class turns out not to be closed, either. 
More strongly, we will show that an undecidable language can be obtained from some 1-counter language and context-set $C$. 

\vspace{2mm}
\begin{proposition}\label{prop:onecounternotclosed}
    The class of 1-counter languages $L_{1C}$ is not closed under iterated right-greedy bracketed contextual deletion.
\end{proposition}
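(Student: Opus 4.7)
The plan is to mirror the high-level structure of Proposition~\ref{prop:linclosure}: construct a 1-counter language $L$ and a context set $C$ so that $[L]_{\to_{[C]]}^*}$ encodes halting computations of a universal Turing machine $M$, hence is undecidable; since every 1-counter language has decidable membership (even its intersection with a regular language is decidable), this immediately yields non-closure. I would reuse the auxiliary alphabet $\{0',1',\#_0,\#_1\}$ and the homomorphism $h_\#(0)=\#_0 0'$, $h_\#(1)=\#_1 1'$, together with the contexts $(0\#_0,0')$ and $(1\#_1,1')$, exactly as in the linear proof. These are what turn right-greedy bracketed contextual deletion into a symbol-by-symbol equality check: as illustrated in Figure~\ref{fig:lin-to-undec-bracketed-equality-condition}, a factor $a\,\#_a\,a'$ can be greedily peeled off only when the preceding symbol in a predicted ID matches the first symbol of the actual ID.

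The substantive design problem is to replace the linear generating PDA of Proposition~\ref{prop:linclosure}, which matched $u_j$ against $v_j$ symbol-by-symbol via its stack, with a device that uses only one unary counter. My plan is to arrange the candidate computation as a straightforward left-to-right list $u_0\,\$\,h_\#(u_0')\,\$\,u_1\,\$\,h_\#(u_1')\,\$\cdots\,\$\,u_k\,\$\,h_\#(u_k')$ and let the 1-counter enforce only a single global arithmetic invariant, such as: the total length of the unprimed blocks equals the total length of the primed blocks, or the number of $\$$ separators matches a distinguished counting marker. All finer consistency—between $u_j$ and $u_j'$, and between $u_{j+1}$ and an $M$-successor of $u_j$—is then off-loaded to right-greedy contextual deletion, whose dynamics force equality exactly as in the linear case. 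Because transitions of $M$ only affect a constant-size window around the head, a finite family of contexts can be set up to account for the allowed local rewrites between $u_j'$ and $u_{j+1}$, while the rest of the tape is forced to agree through the $h_\#$-peeling mechanism.

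I would then argue, following the greedy-peeling reasoning used after Figure~\ref{fig:lin-to-undec-bracketed-equality-condition}, that a designated sentinel word (for example, a power of $\$$ preceded by an encoded initial ID) belongs to $[L]_{\to_{[C]]}^*}$ if and only if $M$ admits a halting computation on its fixed input. Intersecting $[L]_{\to_{[C]]}^*}$ with a regular language that isolates the sentinel yields undecidability of $[L]_{\to_{[C]]}^*}$ itself, exactly as in the final step of Proposition~\ref{prop:linclosure}. The hard part will be precisely the loss of string-matching generative power: the coarse 1-counter invariant must be chosen strong enough that no shortcut deletion can silently skip past a mismatched block, yet weak enough that a single unary counter still suffices to recognise $L$. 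I expect the delicate technical choice to be the layout and per-block padding of the primed copies, so that right-greedy deletion has a unique order of attack consistent only with genuine $M$-computations and the counter invariant cannot be satisfied by any malformed candidate.
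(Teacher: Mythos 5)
Your overall strategy (prove that $[L]_{\to_{[C]]}^*}$ is undecidable for some 1-counter $L$, then conclude non-closure because 1-counter languages are decidable) is sound in principle, and indeed the paper proves exactly such an undecidability statement --- but as a \emph{separate, subsequent} proposition with a considerably heavier construction. For the proposition as stated, the paper takes a much more elementary route that you miss entirely: take $L = \{\ta^{k_0}\#\tb^{2k_0}(\$\ta)^{k_1}\#\tb^{2k_1}\cdots(\$\ta)^{k_n}\#\tb^{2k_n}\}$, which a unary-stack PDA accepts by checking each local constraint ``$2k_i$ $\tb$'s after $k_i$ $\ta$'s'' independently, and the single context $(\tb\$,\ta)$, under which $\#\tb^{2k_i}(\$\ta)^{k_{i+1}}\#$ reduces to $\#\#$ iff $2k_i=k_{i+1}$. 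Intersecting $[L]_{\to_{[C]]}^*}$ with $\ta\#^*\tb^*$ yields $\{\ta\#^n\tb^{2^n}\mid n\ge 1\}$, which is not semilinear, hence not context-free, hence not 1-counter; closure of 1-counter languages under intersection with regular languages finishes the argument. No Turing machine encoding is needed.

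The deeper problem is that your sketch has a genuine gap at exactly the point you flag as ``the hard part,'' and the specific mechanism you propose would not close it. The $h_{\#}$-peeling check of Proposition~\ref{prop:linclosure} compares two configurations only because one of them appears \emph{reversed} and juxtaposed with the other, so that corresponding symbols meet at the junction after each greedy deletion; in your left-to-right layout $u_j\,\$\,h_{\#}(u_j')\,\$\,u_{j+1}\cdots$, the junction pairs the \emph{last} symbol of one block with the \emph{first} symbol of the next, which checks $u^R=u'$ rather than $u=u'$, and chaining such checks along the word requires the generating device to produce reversed copies --- i.e., palindromic correlations over a binary alphabet --- which a single unary counter cannot do. Your proposed substitute, a ``single global arithmetic invariant'' such as equality of total block lengths, is far too weak: two blocks of equal length need not encode related configurations, and no finite set of local deletion contexts applied to an unreversed layout can recover the missing adjacency. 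The paper's actual 1-counter undecidability proof resolves this with an additional idea absent from your proposal: each binary configuration is accompanied by unary ``scaffolding'' $(\$\tb)^{k}\cdots\ta^{2k+v}$ so that iterated deletion performs a binary-to-unary doubling conversion (the same trick as in the $\ta^k\#\tb^{2k}$ language above), reducing the reversal/equality test between two binary strings to a comparison of unary lengths that the contexts $(\ta\$,\tb)$ and $(\tc\$,\td)$ can carry out, with separate finite context families handling the constant-width head window. Without something of this kind, your construction does not go through.
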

\begin{proof}
    Consider the language
    $$ L = \{\ \ta^{k_0}\#\tb^{2k_0}\ (\$\ta)^{k_1}\#\tb^{2k_1}\ (\$\ta)^{k_2}\#\tb^{2k_2}\ ...\ (\$\ta)^{k_n}\#\tb^{2k_n}\ |\ n, k_0,\ ...\ , k_n\in\N\ \} $$
    over the alphabet $\Sigma = \{\ta,\tb,\#,\$\}$. This language can be recognized using a PDA with a unary stack. Let $C = \{(\tb\$,\ta)\}$. Then, for each factor $\#\tb^{2k_i}(\$\ta)^{k_{i+1}}\#$ we have 
    \[
        \#\tb^{2k_i}(\$\ta)^{k_{i+1}}\# \to_{[C]]}^* \#\#
    \]
    if and only if $2k_i = k_{i+1}$. 
    Hence, taking the intersection of $[L]_{\to_{[C]]}^*}$ with the regular language $\ta\#^*\tb^*$ results in the non-context-free language $\{\ \ta\#^n\tb^{2^n}\ |\ n \ge 1\ \}$ (its set of Parikh vectors is not a semi-linear set). 
    Now it suffices to note that 1-counter languages are context-free and the class of 1-counter languages is closed under intersection with a regular language.
\end{proof}

\vspace{2mm}

\begin{proposition}
    There exists a 1-counter language $L$ and a finite set of contexts $C$ such that $[L]_{\to_{[C]]}^*}$ is undecidable.
\end{proposition}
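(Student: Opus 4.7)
The plan is to adapt Proposition~\ref{prop:linclosure}'s construction from linear to 1-counter languages, thereby strengthening Proposition~\ref{prop:onecounternotclosed} from non-closure to undecidability. The blueprint stays the same: design a language $L$ whose words encode sequences of guessed configurations of a universal computing device, separated by $\$$'s, and let the right-greedy filter eliminate all guesses that do not represent a valid computation. What must change is the encoding, so that the combinatorial structure of $L$ still falls inside the 1-counter class.

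I would take $L$ to encode halting computations of a Minsky 2-counter machine $M$ (which is Turing-complete), whose instantaneous descriptions have the form $q\,\mathtt{a}^{c_1}\,\mathtt{b}^{c_2}$: a state symbol followed by two unary counter blocks. Following Proposition~\ref{prop:linclosure}, the words of $L$ look like
\[
    u_0\,\$\,v_1^R h_{\#}(u_1)\,\$\,v_2^R h_{\#}(u_2)\,\$\,\cdots\,\$\,v_k^R h_{\#}(u_k),
\]
where each $v_{i+1}$ is an immediate successor of $u_i$ under $M$, $u_0$ is initial, $v_k$ is halting, and $h_{\#}$ is the position-marking homomorphism of Proposition~\ref{prop:linclosure}. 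The set of contexts would be $C = \{(a\#_a,a') \mid a \in \Sigma_{\text{cfg}}\}$, where $\Sigma_{\text{cfg}}$ is the configuration alphabet, as in Proposition~\ref{prop:linclosure}. The filter then collapses each $\$$-delimited block to $\varepsilon$ precisely when $v_i = u_i$, so only strings whose guessed configurations are globally consistent survive; these correspond exactly to halting computations of $M$. Intersecting $[L]_{\to_{[C]]}^*}$ with the regular language $\Sigma^*\,\$^*$ therefore yields the halting problem of $M$, an undecidable language, so $[L]_{\to_{[C]]}^*}$ itself must be undecidable.

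The critical step is the verification that $L$ really is a 1-counter language. Within a single block the PDA has to certify the local successor relation between $u_i$ and $v_{i+1}$. The idea is to exploit the fact that a Minsky transition modifies only one of the two counters, and by $\pm 1$ at that: the PDA can push the active counter's value from $u_i$ onto its unary stack, then pop it while reading $v_{i+1}^R$ with a $\pm 1$ adjustment absorbed into the finite control, using a single counter reversal per block. The inactive counter is passed through by nondeterministic guessing alone; any inconsistency there is not caught by the PDA but is caught afterwards by the greedy-deletion filter, which enforces $v_i=u_i$ symbol-for-symbol via the $(a\#_a,a')$ contexts, exactly as illustrated in Figure~\ref{fig:lin-to-undec-bracketed-equality-condition}.

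The main obstacle will be reconciling the 1-counter constraint with the bookkeeping needed to certify a Minsky transition: a configuration carries two unary quantities while the PDA has only one counter. Distributing the verification work between the PDA (which handles the one actively updated counter per transition) and the filter (which handles the rest, via character-by-character matching enabled by $h_{\#}$) is therefore essential. Spelling out the precise PDA transitions and confirming that $L$ admits no spurious words whose filtered images escape into $\Sigma^*\,\$^*$ is the delicate part; everything else follows the pattern already established by Propositions~\ref{prop:onecounternotclosed} and \ref{prop:linclosure}.
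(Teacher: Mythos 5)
Your high-level strategy (encode computations of a universal device as a 1-counter language and let the right-greedy filter kill inconsistent guesses) matches the paper's, but the division of labor you propose between the counter and the filter does not close the verification loop, and this is a genuine gap. In your layout $\cdots h_{\#}(u_i)\,\$\,v_{i+1}^R\cdots$ a Minsky configuration reads $q\,\mathtt{a}^{c_1}\mathtt{b}^{c_2}$, so across the block boundary the two counter comparisons are \emph{nested}: $\mathtt{a}^{c_1}\,\mathtt{b}^{c_2}\,\$\,\mathtt{b}^{c_2'}\,\mathtt{a}^{c_1'}$. A single counter can certify exactly one of $c_1\leftrightarrow c_1'$ or $c_2\leftrightarrow c_2'$ per transition, as you note, and you delegate the other (inactive) comparison to the filter. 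But the filter's contexts $(a\#_a,a')$ only enforce the \emph{within-block} equality $v_i=u_i$; they say nothing about the relation between $u_i$ and $v_{i+1}$, which live in different $\$$-blocks. So a word in which the inactive counter jumps arbitrarily from $u_i$ to $v_{i+1}$ (e.g.\ $u_i=q\,\mathtt{a}^5\mathtt{b}^7$, $v_{i+1}=q'\,\mathtt{a}^6\mathtt{b}^{100}$ after an increment of counter~1) passes the PDA, passes the filter as long as $u_{i+1}=v_{i+1}$, and reduces to a word in $\Sigma^*\$^*$. Invalid computations therefore survive, and the intersection with $\Sigma^*\$^*$ is not the halting set. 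Conversely, if you strengthen $L$ to fully check the successor relation, it is no longer recognizable with one counter. Either way the reduction as proposed is unsound.

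The paper escapes this bind by pushing \emph{both} checks onto the filter and using the counter only for local arithmetic. It encodes each configuration of a (right-bounded) Turing machine as a chain of ``bit cells'' $(\$\mathtt{b})^{k_i}h(v_i)\mathtt{a}^{2k_i+v_i}$, where the 1-counter automaton verifies only the relation $|\mathtt{a}\text{-block}|=2k_i+v_i$ inside each cell. The right-greedy context $(\mathtt{a}\$,\mathtt{b})$ then collapses the chain exactly when $k_{i+1}=2k_i+v_i$ throughout, so the terminal unary block has length $\decimal$ of the encoded bitstring; comparing two such unary blocks with a further context $(\mathtt{c}\$,\mathtt{d})$ implements the copy check $v_i=u_i$ between adjacent configuration encodings without the automaton ever holding both values. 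The successor relation is then enforced by a separate family of transition-specific contexts ($C_E,C_N,C_L,C_R$) that match the reversed encoding of one configuration against the encoding of the next, block by block from the middle outward. If you want to salvage your Minsky-machine variant, you would need an analogous mechanism by which the \emph{deletion} (not the counter) certifies preservation of the untouched counter across the $\$$ boundary; symbol-level contexts confined to a single block cannot do this.
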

\begin{proof}

    We show this by taking some right-bounded Turing machine $M$, constructing a specific 1-counter language $L$, and defining a context-set $C$ based on $M$ such that $[L]_{\to_{[C]]}^*}$ intersected with a specific regular language contains only words if $L(M)$ is not empty.

    Let $M =(Q,\Sigma,\Gamma,\delta,q_1,\text{\textvisiblespace},F)$ where without loss of generality $Q := \{q_1,...,q_n\}$ is the set of states for some $n\in\N$, $\Sigma$ is the input alphabet, $\Gamma \supset \Sigma$ is the tape alphabet, $\delta : (Q\setminus F)\times\Gamma \rightarrow Q\times\Gamma\times\{\mathtt{N,L,R}\}$ is the transition function, $q_1\in Q$ is the initial state, $\text{\textvisiblespace}\in(\Sigma\setminus\Gamma)$ is the blank symbol, and $F\subset Q$ is the set of final/halting states.

    For the following construction, we need some general definitions. First, we define $\binary(i)\in\{0,1\}^*$ to be the binary representation of $i\in\N$. For some binary word $w\in\{0,1\}^*$ we define $\decimal(w)$ to be the decimal representation of $w$ with the most significant bit first. For all $i,j\in\N$ with $j > |\binary(i)|$ we let $\binary_j(i)$ be the binary representation of $i$ padded on the left with $j-|bin(i)|$ many $0$'s. 
    We set the constant $c = |\binary(n)|$ based on $M$ and define the binary encoding of states by $\enc(q_i) = \binary_c(i)$ for all $q_i\in Q$.

    As $M$ is right-bounded, we can represent each configuration of $M$ with a single binary number, where the information of the head's location and the current state is encoded in the binary string. Due to technical reasons, we assume the first (most significant) bit of the tape contents to be $1$ at all times. As $M$ is not left bounded, we can always make this assumption without losing expressiveness. Let $L_{\mathtt{STATES}} := \{\ \enc(q_i)\ |\ q_i\in Q\ \} \cup \{0^c\}$ where $0^c$ represents that the head is not at the specific position. Then each configuration of $M$ can be represented with a binary string $1s_0v_1s_1 ... v_ns_n$ such that $v_1,...,v_n\in\{0,1\}$ represent the cell content and $s_0,...,s_n\in L_{\mathtt{STATES}}$ represent the head and state information such that for one $i\in[n]_0$ we have $s_i = \enc(q_i)$ for some $q_i\in Q$ and for all other $j\in[n]_0$ with $j\neq i$ we have $s_j = 0^c$.

    Now, we will construct a 1-counter language $L$ which is a superset of all computations of $M$ in a specific encoded format. We will then construct a context-set $C$ based on $M$ which by right-greedy iterated bracketed contextual deletion results in a language $[L]_{\to_{[C]]}^*}$ whose intersection with a specific regular language is empty if and only if $L(M)$ is empty. As a consequence, if that emptiness was decidable, then we could decide the emptiness problem for $M$ and in particular for arbitrary Turing machines, which is a contradiction.
 
    We split up the construction of the 1-counter language $L$ into several parts and construct it from there on. In a first step, we want to obtain a language that contains words where the first/left part contains encoded information about a configuration of a Turing machine and the second/right part contains the reversed version of a configuration of a Turing machine. Using right-greedy bracketed contextual deletion, we can then determine words where the encoded bitstring of the left side matches the encoded bitstring of the right side, i.e., where the encoded bitstring of the right side is the reversal of the one from the left.
        
    First, we define two languages containing information about bit cells for encoded configurations in the left and right parts. From now on assume that any newly introduced letter is part of the alphabet $\Sigma'$ of the constructed 1-counter language. Also, we define a homomorphism $h : \{0,1\}^* \rightarrow \{0,1,@\}$ with $h(0) = @0$ and $h(1) = @1$ to enrich certain characters with an extra symbol used in the bracketed contextual deletion process later. Let 
    $$ L_{\mathtt{L\text{-}BC}} := \{\ (\$\tb)^n\ h(v)\ \ta^{2n+v}\ |\ n\in\N, v\in\{0,1\}\ \} $$
    be the language of left bit cells and let 
    $$ L_{\mathtt{R\text{-}BC}} := \{\ (\$\tb)^{2n+v}\ v\ \ta^{n}\ |\ n\in\N, v\in\{0,1\}\ \} $$
    be the language of right bit cells. Clearly both can be accepted by a deterministic 1-counter automaton. Using these, we define the language containing the head information for each cell, i.e., the blocks containing the encodings of either a state or $0^c$. Let
    $$ L_{\mathtt{L\text{-}HS}} := \{\ w\in (L_{\mathtt{L\text{-}BC}})^c\ |\ \pi_{\{0,1\}}=\enc(q_i) \text{ for some } q_i\in Q\ \} $$
    be the language of bit cells representing a head position with state information and let
    $$ L_{\mathtt{L\text{-}HE}} := \{\ w\in (L_{\mathtt{L\text{-}BC}})^c\ |\ \pi_{\{0,1\}}=0^c\ \} $$
    be the language of bit cells representing position where the head information is empty, i.e., the head is not there. Analogously, we define the same for the right versions of the languages and set
    $$ L_{\mathtt{R\text{-}HS}} := \{\ w\in (L_{\mathtt{R\text{-}BC}})^c\ |\ \pi_{\{0,1\}}=\enc(q_i)^R \text{ for some } q_i\in Q\ \} $$
    as well as
    $$ L_{\mathtt{R\text{-}HE}} := \{\ w\in (L_{\mathtt{R\text{-}BC}})^c\ |\ \pi_{\{0,1\}}=0^c \text{ for some } q_i\in Q\ \}. $$
    Additionally, we need some altered version of these state block representations which are used for the end of the tape of a configuration and its reversal. Here, the last bit cell (or first bit cell in the reversal/right side) is not of the same form as the others. We begin with the left side and define the empty and non empty head languages by
    \begin{align*} 
        L_{\mathtt{L\text{-}HS\text{-}TE}} := \{\ w = w'(\$\tb)^nh(v)\#\tc^{2n+v}\ |\ & w'\in(L_{\mathtt{L\text{-}BC}})^{c-1}, n\in\N, v\in\{0,1\},\\ 
                                                                            & \pi_{0,1}(w) = \enc(q_i) \text{ for some } q_i\in Q\ \} 
        \end{align*}
        as well as
        $$ L_{\mathtt{L\text{-}HE\text{-}TE}} := \{\ w = w'(\$\tb)^nh(0)\#\tc^{2n}\ |\ w'\in(L_{\mathtt{L\text{-}BC}})^{c-1}, n\in\N, \pi_{0,1}(w) = 0^c\ \}$$
        using the $\mathtt{TE}$ is the index to indicate the tape end. In a similar manner, we do the same for the right/reversed versions and define 
        \begin{align*} 
        L_{\mathtt{R\text{-}HS\text{-}TE}} := \{\ w = (\$\td)^{2n+v}\#v\ta^nw'\ |\ & w'\in(L_{\mathtt{R\text{-}BC}})^{c-1}, n\in\N, v\in\{0,1\},\\ 
                                                                            & \pi_{0,1}(w) = \enc(q_i)^R \text{ for some } q_i\in Q\ \} 
    \end{align*}
    as well as 
    $$ L_{\mathtt{R\text{-}HE\text{-}TE}} := \{\ w = (\$\td)^2n\#0\ta^nw'\ |\ w'\in(L_{\mathtt{R\text{-}BC}})^{c-1}, n\in\N, \pi_{0,1}(w) = 0^c\ \}. $$
    As $L_{\mathtt{L\text{-}BC}}$ and $L_{\mathtt{R\text{-}BC}}$ are recognized by 1-counter automata and as all other language are essentially extended forms of repeating $L_{\mathtt{L\text{-}BC}}$ or $L_{\mathtt{R\text{-}BC}}$ independently, we get that all the above languages must be 1-counter languages. Now, we can define the languages containing all left (and respectively right) bitstrings. An example afterwards clarifies the construction. Let
    \begin{align*}
            L_{\mathtt{LEFT}} := \{\ &h(1)\ta\ s_0\ \prod_{i = 1}^{n}(u_is_i)\ (\$\tb)^{k_r}h(v_r)\ta^{2k_r+v_r}\ s_{n+1}\ |\ v_r\in\{0,1\}, \\ 
            & k_r,n\in\N, u_1,...,u_n\in L_{\mathtt{L\text{-}BC}}, s_0,s_1,...,s_n\in(L_{\mathtt{L\text{-}HS}}\cup L_{\mathtt{L\text{-}HE}}) \\
            & s_{n+1}\in (L_{\mathtt{L\text{-}HS\text{-}TE}}\cup L_{\mathtt{L\text{-}HE\text{-}TE}}) \text{ and there exists one } j\in[n+1]_0 \\
            & \text{ such that } s_j\in (L_{\mathtt{L\text{-}HS}}\cup L_{\mathtt{L\text{-}HS\text{-}TE}}) \text{ and for all other } j'\in[n+1]_0 \\
            & \text{ with } j'\neq j \text{ we have } s_{j'} \in (L_{\mathtt{L\text{-}HE}}\cup L_{\mathtt{L\text{-}HE\text{-}TE}}) \}.
    \end{align*}
    To give an intuition on the words in $L_{\mathtt{LEFT}}$, consider the following example. Let $$w_b = (1\cdot10)(0\cdot00)(1\cdot00)$$ be some bitstring representing a configuration of some Turing machine. For each pair of brackets, the first number represents the symbol on the tape. The two numbers that follow represent the encoding of the state and head position. All but one pairs of brackets contain only $0^2$ to represent that the head is not at that position. In the remaining pair of brackets we see the number $10$ representing that the head is currently at that bit and that the current state is $q_2$. The encoding provided in $L_{\mathtt{LEFT}}$ would write $w_b$ in the following manner:
    $$ @1\ \ta(\$\tb)^{k_1}\ @1\ \ta^{2k_1+1}(\$\tb)^{k_2}\ @0\ \ta^{2k_2+0} \cdots (\$\tb)^{k_{n-1}}\ @0\ \ta^{2k_{n-1}+0}(\$\tb)^{k_n}\ @0\ \#\tc^{2k_n+0}  $$

    For the context-set $C$ which is later used to obtain the undecidable language, we add $(\ta\$,\tb)\in C$. Generalising the above structure we see that for each word $w\in L_{\mathtt{LEFT}}$ we have
    $$w = h(1)\ \ta(\$\tb)^{k_1}\ h(v_1)\ \ta^{2k_1+v_1}(\$\tb)^{k_2}\ ...\ (\$\tb)^k_n\ v_n\ \#\tc^{2k_n+v_n}.$$
    Notice that $w \to_{[C]]}^{*} w'$ for some word $w'\in (@0|@1)^*\#c^*$ if and only if $k_1 = 1$, $2k_i+v_i = k_{i+1}$ for all $i\in[n-1]$, and as a consequence $|w'|_c = \decimal(\pi_{\{0,1\}}(w))$. This fact is imporant to check whether the encoded right bitstring matches the reversal of the encoded left bitstring. Symmetrically to $L_{\mathtt{LEFT}}$, we now define $L_{\mathtt{RIGHT}}$.

    \begin{align*}
            L_{\mathtt{RIGHT}} := \{\ &s_0\ \ta^{2k_\ell+v_\ell}v_\ell\ta^{k_\ell}\ \prod_{i=1}^{n}(s_iu_i)\ s_{n+1}\ (\$\tb)1\ |\ v_\ell\in\{0,1\},  \\ 
            & k_\ell,n\in\N, u_1,...,u_n\in L_{\mathtt{R\text{-}BC}}, s_1,...,s_{n+1}\in(L_{\mathtt{R\text{-}HS}}\cup L_{\mathtt{R\text{-}HE}}) \\
            & s_0\in (L_{\mathtt{R\text{-}HS\text{-}TE}}\cup L_{\mathtt{R\text{-}HE\text{-}TE}}) \text{ and there exists one } j\in[n+1]_0 \\
            & \text{ such that } s_j\in (L_{\mathtt{R\text{-}HS}}\cup L_{\mathtt{R\text{-}HS\text{-}TE}}) \text{ and for all other } j'\in[n+1]_0 \\
            & \text{ with } j'\neq j \text{ we have } s_{j'} \in (L_{\mathtt{R\text{-}HE}}\cup L_{\mathtt{R\text{-}HE\text{-}TE}}) \}.
    \end{align*}
    Generally, we obtain that each word $w\in L_{\mathtt{RIGHT}}$ has the structure
    $$ w = (\$\td)^{k_n+v_n}\#\ v_n\ \ta^{k_n}\ \cdots\ (\$\tb)^{2k_1+v_1}\ v_1\ \ta^{k_1}(\$\tb)\ 1.$$
    Notice that the same as above holds: If we have $w \to_{[C]]}^{*} w'$ for some word $w'\in (\$\td)^*\#(0|1)^*$, then and only then we must have $k_1 = 1$, $2k_i+v_i = k_{i+1}$ for all $i\in[n-1]$, and as a consequence $|w'|_{\td} = \decimal(\pi_{\{0,1\}}(w)^R)$.

    Now, combining $L_{\mathtt{LEFT}}$ and $L_{\mathtt{RIGHT}}$, we notice the following. Let $w\in(L_{\mathtt{LEFT}}\cdot L_{\mathtt{RIGHT}})$. Then $w = w_\ell\#\tc^{k_\ell}(\$\td)^{k_r}\#w_r$ for some $k_\ell,k_r\in N$ and $w_\ell,w_r$ with $\#\notin(\alphabet(w_\ell)\cup\alphabet(w_r))$. As we have established before, if we can obtain a word $w \to_{[C]]}^{*} w'$ with $w' = w_\ell'\#\tc^{k_\ell}(\$\td)^{k_r}\#w_r'$ such that $\alphabet(w_\ell'w_r') = \{0,1,@\}$, then $|w'|_c = \decimal(\pi_{\{0,1\}}(w))$ and $|w'|_{\td} = \decimal(\pi_{\{0,1\}}(w)^R)$. For the construction of the context-set $C$ also add the context $(\tc\$,\td)\in C$. Then, we get that we can obtain $w \to_{[C]]}^{*} w_\ell'\#\#w_r'$ by splicing if and only if $k_\ell = k_r$ and in extension $\pi_{\{0,1\}}(w_\ell) = (\pi_{\{0,1\}}(w_r'))^R$.

    By that, it is ensured that we can only obtain a right side suffix that only contains the letters $0$ and $1$ and a left side prefix that only contains the letters $@0$ and $@1$ if and only if the binary word encoded on the right side matches the reversal of the encoded word on the left side. 
        
    As a final step, we construct the rest of $C$ to use it to check whether subsequent configurations are valid based on the definition of the transition function $\delta$ of $M$. For that, we consider the right side representation of one configuration and the left side representation of the next configuration. Let $w \in (L_{\mathtt{RIGHT}} \cdot L_{\mathtt{LEFT}})$ with $w = (\$\td)^{k_\ell}\#w_\ell w_r\#\tc^{k_r}$ and $\#\notin\alphabet(w_\ell w_r)$ such that $w \to_{[C]]}^{*} w'$ with $w' = (\$\td)^{k_\ell}\#w_\ell'w_r'\#\tc^{k_r}$
    with $\alphabet(w_\ell'w_r') = \{0,1,@\}$. We construct $C$ in a way such that $w_\ell' w_r' \to_{[C]]}^{*} \varepsilon$ if and only if $w_r$ represents a valid successor configuration to $w_\ell$. For that, we consider each tape blockwise from the middle on and splice the corrsponding blocks containing cell and head information if they correspond accordingly. 
        
    Notice that only in the middle between $w_\ell'$ and $w_r'$ we have that the symbol $@$ appears first and is preceded by multiple letters of $\{0,1\}$. After that, we always have that $@$ and symbols from $\{0,1\}$ alternate. This allows for each of the following constructions only to be used in that middle part where a suffix of $w_\ell$ and a prefix of $w_r$ have to be used for right-greedy bracketed contextual deletion, as in each suffix of $w_\ell$ at least two characters of $\{0,1\}$ appear and all following contexts use a left side with at least two symbols of $\{0,1\}$ before the first occurrence of $@$.
        
    (i) First, we construct $C_E = \{\ (\ 0^cv@\ ,\ vh(0^c)\ )\ |\ q_i\in Q, v\in\{0,1\}\}$. If on both sides the cell values equal and no state is encoded, i.e., the head is not there, then the contents of the same cell represented in $w_\ell'$ and $w_r$ have to match and can be spliced if that is the case.

    (ii) The second case considers transitions where the head is not moved. We construct $C_N := \{\ (\ \binary_c(q_i)^Rv_1@\ ,\ v_2h(\binary_c(q_j)\ ))\ |\ v_1,v_2\in\{0,1\}, q_i,q_j\in Q, (q_j,v_2,N)\in\delta(q_i,v_1) \}$ which allows for bracketed contextual deletion if the two blocks are valid successors regarding to $\delta$.

    (iii) In the third case, we have that the head moves in one direction. We construct $C_R$ and $C_L$ in the following manner, depending on the direction of the head of the tape. Remember that $w_\ell'$ contains an encoding of the reversal of one configuration and that $w_r$ contains an encoding of the potentially subsequent configuration which is not reversed. Hence, if we want to compare two different cells at once, we have to consider the reversed version on the other side. First, we set $C_R := \{\ (\ 0^cv_1\binary_c(q_i)^Rv_2@\ ,\ v_3h(0^c)@v_1h(\binary_c(q_j))\ )\ |\ v_1,v_2,v_3\in\{0,1\}, q_i,q_j\in Q, (q_j,v_3,R)\in\delta(q_i,v_2)\}$. As the head moves to the right in the non-reversed representation, the positions left to the head in the reversed representation have to be considered. Symmetrically, we can set $C_L := \{\ (\ \binary_c(q_i)^Rv_10^cv_2@\ ,\ v_2h(\binary_c(q_j))@v_3h(0^c)\ )$\\ $|\ v_1,v_2,v_3\in\{0,1\}, q_i,q_j\in Q, (q_j,v_3,L)\in\delta(q_i,v_1)\}$ to consider the transitions that move the head to the left.

    Now, set $C = \{(\ta\$,\tb),(\tc\$,\td)\} \cup C_E \cup C_N \cup C_R \cup C_L$. Using $C$, we can see that we can only obtain $w_\ell' w_r' \to_{[C]]}^{*} \varepsilon$ if and only if $w_r'$ encodes a valid successor of $(w_\ell')^R$ regarding the transition function $\delta$ of $M$ as we can only apply iterated right-greedy bracketed contextual deletion if and only if we can inductively confirm that the two encodings fit to each other, starting in the middle between $w_\ell$ and $w_r$, i.e., at the left end of the configuration encoding.
        
    We now define the 1-counter language $L$ from which we obtain that $[L]_{\to_{[C]]}^{*}}$ is undecidable. First, by the same argument as to why the languages at the beginning of the proof are 1-counter languages, we immediately obtain that $L_{\mathtt{LEFT}}$ and $L_{\mathtt{RIGHT}}$ must be 1-counter languages as well. The additional requirement here is to ensure that only one block may encode a head position. This can be done deterministically as we always know at which positions the head encoding happens and that all but one encodings of the head use only the word $0^c$.

    Before defining $L$, we define the $L_{\mathtt{LEFT\text{-}INIT}}$ and $L_{\mathtt{RIGHT\text{-}END}}$ to be adaptations of $L_{\mathtt{LEFT}}$ and $L_{\mathtt{RIGHT}}$ that can only use initial states (respectively final states) in their encodings. These adjustments can be directly handled via the statespace of some deterministic 1-counter automaton. Finally, set
    \begin{align*}
        L := \{\ u_1v_1u_2v_2 \cdots u_{n-1}v_{n-1}u_nv_n\ |\ &n\in\N, u_1\in L_{\mathtt{LEFT\text{-}INIT}}, v_n\in L_{\mathtt{RIGHT\text{-}END}} \\
                                                              &v_i\in L_{\mathtt{RIGHT}} \text{ for } i\in[n-1], \\ 
                                                              &u_j\in L_{\mathtt{LEFT}} \text{ for } j\in[n]\setminus\{1\}\ \}.
    \end{align*}
    As $L$ applies all the previously defined languages subsequently and independently, it directly follows that $L$ can be accepted using a 1-counter automaton. Whats left to show is that $[L]_{\to_{[C]]}^{*}}$ is undecidable. Define the regular language $L_r := (0|1|@)\#^*(0|1)$. Notice, that to obtain a word $w\in [L]_{\to_{[C]]}^{*}} \cap L_r$, all parts between $\#$ must be reduced to the empty word. 
        
    For that to happen between some words $u_iv_i$ for some $i\in[n]$, we know from before, first, that the encoded configuration in $u_i$ must be equal to the reverse of the encoded configuration in $v_i$ and that the number of $\tc$'s at the end of $u_i$ (respectively the number of $(\$\td)'s$ at the beginning of $v_i$) must correspond to the binary number contained in $u_i$ (respectively to the reversal of the binary number congainted in $v_i$). 
        
    Next, to reduce the words between $\#$ for some words $v_iu_i$ to the empty word, we have already shown before, that we must be able to apply the contexts in $C$ to compare and remove all blocks of cells encoded in $v_i$ and $u_i$ inductively. This can only be done, if we can apply some context until the part between the two $\#$'s is spliced to the empty word. By the definition of $C$, as given before, this only works if $u_i$ is an encoding of a valid succesor of $v_i$ and if the part between the $\#$'s only contains the letters $\{0,1,@\}$.

    Overall, using all previous arguments, we get that $[L]_{\to_{[C]]}^{*}} \cap L_r = \emptyset$ if and only if $L(M) = \emptyset$. Hence, if $[L]_{\to_{[C]]}^{*}} \cap L_r = \emptyset$ was decidable, then it could decide the emptiness problem for Turing machines as $M$ was assumed to be arbitrary. So, it must be undecidable. The intersection between some undecidable language and a regular languages is known to be undecidable. Hence, $[L]_{\to_{[C]]}^{*}}$ must also be undecidable.
\end{proof}

This concludes the results regarding language closures under iterated bracketed contextual deletion. 
Regarding the non-greedy cases, the question regarding its computational power remains open for 1-counter languages. 
In particular, it is still open whether the class of 1-counter languages is closed under iterated contextual splicing. 

\subsection{Language properties of contexual lariat deletion}

Let us move towards more realistic model of co-transcriptional splicing, that is, 1-step contextual lariat deletion and its parallel extension.
The results of this subsection focus on language closure properties under this operation under the proposed three energy models.

\subsubsection{Set of stable hairpins}

Unlike the bracketed contextual deletion, the contextual lariat deletion involves a stable hairpin. 
It is hence significant to know how computationally easy/hard it is to tell whether a given hairpin is stable or not. 

The stability test can be easily done in the constantly-bounded loop-length model by using a one-turn pushdown automaton, which characterizes the class of linear languages. 
Such an automaton operates in three phases: firstly it reads and pushes some prefix $x$ of an input onto the stack; then it non-deterministically starts reading the loop $\ell$ bounded with constant size; and finally, it starts popping $x$ from the stack to check whether the remaining suffix is $\theta(x)$ or not. 

\vspace{2mm}
\begin{lemma}
    For any constant $c\in\N$, $H_{{\rm con},(c)}$ is linear. 
\end{lemma}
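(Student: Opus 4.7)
The plan is to appeal to the classical characterization of linear languages as exactly those recognized by one-turn pushdown automata (PDAs whose stack profile first non-strictly rises and then non-strictly falls), and to construct such a PDA $M$ for $H_{{\rm con},(c)} = \{x\ell\theta(x) \mid x,\ell\in\Sigma^*,\ |\ell|\le c\}$ directly.

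The automaton would operate in three phases controlled by a finite number of states. In the push phase, $M$ reads each input letter $a$ and pushes it verbatim onto the stack, guessing the stem $x = a_1 a_2 \cdots a_n$. At a non-deterministically chosen moment, $M$ transitions to the loop phase, during which it reads at most $c$ further symbols without touching the stack; this is implemented by a bounded counter of capacity $c$ encoded in the finite control, which is legitimate because $c$ is a fixed constant. The symbols consumed here represent the loop $\ell$. In the pop phase, for each symbol $a$ popped from the top of the stack the automaton verifies that the next input letter equals $\theta(a)$; acceptance occurs precisely when the stack is empty (save the bottom marker) and the input is exhausted at the same time.

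Correctness would follow from the antimorphism property of $\theta$: having pushed $a_1,\ldots,a_n$ in order leaves $a_n$ on top, so successive pops spell out $\theta(a_n)\theta(a_{n-1})\cdots\theta(a_1) = \theta(x)$, which is exactly the required suffix. Any input of the form $x\ell\theta(x)$ with $|\ell|\le c$ is accepted by placing the phase transitions after $|x|$ and after $|x|+|\ell|$ letters, respectively; conversely, only inputs with this structure can survive the pop-phase verification. The construction is clearly one-turn, as the stack height is non-decreasing through the push phase, unchanged during the loop phase, and non-increasing during the pop phase.

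I do not foresee a serious obstacle. The only points requiring minor care are the degenerate factorizations $x=\varepsilon$ (handled by allowing $M$ to enter the loop phase immediately from the initial state) and $\ell=\varepsilon$ (handled by allowing the loop-phase-to-pop-phase transition at counter value zero), plus verifying that absorbing the loop counter into the finite control does not compromise the one-turn property. A mildly subtle check is that the loop phase must be realised without any stack activity, so that the transition from push to pop remains unique and the stack profile genuinely has a single turn.
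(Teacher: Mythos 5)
Your construction is correct and is essentially the paper's own argument: a one-turn PDA that pushes a guessed stem $x$, reads a loop of length at most $c$ tracked in the finite control, and then pops while matching the suffix against $\theta(x)$. The degenerate cases and the one-turn property are handled exactly as the paper intends, so there is nothing to add.
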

\vspace{2mm}

The test is not as easy in the other models, intuitively because therein, one half of a stem must be not only matched with the other half but also compared in length with the loop in-between. 

\vspace{2mm}
\begin{lemma}\label{lemma:H-lin-not-context-free}
    There exist parameters $d_1, d_2, d$ such that $H_{{\rm lin}, (d_1, d_2, d)}$ is not context-free.
\end{lemma}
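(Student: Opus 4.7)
The plan is to exhibit parameters and a regular language $R$ such that $H_{{\rm lin}} \cap R$ fails the pumping lemma for context-free languages; since CFLs are closed under intersection with regular languages, this will force $H_{{\rm lin}}$ itself to lie outside CFL. I would fix $d_1 = d_2 = 1$ and $d = 0$, so stability reduces to $|\ell| \le |x|$, and work over the four-letter alphabet $\Sigma = \{\ta,\tb,\tc,\td\}$ with the involution $\theta(\ta)=\tb$, $\theta(\tc)=\td$. The key choice is the regular language $R = \ta^+\tc^+\tb^+$, which deliberately excludes the letter $\td$ from its alphabet.

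First I would characterize $L := H_{{\rm lin},(1,1,0)} \cap R$ explicitly. For any hairpin factorization $w = x\ell\theta(x)$ of $w = \ta^i\tc^j\tb^k \in R$, the fact that $\td$ does not occur in $w$ forces $x$ to be $\tc$-free: any $\tc$ in $x$ would produce a $\td$ in $\theta(x)$, preventing $\theta(x)$ from being a suffix of $w$. The same observation rules out $x$ extending into the $\tb$-block (since $j \ge 1$ in $R$, any such $x$ would again inject $\td$'s into $\theta(x)$). Hence the only admissible stem is $x = \ta^p$, giving $\theta(x) = \tb^p$ and $\ell = \ta^{i-p}\tc^j\tb^{k-p}$, so that $|\ell|\le|x|$ reduces to $p \ge (i+j+k)/3$. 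Combined with $p \le \min(i,k)$ this yields
\[
L \;=\; \{\,\ta^i\tc^j\tb^k \mid i,j,k \ge 1,\ 3\min(i,k) \ge i+j+k\,\}.
\]

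Next, I would apply the pumping lemma to $w_n = \ta^n\tc^n\tb^n \in L$ for $n$ larger than the pumping constant. In any admissible factorization $w_n = uvxyz$, the infixes $v$ and $y$ must each sit inside a single letter-block (otherwise $uv^2xy^2z$ already leaves $R$), and $vxy$ can touch at most two adjacent blocks, since a span from $\ta^n$ to $\tb^n$ would require length at least $n+2$. Writing $uv^2xy^2z = \ta^{n+r_a}\tc^{n+r_c}\tb^{n+r_b}$, one therefore always has $r_a\cdot r_b = 0$, so $\min(n+r_a, n+r_b) = n$ and membership in $L$ would demand $3n \ge 3n + r_a + r_c + r_b$, contradicting $r_a + r_c + r_b \ge 1$. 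Every admissible factorization is thus broken after a single pump.

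The main obstacle is arranging for the hairpin factorization to be essentially unique. A more symmetric-looking choice such as $R = \ta^+\tc^+\td^+\tb^+$ would permit extended stems $x = \ta^i\tc^q$ with $\theta(x) = \td^q\tb^i$ landing naturally as a suffix; the resulting intersection then contains slices such as $\{\ta^n\tc^{n+r}\td^{n+r}\tb^n : r \ge 0\}$ that admit pumping decompositions surviving every $i$, defeating the pumping argument. Excluding $\td$ from $R$ closes this escape route and confines the stem to the outer $\ta$-$\tb$ blocks, where pumping is incompatible with the stability constraint.
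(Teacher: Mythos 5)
Your proof is correct and follows essentially the same strategy as the paper's: fix $d_1=d_2=d-$parameters so that stability means the stem is at least as long as the loop, intersect $H_{\rm lin}$ with a regular filter over a four-letter alphabet that omits the $\theta$-image of the loop letter (thereby pinning down the stem), and derive a contradiction from the Bar–Hillel pumping lemma on a balanced word. The only difference is executional: your three-block filter $\ta^+\tc^+\tb^+$ yields the explicit characterization $3\min(i,k)\ge i+j+k$, collapsing the paper's three-case analysis on the location of the pumped factors into a single inequality.
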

\begin{proof}
    Let $\Sigma = \{0,1,a,b\}$ and set $\theta$ such that $\theta(1) = 0$, $\theta(0) = 1$, $\theta(a) = b$, and $\theta(b) = a$. 
    Set $d_1 = d_2 = 1$ and $d = 0$; these parameters in the linear loop penalty model mean that in order for a hairpin to be stable, its stem is not allowed to be shorter than its loop. 
    Suppose $H_{\rm lin}$ were context-free; then so is $H' = H_{lin} \cap 0^+1^+a^+0^+1^+$.
    
    Let $p\in \N$ be the constant for $H'$ in the well-known Bar-Hillel pumping lemma for context-free languages~\cite{Kreowski1979}. Then, by the pumping lemma we get that every word in $H'$ has a decomposition $uxvyw$ such that $|xvy|\leq p$, $|xy|>0$ and $ux^ivy^iw\in H'$ for any $i\geq 0$. It is easy to see that $0^p1^pa^{2p}0^p1^p\in H'$, and we will argue that there is no decomposition for this word satisfying the lemma, contradicting the context-freeness of $H'$. We can immediately conclude that $xvy\notin (0+1)^+a^{2p}(0+1)^+$, due to the condition $|xvy|\leq p$. Three cases remain: (1) $xvy\in a^+$,  (2) $xvy\in (0+1)^+a^+$ or (3) $xvy\in (0+1)^+$ (and the analogous cases of (2) and (3), when $xvy$ is factor of the second half of the word). It is straightforward that in case (1) we have $xy\in a^+$. Setting $i=2$ we get $0^p1^pa^{2p+|xy|}0^p1^p\in H'$, but this is a contradiction, because $a^{2p+|xy|}$ is all part of the loop, which is longer than the longest possible stem $0^p1^p$; hence this hairpin would not be stable.
    In case (2) we set $i=2$ again and get a word in $(0+1)^{2p+m} a^{2p+n}0^p1^p$, where $m+n=|xy|$. Note that the maximal length of the stem is still $2p$, because any longer suffix cannot match completely with any prefix, which means that the loop length is at least $2p+|xy|$, a contradiction.
    In case (3), setting $i=0$ again we get that the loop length of at least $2p$ is more than the maximal stem length of $2p-|xy|$, a contradiction.
\end{proof}

\vspace{2mm}
\begin{lemma}
    There exist parameters $d_1, d_2, d$ such that $H_{{\rm log}, (d_1, d_2, d)}$ is not context-free.
\end{lemma}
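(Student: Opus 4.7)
The plan is to adapt the argument used for the linear model in Lemma~\ref{lemma:H-lin-not-context-free}, with an exponentially larger loop region to match the slower growth of the logarithmic penalty. I would keep the same alphabet $\Sigma = \{0,1,a,b\}$ and antimorphic involution $\theta$ that swaps $0 \leftrightarrow 1$ and $a \leftrightarrow b$, and fix parameters $d_1 = d_2 = 1$ and $d = 0$ (with base-$2$ logarithm), so that a hairpin $x\ell\theta(x)$ is stable precisely when $|\ell| \le 2^{|x|}$. Assuming for contradiction that $H_{{\rm log},(1,1,0)}$ is context-free, so is $H' = H_{{\rm log},(1,1,0)} \cap 0^+ 1^+ a^+ 0^+ 1^+$, and I apply the Bar--Hillel pumping lemma to $H'$ with pumping constant $p$.

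The witness I would use is $w_p = 0^p 1^p a^{2^{2p}} 0^p 1^p$, which lies in $H'$ via $x = 0^p 1^p$, $\ell = a^{2^{2p}}$, $\theta(x) = 0^p 1^p$; this saturates the stability bound, $|\ell| = 2^{2p} = 2^{|x|}$, and the exponentially long $a$-block is precisely what makes $w_p$ just barely stable. Because $\theta(a) = b$ and $b$ does not appear in $w_p$, any hairpin decomposition of a subword of the form $0^{m} 1^{n} a^{k} 0^{m'} 1^{n'}$ must keep the entire $a$-block inside the loop; hence the stem is a $\{0,1\}$-prefix whose $\theta$-image is a $\{0,1\}$-suffix, and a short combinatorial check shows that the maximum achievable stem length is $m + \min(n, m')$ when $m = n'$, and at most $\min(m, n')$ otherwise.

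With this observation, the proof concludes by pumping: fix a decomposition $w_p = u \alpha \beta \gamma v$ with $|\alpha\beta\gamma| \le p$ and $|\alpha\gamma| \ge 1$, and take $i = 2$. A case analysis on which of the five blocks $\alpha\gamma$ lies in or straddles splits into two regimes. If $\alpha\gamma$ contains any letter of $\{0,1\}$ that breaks the equality $m = n'$ of the outer block counts (for instance, $\alpha\gamma$ lies in the first $0$-block, in the last $1$-block, or straddles a boundary asymmetrically), then the maximum stem of the pumped word drops to at most $p$, while the loop still contains all of $a^{2^{2p}}$, so stability would require $2^p \ge 2^{2p}$, a contradiction. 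In the remaining cases (pumping inside the $a$-block, symmetric pumping of the inner $1$- or $0$-block, or pumping across the $1^*a^*$ or $a^*0^*$ boundary that keeps $m = n'$), the maximum stem can still reach $2p$, but the loop strictly exceeds $2^{2p}$ by the pumped length, again violating $|\ell| \le 2^{|x|}$.

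The main obstacle is precisely this case analysis: one has to verify that every admissible pumping location disturbs the delicate equality $|\ell| = 2^{|x|}$ enjoyed by $w_p$, either by shrinking the reachable stem or by enlarging the loop. Unlike in the linear setting, where the bound is loose by a constant factor, the logarithmic bound is exponentially tight, so $w_p$ is chosen to saturate it exactly; the argument therefore reduces to showing that no decomposition of the pumped word can simultaneously keep the $0$/$1$ counts balanced enough to permit a stem of length $2p$ and keep the loop at length $2^{2p}$.
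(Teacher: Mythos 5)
Your proposal is correct and follows essentially the same route as the paper: the paper also fixes $d_1=d_2=d\in\{1,1,0\}$, intersects with $0^+1^+a^+0^+1^+$, uses the witness $0^p1^pa^{2^{2p}}0^p1^p$ (the $a$-block inflated from $a^{2p}$ to $a^{2^{2p}}$ to saturate the logarithmic bound), and then reruns the pumping-lemma case analysis of the linear-model lemma. Your write-up is in fact somewhat more explicit than the paper's (which simply invokes ``the exact same arguments''), in particular in stating the maximal-stem formula and noting that the base of the logarithm must be fixed.
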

\begin{proof}
    Let $\Sigma = \{0,1,a,b\}$ and set $\theta$ such that $\theta(1) = 0$, $\theta(0) = 1$, $\theta(a) = b$, and $\theta(b) = a$. Set $d_1 = d_2 = 1$ and $d = 0$. Suppose $H_{log}$ was context-free for these parameters. Similar to Lemma \ref{lemma:H-lin-not-context-free}, let $$H' = H_{log} \cup 0^+1^+a^+0^+1^+$$ be an intersection of $H_{log}$ with that specific regular language, again, restricting the form of all considered hairpins. By the assumption, $H'$ should be context-free as well. Let $p\in\N$ be the constant given by the pumping lemma for context-free languages \cite{Kreowski1979}. Let $$w = 0^p1^pa^{2^{2p}}0^p1^p.$$
    Now, the exact same arguments as in Lemma \ref{lemma:H-lin-not-context-free} can be applied to obtain that $H'$ is not context-free.
\end{proof}

The last two lemmas shall be used to show that the class of linear languages is not closed under 1-step or parallel contextual lariat deletion under linear or logarithmic loop penalty model. 
They also raise the problem of which language classes the set of stable hairpins lies in, though it is not directly related to what we will discuss in the succeeding sections. 
Let us settle this problem first. 
The notion of \emph{nondeterministic pushdown automata augmented with one counter (NPCM)} plays an important role. 
They work analogously to nondeterministic pushdown automata with the addition of a counter (unary stack) that may be handled in each transition as well. 
A \emph{1-reversal-bounded NPCM} (denoted \emph{NPCM(1)}) is a NPCM whose counters are assumed to be $1$-reversal-bounded, i.e., once they start decrementing, they cannot increment anymore. 
The following result shows that a NPCM(1) suffices to obtain exactly the language $H_{lin}$ for arbitrary parameters.
This will help us due to the fact that - as opposed to NPCM with no counter restrictions - Ibarra has shown in~\cite{Ibarra78} that a language accepted by a 1-reversal-bounded NPCM is semilinear and has efficient algorithms for constructing intersections with regular languages and checking emptiness.

\vspace{2mm}
\begin{lemma}\label{lemma:hlinNPCM}
    For any parameters $d_1, d_2, d$, the language $H_{{\rm lin}, (d_1, d_2, d)}$ can be accepted by a $1$-reversal-bounded NPCM; hence, it is semilinear.
\end{lemma}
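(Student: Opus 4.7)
The plan is to construct a nondeterministic pushdown automaton with a single $1$-reversal-bounded counter that recognizes $H_{{\rm lin},(d_1,d_2,d)}$, and then invoke Ibarra's result cited above to conclude semilinearity. The machine $M$ will mirror the decomposition $w = x\ell\theta(x)$ in three phases: the pushdown records $x$ so that it can be checked against $\theta(x)$ in the last phase, and the counter maintains the running energy balance $d + d_2|x| - d_1|\ell|$.

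Concretely, I would start by initializing the counter to $d$ via $d$ initial $\varepsilon$-pushes. In phase one $M$ reads a prefix of the input, pushing every letter onto the pushdown and, after each letter, incrementing the counter by $d_2$ using $d_2$ $\varepsilon$-transitions. $M$ then nondeterministically enters phase two, where it continues consuming letters but leaves the pushdown untouched and decrements the counter by $d_1$ after each input letter, again by $d_1$ $\varepsilon$-transitions. A second nondeterministic choice brings $M$ into phase three, in which it ignores the counter and, for each remaining input letter, pops a symbol $a$ from the pushdown and verifies that the input letter equals $\theta(a)$. Acceptance requires the input to be fully consumed with an empty pushdown.

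Correctness has two ingredients. Because $\theta$ is an antimorphism, popping $x$ in LIFO order and applying $\theta$ letter-wise produces exactly $\theta(x)$, so phase three enforces the hairpin shape $x\ell\theta(x)$. For stability, since a decrement below zero is an illegal counter operation and the counter is monotone non-increasing throughout phase two, phase two can finish only when $d + d_2|x| - d_1|\ell| \ge 0$, which is exactly $d_1|\ell| - d_2|x| \le d$. Conversely, when the inequality holds the natural run of $M$ succeeds. The counter is incremented only in phase one and decremented only in phase two, so it is $1$-reversal-bounded; once this is in place, semilinearity follows from Ibarra's theorem.

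The main thing to be careful about is the bookkeeping for the degenerate cases $x = \varepsilon$ and $\ell = \varepsilon$ permitted by the definition of $H$, and ensuring that the blocks of $\varepsilon$-transitions used to simulate a $d_2$-unit increment or a $d_1$-unit decrement are not interleaved with input reads from a different phase, which would spuriously allow an increment after a decrement and break the reversal bound. Both issues are resolved by introducing dedicated control states that fully perform each multi-unit counter update before reading the next input symbol and before any nondeterministic phase transition is allowed, so that the transition from phase one to phase two is the unique moment at which the counter's behaviour switches from incrementing to decrementing.
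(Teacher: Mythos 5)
Your construction is correct and essentially the same as the paper's: a three-phase NPCM with a $1$-reversal-bounded counter that pushes the stem onto the pushdown while incrementing the counter by $d_2$ per letter (via dedicated intermediate states), decrements by $d_1$ per loop letter, and then pops to match $\theta(x)$, concluding semilinearity from Ibarra's theorem. The only (cosmetic) divergence is that you credit the constant $d$ to the counter at the start, which is the correct sign for enforcing $d_1|\ell|-d_2|x|\le d$, whereas the paper debits a constant after the stem phase.
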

\begin{proof}
    Let $\Sigma$ be some alphabet and $\theta$ be an antimorphic involution on $\Sigma$. Let $d_1,d_2\in\N$ and $d\in\N_0$ be some parameters for $H_{lin}$. We consider the language $H_{lin}$. We construct a 1-reversal-bounded NPCM $A$ such that $L(A) = H_{lin}$. $A$ is split up in $3+c+d_1+d_2$ states $Q = \{q_1,q_2,q_3,q_{con,1},...,q_{con,c},q_{fac1,1},...,q_{fac1,d_1-1},q_{fac2,1},...,q_{fac2,d_2-1}\}$. 
    In $q_1$, the stem of the hairpin is being written. For each letter $a\in\Sigma$ that is read in $q_1$, $\theta(a)$ is added to the stack and the counter is increased by $d_2$ using $\varepsilon$-transitions between $q_1$ and $q_{fac2,1}$, $q_{fac2,i}$ and $q_{fac2,i+1}$ for each $i\in[d_2-2]$, and $q_{fac2,d_2-1}$ and $q_1$ where the value of the counter is increased by $1$ per transition. We can non-deterministically choose to start writing the loop of the hairpin. We add an $\varepsilon$-transition from $q_1$ to $q_{con,1}$. Now, the constant $c$ is being processed. For each $i\in[c-1]$, we add an $\varepsilon$-transition from $q_{con,i}$ to $q_{con,i+1}$ that reduces the value of the counter by $1$. If at any point this is not possible, the word is not in the language. Finally, an $\varepsilon$-transition between $q_{con,c}$ and $q_2$ finished the handling of the constant and allows for writing the loop. For each letter $a\in\Sigma$ we read in $q_2$, we reduce the counter by $d_1$. Analogously, to increase the counter in $q_1$ by $d_2$, we use the states $q_{fac1,1}$ to $q_{fac2,d_1-1}$ to handle this. If at any point, this would result in the counter going negative, i.e., we cannot reduce the counter anymore, the word is rejected and is by that not in the language. Finally, we can start writing the right part of the stem at any time by moving from $q_2$ to $q_3$ with an $\varepsilon$-transition. In $q_3$, we can only read letters from the stack if they are in the word. Once the stack is empty, we cannot read any more letters and accept the word read so far. By this construction, we obtain that there exists a NPCM(1) that accepts $H_{lin}$.
\end{proof}

\subsubsection{In the linear loop penalty model}

As announced above, let us now use Lemma~\ref{lemma:H-lin-not-context-free} to show that the class of linear languages is not closed under 1-step and parallel contextual lariat deletion in the linear loop penalty model. 
Let 
\[ 
L = \{\ 0^i1^ka^n0^s1^t\$1^t0^sa^n1^k0^i\$ \mid i,k,n,s,t\in\N_0\}.
\]
This language is linear. 
Let us apply the 1-step lariat deletion according to the antimorphic involution $\theta$ that swaps 0 with 1 and $a$ with $b$ under the linear loop penalty model that prevents a hairpin from having a longer loop than a stem (the exact setting in the proof of the lemma). 
Consider the single context $C = \{(\$,\$)\}$ and set the gap bound to 0. 
Deleting one stable hairpin from words in $L$ and filtering the words thus obtained by the regular ``formatting'' language $1^*0^*a^*1^*0^*$ yields $H'$, the set of all stable hairpins in this format, but $H'$ has been shown in the lemma not to be context-free. 
The same argument works for $\hpdblinp$ as any word in $L$ involves only one position where the lariat deletion can be applied. 

\vspace{2mm}
\begin{proposition}\label{prop:linear-haipin-deletion-liner-languages-not-closed}
There exist a linear language $L$, a finite set of contexts $C$, and parameters $d_1, d_2, d$ such that neither $[L]_{\hpdblino}$ nor $[L]_{\hpdblinp}$ is linear.
\end{proposition}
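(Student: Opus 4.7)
The plan is to use the non-context-freeness of $H' = H_{\mathrm{lin}} \cap 0^+1^+a^+0^+1^+$ from Lemma~\ref{lemma:H-lin-not-context-free} as the engine: exhibit a specific linear language and lariat-deletion configuration that forces $[L]_{\hpdblino}$ (and $[L]_{\hpdblinp}$) to contain, after intersection with a suitably chosen regular language, a non-context-free set—hence a non-linear one. Concretely, I would take $L = \{0^i 1^k a^n 0^s 1^t \$ 1^t 0^s a^n 1^k 0^i \$ \mid i,k,n,s,t \in \N_0\}$ and verify that $L$ is linear by a one-turn PDA that pushes the first half onto the stack and matches the reversed structure of the second half by popping after reading the middle delimiter.

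Next, I would fix $C = \{(\$,\$)\}$, gap bound $0$, linear-loop-penalty parameters $d_1 = d_2 = 1, d = 0$, and the antimorphic involution $\theta$ from Lemma~\ref{lemma:H-lin-not-context-free}. The key claim is that each $w \in L$ admits essentially one lariat-deletion site, because the two occurrences of $\$$ pin down the positions of $\alpha$ and $\beta$ up to the choice of stem length; the stability constraint $|\ell|\leq |\alpha x|$ then translates into an inequality on the exponents $(i,k,n,s,t)$ that mirrors the $H'$-defining condition. A careful case analysis of where the stem halves $\alpha x$ and $\theta(\alpha x)$ may sit within $w$ would then show that intersecting $[L]_{\hpdblino}$ with the regular formatting language $1^* 0^* a^* 1^* 0^*$ isolates exactly a set whose non-context-freeness reduces (possibly up to reversal, which preserves context-freeness) to that of $H'$.

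Since the class of linear languages is closed under intersection with regular languages and is contained in the class of context-free languages, a linear $[L]_{\hpdblino}$ would force this intersection to be context-free, contradicting Lemma~\ref{lemma:H-lin-not-context-free}. The same argument applies verbatim to $[L]_{\hpdblinp}$ because each $w \in L$ carries only one $(\$,\$)$-bracketed hairpin position, so the parallel operator collapses to the 1-step operator on $L$.

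The main obstacle I anticipate is the structural bookkeeping in the second step: certifying that the stability constraint under these parameters, together with the forced placement of the $(\$,\$)$ context, produces exactly the non-context-free pattern inherited from $H'$, without admitting spurious hairpins that would enlarge the image and prevent the regular filter from isolating the non-context-free portion. Once that alignment is established, the topological closure arguments (LIN closed under intersection with regular, LIN contained in CF) are routine and deliver the contradiction.
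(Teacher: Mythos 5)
Your proposal is essentially the paper's own argument: the paper uses exactly the same witness language $L = \{0^i1^ka^n0^s1^t\$1^t0^sa^n1^k0^i\$ \mid i,k,n,s,t\in\N_0\}$, the same context $C=\{(\$,\$)\}$ with gap bound $0$ and parameters $d_1=d_2=1$, $d=0$, the same regular filter, and the same reduction to the non-context-freeness of $H'$ from Lemma~\ref{lemma:H-lin-not-context-free}, including the observation that the parallel case collapses to the 1-step case because each word admits only one deletion site. No substantive difference to report.
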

\vspace{2mm}

In contrast, we can show that the class of regular languages is effectively closed under both of these operations for arbitrarily set parameters. 
We need the following lemma.

\vspace{2mm}
\begin{lemma}\label{lemma:linear-hpd-regular-language-applicability-decidable}
    For a regular language $L$ and parameters $d_1, d_2, d$, we can decide whether there exists $w\in L$ such that $w\ \hpdblin_{(d_1, d_2, d)}\ \varepsilon$.
\end{lemma}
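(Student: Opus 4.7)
The plan is to recast the question as an emptiness problem for the intersection $L \cap H_{\mathrm{lin},(d_1,d_2,d)}$ and invoke the machinery of $1$-reversal-bounded NPCMs developed in the previous subsection. The condition $w\,\hpdblin_{(d_1,d_2,d)}\,\varepsilon$ says precisely that $w$ itself is a stable hairpin in the linear loop penalty model, i.e., $w \in H_{\mathrm{lin},(d_1,d_2,d)}$. So the question becomes: is $L \cap H_{\mathrm{lin},(d_1,d_2,d)} \neq \emptyset$?

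First I would take the $1$-reversal-bounded NPCM $A_H$ for $H_{\mathrm{lin},(d_1,d_2,d)}$ already supplied by Lemma~\ref{lemma:hlinNPCM}. Since $L$ is regular, fix any DFA $A_L$ accepting it. A routine product construction then yields a machine $A$ that simultaneously simulates $A_H$'s pushdown and counter while tracking $A_L$'s finite state on the side. Because $A_L$ touches neither the stack nor the counter, the reversal bound on the counter of $A_H$ is inherited by $A$, so $A$ is again a $1$-reversal-bounded NPCM, and by construction $L(A) = L \cap H_{\mathrm{lin},(d_1,d_2,d)}$.

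Finally, by the classical result of Ibarra~\cite{Ibarra78}, emptiness is decidable for the class of $1$-reversal-bounded NPCMs. Running this procedure on $A$ decides whether there exists $w \in L$ with $w\,\hpdblin_{(d_1,d_2,d)}\,\varepsilon$, which completes the proof.

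The main technical obstacle, namely capturing both the hybridization constraint $x \leftrightarrow \theta(x)$ (which inherently requires a pushdown stack) and the linear stability inequality $d_1|\ell|-d_2|x|\le d$ (which requires comparing lengths, i.e., a counter) within a model that still has a decidable emptiness problem, has already been overcome by Lemma~\ref{lemma:hlinNPCM}. Everything else reduces to the standard product construction of an automaton with a DFA and to the well-known decidability results of Ibarra. The only subtlety worth double-checking is that the product construction indeed preserves the $1$-reversal bound on the counter, which follows immediately from the fact that the DFA component issues no counter operations.
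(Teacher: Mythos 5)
Your proposal is correct and follows essentially the same route as the paper: reduce the question to emptiness of $L \cap H_{{\rm lin},(d_1,d_2,d)}$, use Lemma~\ref{lemma:hlinNPCM} to place $H_{{\rm lin},(d_1,d_2,d)}$ in the class of $1$-reversal-bounded NPCM languages, and conclude via closure of that class under intersection with regular languages together with Ibarra's decidability of emptiness. The only difference is that you spell out the product construction explicitly where the paper cites the closure property, which is a harmless elaboration.
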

\begin{proof}
    By Lemma~\ref{lemma:hlinNPCM}, we know that $H_{{\rm lin}, (d_1, d_2, d)}$ is in the class of languages accepted by a 1-reversal-bounded NPCM. 
    This class is closed under intersection with regular languages~\cite{DALEY200319} and emptiness is decidable in it~\cite{Ibarra78}. 
    Hence, one can effectively check $H_{{\rm lin}, (d_1, d_2, d)} \cap L\neq \emptyset$, which is equivalent to the question in the statement.
\end{proof}

Let us now prove the closure property. 

\vspace{2mm}
\begin{proposition}\label{lemma:linear-hairpin-deletion-regular-languages-closed}
    For a regular language $L$, a finite set of contexts $C$, and parameters $d_1, d_2, d$, both languages $[L]_{\hpdblino}$ and $[L]_{\hpdblinp}$ are regular,  and finite automata accepting them can be constructed from a finite automaton for $L$.
\end{proposition}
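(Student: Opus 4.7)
The plan mirrors Proposition~\ref{prop:regclosure}: take an NFA $M = (Q,\Sigma,q_0,\delta,F)$ for $L$, identify each state pair $(p,q)$ for which some word leading $p$ to $q$ in $M$ is a valid deletable factor, and install $\varepsilon$-shortcuts for those pairs. The twist is that the set of deletable factors is no longer regular (not even context-free, by Lemma~\ref{lemma:H-lin-not-context-free}), so the per-pair test must be lifted to the level of NPCM(1), in the spirit of Lemma~\ref{lemma:linear-hpd-regular-language-applicability-decidable}.

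Let $g$ denote the gap bound. For each $(\alpha,\beta)\in C$ define
\[
    D_{\alpha,\beta} = \{\alpha x \ell \theta(\alpha x) z \beta \mid x,\ell\in\Sigma^*,\ z\in\Sigma^{\leq g},\ \alpha x\ell\theta(\alpha x)\in H_{{\rm lin},(d_1,d_2,d)}\},
\]
and set $D_C = \bigcup_{(\alpha,\beta)\in C} D_{\alpha,\beta}$. First I would build an NPCM(1) for $D_C$ by extending the construction from Lemma~\ref{lemma:hlinNPCM}: initialize the counter at $d$; deterministically consume the fixed prefix $\alpha$ while pushing $\theta$ of each symbol and adding $d_2$ per letter to the counter; continue non-deterministically through $x$ in the same way; switch to the loop phase (reading letters while subtracting $d_1$ per letter, with the counter forbidden to go below zero); switch to the matching phase (popping and comparing the stack); consume at most $g$ arbitrary letters using finite state; and finally consume the fixed suffix $\beta$. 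Pipelining the phases in this order keeps the counter monotone within each phase and 1-reversal-bounded in total, while the stack is handled with free reversals as NPCM permits; finite union over $C$ preserves NPCM(1).

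For each pair $(p,q)\in Q\times Q$ the regular language $L_{p,q} = \{w\mid q\in\hat\delta(p,w)\}$ can be intersected with the NPCM(1) for $D_C$, and the result tested for emptiness---both effectively by~\cite{Ibarra78}. Collect those pairs for which the intersection is non-empty into $E\subseteq Q\times Q$. For $[L]_{\hpdblino}$, I would take two disjoint copies $M_1, M_2$ of $M$, declare the $M_1$-copy of $q_0$ the start state and the $M_2$-copies of $F$ the accepting states, and add a cross $\varepsilon$-transition from $p$ in $M_1$ to $q$ in $M_2$ for each $(p,q)\in E$; the $\varepsilon$-closure construction then yields an NFA for the set of words obtainable by exactly one lariat deletion. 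For $[L]_{\hpdblinp}$ the same construction works, augmented with intra-$M_2$ $\varepsilon$-transitions from $p$ to $q$ for each $(p,q)\in E$: the cross transition forces at least one deletion, while the intra-$M_2$ shortcuts model the further parallel ones, automatically non-overlapping because each shortcut resumes from the post-deletion state.

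The main obstacle is making the NPCM(1) for $D_C$ honest about the 1-reversal bound on its counter: the pipelined order above is exactly what guarantees that the counter first only grows (throughout the stem $\alpha x$) and then only shrinks (throughout the loop $\ell$), with no further use during matching, gap, and suffix phases. This is the hinge on which Ibarra's semilinearity and decidability results turn; once it is in place, the NFA constructions for both operations are routine adaptations of the $\varepsilon$-shortcut trick from Proposition~\ref{prop:regclosure}.
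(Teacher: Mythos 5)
Your proposal is correct and follows essentially the same route as the paper: for each state pair $(p,q)$, test emptiness of $L_{pq}$ intersected with the NPCM(1)-recognizable set of deletable factors (via Lemma~\ref{lemma:hlinNPCM} and Ibarra's results), then install $\varepsilon$-shortcuts, using a second copy of the automaton to cap the number of deletions at one for $[L]_{\hpdblino}$ and a single layer of non-iterated shortcuts for $[L]_{\hpdblinp}$. You are in fact somewhat more explicit than the paper, which delegates the per-pair test to Lemma~\ref{lemma:linear-hpd-regular-language-applicability-decidable} without spelling out how the contexts and the gap bound are folded into the NPCM(1).
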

\begin{proof}
    Let $\theta$ be an antimorphic involution on $\Sigma$ and $g$ be a gap bound. 
    Similar to the proof of Proposition \ref{prop:regclosure}, let $L$ be given by some NFA $A = (Q,\Sigma,q_1,\delta,F)$. Assume w.l.o.g. that $Q = \{q_1, q_2, ... , q_m\}$. For each pair of states $p,q\in Q$, define the language of words from the state $p$ to the state $q$ as $L_{pq} = \{\ w\mid q\in\delta(p,w)\ \}$. Construct a new automaton $B$ which consists of $A$ and a copy of $A$, called $A' = (Q',\Sigma,q_1',\delta',F)$, for which we have that $Q' = \{q_1', q_2', ... , q_m'\}$ and each transition in $\delta'$ connects the elements of $Q'$ as $\delta$ connects the elements in $Q$. Then, if for some $w\in L_{pq}$ we have that $w \ \hpdblin\ \varepsilon$ (decidability shown in Lemma \ref{lemma:linear-hpd-regular-language-applicability-decidable}), then add an $\varepsilon$-transition between $p$ and $q'$ from the original automaton $A$ to the copy $A'$. Then, if between two states a hairpin can be removed according to the set parameters, then we can move to the copy of the second state. From there on, we have the exact behavior as in $A$, just without any other $\varepsilon$-transition that represents a contextual lariat deletion step. As this can be done for all state-pairs, we can choose arbitrarily when to do a valid contextual lariat deletion step. Thus, the language of $B$ is exactly the 1-step contextual lariat deletion set $[L]_{\hpdblino}$. So, $[L]_{\hpdblino}$ is a regular language. 
    The same can be done analogously for all other energy models.
    To adapt this proof for parallel contextual lariat deletion, we can just omit the construction of the copied automaton $A'$ and stay in the adapted automaton $B$. That way, any parallel deletion may be applied while reading the word.
\end{proof}

\begin{remark}
In contrast to the result for iterated bracketed contextual deletion, we do not use the newly created $\epsilon$-transitions to obtain any more transitions between states. That way, we can represent parallel contextual lariat deletion without accidentally considering an iterative variant of the operation. If, however, we considered iterated contextual lariat deletion, we may obtain the same closure result as before by iteratively allowing for newly created $\epsilon$-transitions to be used to obtain even more.\end{remark}\vspace{2mm}

A final question that can be asked is whether we can obtain any result that involves undecidability using bounded contextual lariat deletion over the linear energy model, preferably 1-step or parallel bounded contextual lariat deletion. Actually, starting from some language $L$ such that $L = L(A)$ for some NPCM(1) $A$, it can be shown that the applicability of contextual lariat deletion under the linear energy model and the question whether $\varepsilon \in [L]_{\hpdblino}$ or $[L]_{\hpdblinp}$ is generally undecidable. Consider the following reduction from the Post Correspondence Problem.

\vspace{2mm}
\begin{theorem}
    Given some NPCM(1) $A$, there exist parameters for linear  contextual lariat deletion such that it is undecidable to answer whether 
    \begin{itemize}
        \item $\varepsilon\in [L]_{\hpdblino}$ or $\varepsilon\in [L]_{\hpdblinp}$ as well as
        \item $[L]_{\hpdblino} = L(A)$ or $[L]_{\hpdblinp} = L(A)$, i.e., whether no contextual lariat deletion under the linear energy model can be applied.
    \end{itemize}
\end{theorem}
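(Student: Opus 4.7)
The plan is a polynomial reduction from the Post Correspondence Problem (PCP) that settles both items of the statement simultaneously. I fix the linear-model parameters $d_1 = d_2 = 1$, $d = 0$ (so a hairpin is stable iff its stem is at least as long as its loop), the single context $C = \{(\#, \$)\}$, and gap bound $g = 0$. Given a PCP instance $P = ((u_1, v_1), \ldots, (u_n, v_n))$ over $\{0, 1\}$, I build an NPCM(1) $A$ whose language $L = L(A)$ admits a $1$-step lariat deletion to $\varepsilon$ iff $P$ has a solution. Because every applicable deletion on words of $L$ consumes the entire word, $[L]_{\hpdblino}$ is either $\{\varepsilon\}$ or $\emptyset$, so the condition ``no lariat deletion is applicable'' is equivalent to $\varepsilon \notin [L]_{\hpdblino}$; this reduces both items to PCP-solvability.

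The alphabet is $\Sigma = \{0, 1, \bar 0, \bar 1, m, \bar m, \#, \bar\#, \$\}$ with antimorphic involution $\theta$ that swaps each symbol with its barred partner and fixes $\$$. Set
\[
L = \bigl\{\,\# \cdot u_{i_1}\cdots u_{i_k} \cdot m^{|v_{i_1}\cdots v_{i_k}|+1} \cdot \theta(v_{i_1}\cdots v_{i_k}) \cdot \bar\# \cdot \$ \,\bigm|\, k \ge 1,\ i_j \in [n]\,\bigr\}.
\]
An NPCM(1) $A$ accepting $L$ first reads $\#$, then nondeterministically pushes indices $i_j$ onto the pushdown while reading each $u_{i_j}$ and, after each push, executes $|v_{i_j}|$ counter-incrementing $\varepsilon$-moves (plus one extra increment when this phase ends); it then decrements the counter once per $m$ read until the counter reaches $0$; finally it pops each stored $i_j$ to verify the next $\theta(v_{i_j})$ before reading $\bar\#\$$. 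The counter performs a single reversal, so $A$ is NPCM(1).

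The core step is to show $\varepsilon \in [L]_{\hpdblino}$ iff $P$ has a solution. Since $\#$ and $\$$ occur uniquely at the extreme positions of each $w \in L$ and the gap is $0$, any applicable deletion consumes all of $w$. Writing $U = u_{i_1}\cdots u_{i_k}$, $V = v_{i_1}\cdots v_{i_k}$, $M = |U|$, $N = |V|$, the equation $w = \#\,x\,\ell\,\theta(\# x)\,\$$ gives $|\ell| = M + 2N + 1 - 2|x|$, and stability rewrites to $|x| \ge (M + 2N)/3$. Disjointness of the three character classes $\{0, 1, \bar 0, \bar 1\}$, $\{m, \bar m\}$ and $\{\#, \bar\#\}$ forces $|x| \le \min(M, N)$ and pointwise yields $U[1..|x|] = V[1..|x|]$; a short case split then shows that $(M + 2N)/3 \le |x| \le \min(M, N)$ forces $M = N$ and $|x| = M$, so $U = V$ is a PCP solution. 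Conversely, a PCP solution yields the witness $w = \# U m^{M+1} \theta(U)\bar\#\$$ with stem $\# U$ and loop $m^{M+1}$ satisfying $|\ell| = M + 1 = |\# U|$ exactly. Since parallel deletion contains $1$-step and no two non-overlapping hairpins ever fit in a word of $L$, the equivalence transfers verbatim to $\hpdblinp$. The delicate point, and the main obstacle, is the choice of marker count $|V| + 1$: with only $|V|$ markers the corner case $N = M - 1$ and $V = U[1..N]$ admits a stable partial-stem hairpin producing $\varepsilon$ without a PCP solution, and the extra $m$ is precisely what destabilises every partial match while keeping the full match on the stability boundary.
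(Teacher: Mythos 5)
Your proposal is correct and follows essentially the same route as the paper: a reduction from PCP with parameters $d_1=d_2=1$, $d=0$, gap bound $0$, a single context wrapping the entire word, a middle block of symbols whose $\theta$-images never occur (forcing them into the loop), and the $1$-reversal counter tying the block length to the candidate solution so that stability forces a full-stem match, which happens iff the PCP instance is solvable. The only differences are cosmetic — the paper keys the counter to $|u_{i_1}\cdots u_{i_m}|$ and uses an identity-like involution on $\{0,1\}$ with reversed $y$-blocks, whereas you use a barred copy of the alphabet and the $|V|+1$ marker count with the inequality $(M+2N)/3\le |x|\le\min(M,N)$ to force $M=N=|x|$ — and your analysis of the off-by-one corner case is a careful touch the paper achieves instead by pinning the loop length exactly to the maximal stem length.
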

\begin{proof}
    We reduce the Post Correspondence Problem (PCP) to the above mentioned problems. Let $\Sigma = \{0,1,\alpha,\beta,a,b\}$ and assume $\{\ (x_1,y_1),(x_2,y_2),...,(x_n,y_n) \mid x_i,y_i\in\{0,1\}^*, i\in[n]\ \}$ to be some PCP instance. We define the linear grammar $G = (V,\Sigma,S,P)$ over $\Sigma$ with non-terminals $V$, a start-symbol $S$ and productions $P$ by the following: Set $V = \{S,T,A\}$ and define the productions $S \rightarrow \alpha T \alpha \beta$, $T \rightarrow x_i T y_i^R$ for each $i\in[n]$, $T \rightarrow A$, $A \rightarrow aA$, and $A \rightarrow a$. Then, with $G$, all words that can be constructed have the form $$\ \alpha x_{i_1}x_{i_2}...x_{i_m}\ a^k\ y_{i_m}^R...y_{i_2}^Ry_{i_1}^R\ \alpha\beta$$ for any $k\in\N_0$ and any sequence $(i_1,i_2,...,i_m)\in[n]^m$ for $m\in\N$. Using the additional 1-reversal-bounded counter, we can check whether $|x_{i_1}x_{i_2}...x_{i_m}| = k-1$ and restrict all words produced by $G$ to exactly those. A translation to some NPCM(1) $A$ that has that language can be constructed immediately. Assume $\theta$ to be an antimorphic involution with $\theta(0) = 0$, $\theta(1) = 1$, $\theta(\alpha) = \alpha$, $\theta(\beta) = \beta$, $\theta(a) = b$, and $\theta(b) = a$. Additionally, assume the constant $d = 0$ and the factors $d_1 = d_2 = 1$. Finally assume the gap-length bound $\delta$ of size $0$. Assume the context-set $C = \{(\alpha,\beta)\}$. Any word in $L(A)$ has the form as described above. In particular, no word has the letter $b$ in it. Due to the context-set $C$, we know that linear contextual lariat deletion can only be applied to the whole word at once. Additionally, due to the definition of $\theta$, we know that the letter $a$ cannot occur in any part of the stem. Due to the fact that $|\alpha x_{i_1}...x_{i_m}| = k$, we know that linear contextual lariat deletion can only be applied if $|\alpha x_{i_1}...x_{i_m}| = |y_{i_m}...y_{i_1} \alpha|$ and thus only if $\alpha x_{i_1}...x_{i_m} = \theta(y_{i_m}...y_{i_1} \alpha)$. But then, we know that the sequence of indices $(i_1,...,i_m)$ is a valid solution for the PCP instance. Hence, we can only have $\varepsilon\in [L]_{\hpdblino}$ or $[L]_{\hpdblino} \neq L(A)$ (analogously $\varepsilon\in [L]_{\hpdblinp}$ or $[L]_{\hpdblinp} \neq L(A)$) if and only if the PCP instance has a solution. As this problem is undecidable, we know that answering these two questions must also be undecidable. This concludes this proof.
\end{proof}

This concludes the current results for contextual lariat deletion over a linear energy model. We showed that the class of regular languages is closed under 1-step bounded linear contextual lariat deletion, those of linear languages as well as context-free languages are not, that hairpins under a linear energy model can be modeled using non-deterministic pushdown automata augmented with a 1-reversal bounded counter, and that we can obtain undecidability results from those automata with bounded linear contextual lariat deletion. It is still open whether undecidability can be obtained from context-free or linear languages using contextual lariat deletion with a linear energy model.

\subsubsection{In the constantly-bounded loop length model}

The language of constant-bounded hairpins is linear and therefore can be accepted by push-down automata, the arguments in Lemma \ref{lemma:linear-hpd-regular-language-applicability-decidable} and by that also the ones from Proposition \ref{lemma:linear-hairpin-deletion-regular-languages-closed} can be reused to obtain the next result.

\vspace{2mm}
\begin{corollary}\label{corollary:regularclosed}
    Let $L$ be some regular language. Then, for any parameters, we have that $[L]_{\hpdbcono}$ and $[L]_{\hpdbconp}$ are regular languages.
\end{corollary}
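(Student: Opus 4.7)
The plan is to adapt the construction from Proposition~\ref{lemma:linear-hairpin-deletion-regular-languages-closed} almost verbatim, replacing the role of $H_{{\rm lin},(d_1,d_2,d)}$ by $H_{{\rm con},(c)}$. Everything in that earlier proof that was specific to the linear energy model was concentrated in Lemma~\ref{lemma:linear-hpd-regular-language-applicability-decidable}, namely the decidability of whether some word of a given regular language $L_{pq}$ can be fully deleted by one lariat step. The task therefore reduces to establishing the same decidability in the constantly-bounded model; the automaton surgery after that is identical.

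First I would fix a gap bound $g$ and form the set of one-step-deletable infixes
\[
D_C \;=\; \bigcup_{(\alpha,\beta)\in C}\, \bigl\{\, \alpha x\ell\theta(\alpha x)z\beta \;\big|\; \alpha x\ell\theta(\alpha x)\in H_{{\rm con},(c)},\; |z|\le g \,\bigr\}.
\]
By the lemma preceding Lemma~\ref{lemma:H-lin-not-context-free}, $H_{{\rm con},(c)}$ is linear, hence context-free. Since context-free languages are closed under left/right concatenation with the fixed words $\alpha$ and $\beta$, under concatenation with the finite language $\Sigma^{\le g}$, and under finite union, $D_C$ is context-free as well; one can even give an explicit PDA for it by augmenting the one-turn PDA for $H_{{\rm con},(c)}$ with a prefix scanner for $\alpha$, a suffix scanner for $\beta$, and a bounded counter for the gap.

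Next, for the NFA $A=(Q,\Sigma,q_1,\delta,F)$ recognising $L$ and each ordered pair $(p,q)\in Q^2$, the language $L_{pq}=\{w\mid q\in\hat\delta(p,w)\}$ is effectively regular. The intersection $L_{pq}\cap D_C$ is therefore context-free and its emptiness is decidable. This is exactly the ingredient that Lemma~\ref{lemma:linear-hpd-regular-language-applicability-decidable} supplied in the linear setting, and it is all that the rest of the argument consumes.

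With this test in hand, I follow Proposition~\ref{lemma:linear-hairpin-deletion-regular-languages-closed}. For $[L]_{\hpdbcono}$ I take a disjoint copy $A'$ of $A$ and, for every pair $(p,q)$ with $L_{pq}\cap D_C\neq\emptyset$, add an $\varepsilon$-transition from $p$ in $A$ to the copy $q'$ in $A'$; crossing into $A'$ forces exactly one deletion to occur. For $[L]_{\hpdbconp}$ I add the same $\varepsilon$-transitions within $A$ itself without ever recomputing $L_{pq}$ in the enlarged automaton, so that non-overlapping deletions may happen in parallel but no new ones are ever enabled iteratively. Removing the $\varepsilon$-transitions yields NFAs for $[L]_{\hpdbcono}$ and $[L]_{\hpdbconp}$, establishing regularity and effective constructibility. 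The main obstacle, and essentially the only non-routine point, is articulating $D_C$ as a context-free language so that the regular-vs-CF emptiness test applies; once this is in place, the remainder is a straight transcription of the linear-model proof.
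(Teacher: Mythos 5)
Your proposal is correct and follows essentially the same route as the paper: the paper likewise derives this corollary by noting that $H_{{\rm con},(c)}$ is linear (hence context-free), so the emptiness test of Lemma~\ref{lemma:linear-hpd-regular-language-applicability-decidable} goes through via CF-intersect-regular emptiness, and then the automaton construction of Proposition~\ref{lemma:linear-hairpin-deletion-regular-languages-closed} is reused verbatim. Your explicit articulation of the deletable-infix language $D_C$ is just a spelled-out version of what the paper leaves implicit.
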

\vspace{2mm}

Hence, we can focus our investigation on more expressive language classes. Most interestingly, it can be shown that context-free class is not only not closed under constantly-bounded contextual lariat deletion, but that we can also obtain an undecidable language by a context-free language.

\vspace{3mm}
\begin{theorem}\label{theorem:constant-bound-cf-to-undec}
    There exists a context-free language $L$, a context-set $C$, and parameters for constantly-bounded contextual lariat deletion, such that $[L]_{\hpdbconp}$ is an undecidable language.
\end{theorem}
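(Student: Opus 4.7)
The plan is to reduce the undecidable halting problem (of a fixed universal Turing machine $U$, or equivalently Post's Correspondence Problem) to the membership problem for $[L]_{\hpdbconp}$. The central observation is that for an antimorphic involution $\theta$ that fixes individual binary letters (so that $\theta(w)=w^R$ on binary subwords) together with a palindromic context $(A,B)$, a single application of contextual lariat deletion with constant loop bound $c$ deletes a factor of the form $\alpha x\,\ell\,\theta(\alpha x)\,z\,\beta$ exactly when the right half of the stem is the reflected image of the left half, with any mismatch absorbed into a constant-length loop. This is precisely the expressive strength needed to check PCP-style identities or TM-transition consistency.

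First I would handle a single PCP instance $\mathcal{P}=\{(u_1,v_1),\dots,(u_n,v_n)\}$: the linear (hence context-free) grammar $T\to A\,S\,A\,B$, $S\to u_i\,S\,v_i^R$, $S\to \$\$\$$ generates the words of the form $A\,u_{i_1}\cdots u_{i_k}\,\$\$\$\,v_{i_k}^R\cdots v_{i_1}^R\,A\,B$. With $C=\{(A,B)\}$, loop bound $c=3$, and gap bound $0$, such a word admits a parallel contextual lariat deletion to $\varepsilon$ iff $\theta(x)=v_{i_k}^R\cdots v_{i_1}^R$ for $x=u_{i_1}\cdots u_{i_k}$, which by $\theta(w)=w^R$ is equivalent to $u_{i_1}\cdots u_{i_k}=v_{i_1}\cdots v_{i_k}$, i.e., the PCP solution condition. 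To upgrade this per-instance statement to undecidability of a \emph{single} fixed language, I would reduce instead from the halting problem of a fixed universal Turing machine $U$: each transition of $U$ becomes a finite grammar rule that emits a constant-length hairpin encoding one computation step, so a candidate computation on input $w$ becomes a chain of such hairpins concatenated between a fixed prefix encoding $\langle w\rangle$ and a suffix marking acceptance, and a single parallel pass of lariat deletion collapses the chain exactly when every transition is valid.

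The main obstacle will be chaining the per-step hairpins so that consecutive configurations are forced to be consistent. Since a TM step modifies only a constant-size window around the head, each pair $(c_i,c_{i+1})$ can be encoded as a bracket $A\,c_i\,\ell\,\theta(c_{i+1})^R\,A\,B$ in which the CFG, using its single stack, forces the untouched tape positions of $c_i$ and $c_{i+1}$ to coincide outside a constant-size window, while the actual rewrite at the head is absorbed inside the loop $\ell$ of length at most $c$. Linking successive brackets so that the $c_{i+1}$ leaving one bracket coincides with the $c_{i+1}$ entering the next requires interlocking the brackets through a dedicated marker (generating $c_{i+1}$ between brackets and letting the hairpin verify it against the neighbour on both sides)---routine bookkeeping but delicate, since a single pushdown can only match one level of palindromic structure per bracket. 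Once this interlocking is in place, the entire chain is deletable in one parallel step iff the encoded computation is valid, so deciding membership in $[L]_{\hpdbconp}$ of the word $\langle w\rangle$ (padded by the grammar's fixed prefix/suffix) is equivalent to deciding halting of $U$ on $w$, and the theorem follows.
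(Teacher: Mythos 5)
Your high-level strategy---let a context-free grammar certify local transition validity, let the constant-loop hairpin certify a global equality, and reduce from the halting problem of a fixed machine---is the right one, and your observation that a single PCP instance only yields one bit rather than an undecidable language is correct. But the central step of your construction fails. You propose per-step brackets $A\,c_i\,\ell\,\theta(c_{i+1})^R\,A\,B$ in which ``the actual rewrite at the head is absorbed inside the loop $\ell$ of length at most $c$.'' A hairpin is $x\ell'\theta(x)$ with a \emph{single contiguous} loop sitting between the two halves of a \emph{perfectly} complementary stem; a mismatch in the middle of the stem is not absorbed---it truncates the stem at that position and forces everything from the mismatch inward, on both sides, into the loop. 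Since the head can sit anywhere in a configuration, the resulting loop has length proportional to the tape length, violating the constant bound. A constant-loop hairpin can therefore only ever verify \emph{exact} equality (up to $\theta$) of its two flanks, never a transition. The second point you defer as ``routine bookkeeping'' is exactly where the real difficulty lies: chaining consecutive brackets so that the copy of $c_{i+1}$ leaving one bracket agrees with the copy entering the next. Your ``interlocking through a dedicated marker'' would require hairpins straddling two brackets, which conflicts with the non-overlapping factorization demanded by parallel deletion, and you do not explain how unintended hairpins are excluded.

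The paper resolves both issues with a different decomposition. It uses one global hairpin over the whole word $w_0\#\alpha w_1\#\cdots w_n\,\$^k\,u_m\#\cdots\#u_1\alpha\beta$: the letter $\$'=\theta(\$)$ never occurs in $L$, so $\$^k$ can never lie in a stem and is forced to be the entire loop, exactly meeting the bound $k$; the stem is then forced to span the full flanks, yielding $w_j=u_j$ for all $j$ (and $m=n$) with no mismatch anywhere. Transitions are checked entirely by the PDA, in interleaved fashion ($w_{2i}\vdash w_{2i+1}^R$ on the left half, $u_{2i+1}\vdash u_{2i+2}^R$ on the right half), so that the equalities supplied by the single hairpin stitch the two half-checked chains into one complete valid computation ending in a final configuration. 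If you want to salvage a per-step-hairpin variant, you must likewise make every hairpin check a pure equality, push the transition check entirely into the grammar, and solve the cross-bracket chaining; the interleaving trick is the standard way to do that.
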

\begin{proof}
    Let $\Sigma = \{0,1,\#,\$,\$',\alpha,\beta\}$ be some alphabet and consider the context-set $C = \{(\alpha,\beta)\}$. Additionally, assume some constant upper bound $c\in\N$ and assume an upper bound of $0$ for the gap between the right context and the hairpin. Set the antimorphic involution $\theta$ such that $\theta(\ta) = \ta$ for all $\ta\in\Sigma\setminus\{\$,\$'\}$ and assume $\theta(\$) = \$'$ and vice versa. Set the constant loop bound to $k\in\N$.
    For an arbitrary TM $M$, construct the language $L$ consisting of words $w_0\#\alpha w_1\#\cdots w_n\$^k u_m\#\cdots u_2\#u_1\alpha\beta$, where each $w_i\in\{0,1\}^*$ and $u_j\in\{0,1\}^*$ for $i\in[n]$ and $j\in[m]$ is a binary configuration of $M$ and there is a valid transition of $M$ from $w_{2i}$ to $w_{2i+1}^R$ for each $i$, and there is a valid transition from $u_{2i+1}$ and $u_{2i+2}^R$ for each $i$, for some $m,n\in\N$. Furthermore, we require that $w_n$ and $u_m$ are final configurations, i.e., they contains a final state of $M$. This language is context-free, since a PDA can check pairs of adjacent configurations for correct transitions and can verify the regular condition that $w_n$ and $u_m$ are final configurations. Now, we can apply constantly-bounded contextual lariat deletion to obtain the language $[L]_{\hpdbconp}$. Notice that there is only one context $(\alpha,\beta)\in C$. Additionally, notice that there is only one occurrence of $\beta$. Additionally, notice that there is only pair of occurrences of the left context $\alpha$ that could form a hairpin. Hence, if $[L]_{\hpdbconp}$ contains a word with a removed hairpin, then it will be of the form $w_0\#$, for some $w_0\in\{0,1\}^*$. Also, we see that there are no occurrences of $\$'$ in any word in $L$. This prevents the factor $\$^k$ to be part of any stem. Hence, the factor $\$^k$ always forms a loop that reaches the loop length bound $k$. So, we know that $\alpha w_1 \# ... \# w_n = \theta(u_m \# ... \# u_1 \alpha)$ must form the corresponding stem of the hairpin. By the definition of $\theta$, this is only the case if $\alpha w_1 \# ... \# w_n = \alpha u_1 \# ... \# u_m$ and $m=n$. By the definition of $L$, we know that there is a valid transition between the configurations $w_{2i}$ and $w_{2i+1}^R$ as well as $u_{2i+1}$ and $u_{2i+2}^R$. As $w_j = u_j$, for all $j\in[n]$, if $\alpha w_1 \# ... \# w_n = \alpha u_1 \# ... \# u_m$, then there must also exist a valid transition between the configurations $w_{2i+1}^R$ and $w_{2i+2}$ as well as $u_{2i}^R$ and $u_{2i+1}$ (see Figure~\ref{fig:constant-bounded-cf-to-undec} for an illustration). Hence, using that argument inductively, there is a sequence of valid transitions between all configurations $w_0$ up to $w_n$ as well as $u_1$ up to $u_m$ (iterating between reversed and non-reversed encodings). Let $L_{reg}$ be the language represented by the expression $(0|1)^*\#$. By the previous observations, we know that the intersection $[L]_{\hpdbconp}\cap L_{reg}$ contains only words $w_0\#$, $w_0\in\{0,1\}^*$, where $w_0$ is the initial configuration of some valid computation of $M$. Hence, $[L]_{\hpdbconp}\cap L_{reg}$ is undecidable. By that, $[w]_{\hpdbconp}$ must be undecidable as well.
\end{proof}

\begin{figure}[h]
    \centering
    \includegraphics[scale=0.6]{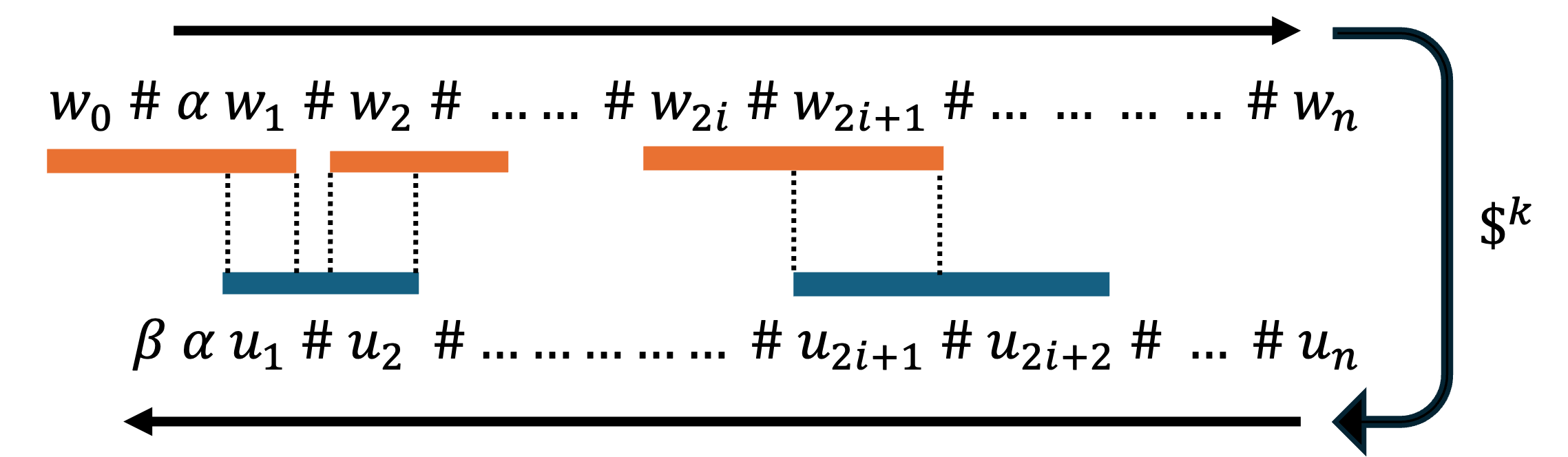}
    \caption{An illustration of the inductive argument in the proof of Theorem~\ref{theorem:constant-bound-cf-to-undec}. The arrows mark the removed hairpin. The dotted arrows mark equal configurations, in the case of a hairpin being formed. The colored bars mark configurations connected by valid transitions.}
    \label{fig:constant-bounded-cf-to-undec}
\end{figure}

In contrast to the linear energy model, this leaves us only with the question of what happens if we consider an intermediate language class such as linear languages. Whether that languages class is closed under constantly-bounded contextual lariat deletion or whether we can even obtain an undecidable language is still open. Clearly, we can construct linear languages to obtain the class of constant-bounded hairpins for some given parameters. But as the intersection of linear languages is not generally closed under linear languages, a closure under the operation of constantly-bounded contextual lariat deletion does not trivially follow. This concludes the results regarding this energy model.

\subsubsection{In the logarithmic loop penalty model}

Continuing with the logarithmic model, as in the case of the linear energy model, the following property can be shown.

\vspace{2mm}
\begin{lemma}\label{lemma:log-hpd-regular-language-applicability-decidable}
    For a regular language $L$, a finite set of contexts $C$, and parameters $d_1, d_2, d$, we can decide whether there exists $w\in L$ such that $w\ \hpdbloge_{[C], (d_1, d_2, d)}\ \varepsilon$.
\end{lemma}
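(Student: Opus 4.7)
The plan is to mirror the linear-case proof of Lemma~\ref{lemma:linear-hpd-regular-language-applicability-decidable}, but to replace the 1-reversal-bounded NPCM machinery (which no longer accepts $H_{{\rm log}}$, as witnessed by the non-context-freeness shown earlier) with a Parikh-image argument combined with an elementary exponential-versus-linear case analysis.

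First I would unfold the one-step operation: $w \hpdbloge_{[C], (d_1, d_2, d)} \varepsilon$ holds exactly when $w = \alpha x \ell \theta(\alpha x) z \beta$ for some $(\alpha, \beta) \in C$, a gap $z$ with $|z| \le g$, and the hairpin $\alpha x \ell \theta(\alpha x)$ is log-stable (with the convention that $|\ell|=0$ is stable). Using $\theta(\alpha x) = \theta(x)\theta(\alpha)$, this rephrases as $\alpha y' \theta(\alpha) z \beta \in L$ for $y' = x\ell\theta(x)$, subject to $d_1 \log|\ell| - d_2(|\alpha|+|x|) \le d$ or $|\ell|=0$. Since $C$ is finite and $\Sigma^{\le g}$ is finite, it suffices to answer the question after fixing each triple $(\alpha, z, \beta)$; the quotient $L'_{\alpha,z,\beta} = \{y' : \alpha y' \theta(\alpha) z \beta \in L\}$ is an effectively constructible regular language, and the remaining question is whether $L'_{\alpha,z,\beta}$ contains some $x\ell\theta(x)$ satisfying the stability inequality.

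Next, I would reduce that remaining question to a semilinear intersection problem. Introduce a disjoint copy $\hat\Sigma$ of $\Sigma$, the erasing morphism $\phi:(\Sigma\cup\hat\Sigma)^*\to\Sigma^*$ that strips the hats, and the context-free ``marked hairpin'' language $\hat H = \{x \hat\ell \theta(x) : x \in \Sigma^*, \hat\ell \in \hat\Sigma^*\}$. Then $\hat H \cap \phi^{-1}(L'_{\alpha,z,\beta})$ is context-free (intersection of context-free and regular), so by Parikh's theorem its Parikh image is an effectively computable semilinear subset of $\N_0^{2|\Sigma|}$. A linear projection sending the total count of unhatted letters to $2|x|$ and the total count of hatted letters to $|\ell|$ yields the semilinear set $S = \{(|x|, |\ell|) : x\ell\theta(x) \in L'_{\alpha,z,\beta}\}$ in $\N_0^2$.

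Finally, I would decide whether $S$ meets the stability region $R$ consisting of all $(n, m)$ with $m = 0$ or $d_1 \log m \le d + d_2(|\alpha|+n)$. Writing $S$ as a finite union of linear sets $\{(a_0, b_0)+\sum_i t_i (a_i, b_i) : t_i\in\N_0\}$, each component is disposed of by cases. If some period vector $(a_i, b_i)$ satisfies $a_i > 0$, then along the ray $t_i \to \infty$ (with other $t_j=0$) the first coordinate grows linearly, so the right-hand side $2^{(d+d_2(|\alpha|+a_0+t_ia_i))/d_1}$ grows strictly exponentially while $|\ell| = b_0 + t_i b_i$ grows at most linearly, so the stability inequality is satisfied for all sufficiently large $t_i$ and the answer is YES. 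Otherwise all $a_i = 0$, the first coordinate is fixed at $a_0$, and since every $b_i \ge 0$ the minimum second coordinate is $b_0$, so it suffices to check whether $b_0 = 0$ or $b_0 \le 2^{(d+d_2(|\alpha|+a_0))/d_1}$, a single constant-time comparison. The main technical obstacle is the first case, which must be stated carefully against the degenerate parameter choices $d_1 = 0$ (the inequality trivially holds, reducing everything to testing $S \neq \emptyset$) and $d_2 = 0$ (a fixed upper bound on $|\ell|$, decidable by standard linear-Diophantine reasoning inside each linear component); outside these corners, the exponential-versus-linear dominance closes the argument.
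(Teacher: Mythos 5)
Your proof is correct, but it takes a genuinely different route from the paper's. The paper argues via a small-witness/pumping argument: to test $H_{\log}\cap L'\neq\emptyset$ for a regular $L'$ recognized by a DFA with state set $Q$, it first pumps the loop down to length below $|Q|$ (shortening the loop only improves log-stability), then pumps the stem down to a prefix of length roughly $\log(3|Q|)$, which still compensates for the now constant-bounded loop; hence a stable hairpin of length $O(|Q|)$ exists whenever any exists, and brute-force enumeration up to that bound decides the question. Your route instead isolates the only data relevant to stability, namely the pair $(|x|,|\ell|)$, shows via the marked-alphabet trick and Parikh's theorem that the set of achievable pairs is an effectively computable semilinear set, and then decides its intersection with the stability region by noting that along any period with growing stem the bound $2^{(d+d_2(|\alpha|+n))/d_1}$ eventually swallows the linearly growing loop length, while for stem-constant components the region is downward closed in $|\ell|$ so only the minimal value $b_0$ need be checked. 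Both arguments are sound; the paper's yields an explicit witness-length bound that plugs directly into the automaton construction behind Corollary~\ref{corollary:reg}, whereas yours is arguably more robust and avoids the constant-chasing in the pumping step --- it would work verbatim for any stability predicate in $(|x|,|\ell|)$ that is downward closed in $|\ell|$ and eventually satisfied as $|x|\to\infty$. One inaccuracy in a non-load-bearing aside: the non-context-freeness of $H_{\log}$ does not by itself rule out acceptance by a $1$-reversal-bounded NPCM, since those accept non-context-free languages such as $H_{\rm lin}$; the correct obstruction is that NPCM(1) languages have semilinear Parikh images while $H_{\log}$ does not.
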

\begin{proof}
 Recall the procedure from Algorithm~\ref{algo:inverse} for constructing an NFA that accepts words obtainable from a given starting word through contextual lariat deletion. The input to that algorithm was a word, but we immediately constructed a DFA from it and looked for pairs of states $p,q$ such that some hairpin (in that case, simply a word that starts with a left context and ends in the corresponding right context) is accepted starting from $p$ finishing in $q$. Between each of those pairs, we added an $\varepsilon$-transition to accept words obtained by removing that hairpin. Here we use the same basic idea, but we have to account for the fact that hairpins have a more complicated structure in the logarithmic model, so we need to adapt the step of deciding for a pair of states $p,q$ whether there is any hairpin accepted between them. This means that we need to be able to test $H\cap L'\neq \emptyset$ for the set of hairpins $H$ and a regular language $L'$. Under the logarithmic energy model, each hairpin has a loop that contributes $\Theta (\log n)$ to the energy, where $n$ is the loop length. If there exists some hairpin $x\ell \theta(x)\in H\cap L'$ with $|\ell|\geq |Q|$, where $|Q|$ is the number of states in the minimal DFA accepting $L'$, then by a usual pumping argument there exists some $\ell'$ with $|\ell'|<|Q|$ such that $x\ell'\theta(x)\in L'$ and because the loop in the latter is shorter, we also have $x\ell'\theta(x)\in H$. Similarly, if $|x|\geq N+|Q|$, for some $N$, then we can write $x=x_1x_2$ with $|x_1|=N$ and we can again use a pumping argument to obtain that there exist some $x_2', x_2''$ with $|x_2'|<|Q|$ and $|x_2''|<|Q|$ such that $x_1x_2'\ell'x_2''\theta (x_1)\in L'$. Setting $N=\log(3|Q|)$, we get that if $H\cap L'$ is not empty then it must contain a word of length at most $\log(3|Q|)+3(|Q|-1)$. For any given word $w$ it is easy to check whether it is in $L'$ and whether it is a valid hairpin for the given parameters. Testing $w\in H\cap L'$ for all words $w$ up to the aforementioned upper bound is therefore effective, which means that the lemma holds.
\end{proof}

From here, again following the argument from Proposition~\ref{lemma:linear-hairpin-deletion-regular-languages-closed} we get the following.

\vspace{2mm}
\begin{corollary}\label{corollary:reg}
    For a regular language $L$ and any parameters, both languages $[L]_{\hpdblogo}$ and $[L]_{\hpdblogp}$ are regular.
\end{corollary}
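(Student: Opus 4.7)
The plan is to mirror the construction used in the proof of Proposition~\ref{lemma:linear-hairpin-deletion-regular-languages-closed} for the linear energy model, since the only part of that argument that depended on the energy model was the decidability claim from Lemma~\ref{lemma:linear-hpd-regular-language-applicability-decidable}. The role of that decidability statement is now played by Lemma~\ref{lemma:log-hpd-regular-language-applicability-decidable}, so the same finite-automaton construction goes through essentially unchanged.

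Concretely, starting from an NFA $A=(Q,\Sigma,q_1,\delta,F)$ for $L$, I would first form the two-layer automaton $B$ consisting of $A$ together with a disjoint copy $A'=(Q',\Sigma,q_1',\delta',F')$ of $A$ whose initial state is unreachable on its own. For each ordered pair $(p,q)\in Q\times Q$ consider the regular language $L_{pq}=\{w\mid q\in\hat\delta(p,w)\}$ of words that take $p$ to $q$ in $A$. I would then, for each such pair, test whether $L_{pq}$ contains a single word $w$ with $w\ \hpdblog_{[C],(d_1,d_2,d)}\ \varepsilon$; this is exactly the problem shown to be decidable in Lemma~\ref{lemma:log-hpd-regular-language-applicability-decidable}. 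Whenever the test succeeds, insert an $\varepsilon$-transition from $p\in Q$ to the corresponding state $q'\in Q'$. The resulting machine accepts every word of the form $uv$ such that there is a factorization $uwv$ accepted by $A$ with $w$ the lariat excised in one contextual lariat deletion step, hence it accepts precisely $[L]_{\hpdblogo}$. For the parallel variant $\hpdblogp$ one skips the construction of the copy $A'$ and adds the corresponding $\varepsilon$-transition from $p$ to $q$ inside $A$ itself, permitting any non-overlapping collection of such deletions to be applied during a single run, as in the proof of Proposition~\ref{lemma:linear-hairpin-deletion-regular-languages-closed}.

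Since $Q$ is finite, only finitely many state pairs need to be inspected, and for each pair the required decision is effective by Lemma~\ref{lemma:log-hpd-regular-language-applicability-decidable}, so both automata can actually be constructed from $A$. The main obstacle is thus not the automaton construction itself but hidden entirely inside the decidability lemma: the set of stable logarithmic hairpins is not context-free in general (as noted just before the lemma in the proof of its linear analogue), and one must rely on the pumping-type bound from Lemma~\ref{lemma:log-hpd-regular-language-applicability-decidable} to replace any putative stable hairpin in $L_{pq}$ by one of bounded size and check emptiness by exhaustive enumeration. Once that decidability is in hand, regular-language closure is immediate, and the same remark as in Proposition~\ref{lemma:linear-hairpin-deletion-regular-languages-closed} applies: one must not reuse the newly inserted $\varepsilon$-transitions when computing further ones, so that the construction captures a single step (or a parallel non-nested batch) rather than the iterated closure.
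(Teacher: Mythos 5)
Your proposal is correct and follows essentially the same route as the paper: the paper likewise obtains Corollary~\ref{corollary:reg} by rerunning the two-layer automaton construction of Proposition~\ref{lemma:linear-hairpin-deletion-regular-languages-closed}, with Lemma~\ref{lemma:log-hpd-regular-language-applicability-decidable} supplying the decidability of the $\varepsilon$-transition test in place of Lemma~\ref{lemma:linear-hpd-regular-language-applicability-decidable}. Your closing caveat about not reusing the newly inserted $\varepsilon$-transitions matches the remark the paper makes after the linear-model proposition.
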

\vspace{2mm}

In exactly the same way as for Proposition~\ref{prop:linear-haipin-deletion-liner-languages-not-closed}, we obtain the following result.

\vspace{2mm}
\begin{corollary}\label{corollary:notclosedcfl}
    There exist a linear language $L$ and parameters such that neither $[L]_{\hpdblogo}$ nor $[L]_{\hpdblogp}$ is context-free.
\end{corollary}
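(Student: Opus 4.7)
The plan is to mimic exactly the construction used for Proposition~\ref{prop:linear-haipin-deletion-liner-languages-not-closed}, substituting the non-context-freeness of the stable hairpin set in the logarithmic model for that of the linear model. Specifically, I take the same linear language
\[
L = \{\ 0^i1^ka^n0^s1^t\$1^t0^sa^n1^k0^i\$ \mid i,k,n,s,t\in\N_0\},
\]
the same antimorphic involution $\theta$ that swaps $0 \leftrightarrow 1$ and $a \leftrightarrow b$, the single context $C = \{(\$,\$)\}$, gap bound $0$, and parameters $d_1 = d_2 = 1$, $d = 0$ in the logarithmic loop penalty model, so that a hairpin $x\ell\theta(x)$ is stable iff $\log(|\ell|) \le |x|$.

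The next step is to observe that in every word of $L$ there is exactly one position where a stable (or indeed any) lariat deletion can be performed: the factor flanked by the two $\$$ symbols. Hence the $1$-step and parallel variants coincide on $L$, and $[L]_{\hpdblogo} = [L]_{\hpdblogp}$ consists precisely of those words $0^i1^ka^n0^s1^t\$\$$ such that $0^i1^ka^n0^s1^t$ is a stable hairpin in the logarithmic model whose stem begins and ends the word (after stripping the trailing $\$\$$). Intersecting with the regular language $0^*1^*a^*0^*1^*\$\$$ and then erasing the harmless suffix $\$\$$ via a rational transduction yields exactly the language $H'$ of stable logarithmic hairpins in the format $0^+1^+a^+0^+1^+$ that was shown not to be context-free in the lemma preceding Lemma~\ref{lemma:hlinNPCM} (using the pumping word $w = 0^p1^pa^{2^{2p}}0^p1^p$). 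Since the context-free languages are closed under intersection with regular languages and under removal of a fixed suffix, if either $[L]_{\hpdblogo}$ or $[L]_{\hpdblogp}$ were context-free, then $H'$ would be as well, a contradiction.

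The only mildly delicate point to check is that the format language $0^*1^*a^*0^*1^*\$\$$ really isolates the stem-matches $0^i = \theta(1^i)$ and $1^k = \theta(0^k)$, and the loop $a^n$ entirely from $\{a,b\}$, so that the stability condition reduces to comparing $\log(n)$ against the stem length $i+k$; this is exactly the setting in which the logarithmic non-context-freeness lemma was proved. The main obstacle, if any, is simply guaranteeing that my chosen parameters force a matching pattern between $L$ and the non-context-free hairpin language, but since $\theta$, the contexts, and the parameters are identical to those used in the linear-model argument (and the non-context-freeness proof has an exact analogue in the logarithmic model), this transfer is essentially mechanical.
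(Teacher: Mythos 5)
Your proposal matches the paper's argument exactly: the paper proves this corollary by reusing the construction from Proposition~\ref{prop:linear-haipin-deletion-liner-languages-not-closed} verbatim, substituting the non-context-freeness of $H_{{\rm log},(1,1,0)}\cap 0^+1^+a^+0^+1^+$ (the lemma with the pumping word $0^p1^pa^{2^{2p}}0^p1^p$) for the linear-model lemma, just as you do. The only nitpick is that under the paper's definition of contextual lariat deletion the contexts $\$$ and $\$$ are excised together with the lariat, so the surviving word is $0^i1^ka^n0^s1^t$ with no trailing $\$\$$; this merely removes the need for your final suffix-erasing transduction and does not affect the argument.
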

\vspace{2mm}

In contrast to the linear energy model, it is still open whether we can obtain some undecidability result regarding the applicability of logarithmic-bounded hairpins. Also, it is still open whether we can obtain an undecidable language by applying logarithmic contextual lariat deletion to a linear- or context-free language. 

\vspace{2mm}

This concludes all current results regarding contextual lariat deletion under different energy-model assumptions as well as concludes the current selection of results regarding the computational power of all models of co-transcriptional splicing introduced in this paper.

    \section{Conclusion}
    \label{sect:conclusion}
This paper considered different approaches to formalize the process of co-transcriptional splicing in terms of formal language theory. For that, new context based deletion operations have been introduced. First, a version solely relying on the existence of contexts-called bracketed contextual deletion- and, second, a version relying on the formation of stable hairpins under different energy model considerations-called contextual lariat deletion- have been investigated. For bracketed contextual deletion, an iterated and a greedy version have been examined. For contextual lariat deletion, a linear energy model, a model assuming a constant bound for the loop-length, and a logarithmic energy model have been considered for languages obtained by applying single steps of contextual lariat deletion as well as languages obtained by applying contextual lariat deletion in a parallel manner.

First, the practically motivated Template Constructability Problem has been considered for bracketed contextual deletion as well as for contextual lariat deletion. Then, the computational power of both models have been thoroughly investigated. For that, language closure properties and connections to undecidable language classes have been drawn. 

In addition to these findings, we highlight several open questions that remain unresolved regarding both, the Template Constructibility Problem and the computational power of bracketed contextual deletion as well as contextual lariat deletion. These open questions pave the way for future investigations, aiming to further deepen our understanding of the complexities and potential applications of co-transcriptional splicing within formal models.

\subsection{Open Questions Regarding The Template Constructability Problem}

For the cases of greedy and non-greedy iterated bracketed contextual deletion, a comprehensive picture of the underlying complexities of all considered decision problems could be established. Verification of set obtainability from a given template as well as the rather restrictive Exact Template Constructability Problem have been shown to be solvable in polynomial time. The generalized Template Constructability Problem has been shown to be NP-complete in general. What remains to be proven, though, is whether Problem~\ref{problem:main-construction} is also NP-hard for binary and ternary alphabets.

\vspace{2mm}
\begin{question}
    Is the Template Constructability Problem $NP$-hard for alphabets $\Sigma$ with $|\Sigma| = 2$ or $|\Sigma| = 3$ for bracketed contextual deletion?
\end{question}
\vspace{2mm}

Very similar results could be shown for contextual lariat deletion under all three considered energy models. Here, the question regarding the NP-hardness of the Template Constructability Problem remains open for binary and ternary alphabets in the case of sets obtained under parallel deletion. For sets obtained only by the 1-step operation, however, it remains open whether the Template Constructability Problem might even be in $P$.

\vspace{2mm}
\begin{question}
    Is the Template Constructability Problem in $P$ under 1-step contextual lariat deletion? Is the Template Constructability Problem in $P$ under parallel contextual lariat deletion for alphabets $\Sigma$ with either $|\Sigma| = 2$ or $|\Sigma| = 3$? 
\end{question}

\subsection{Open Questions Regarding the Computational Power of Co-Transcriptional Splicing}

For iterated bracketed contextual deletion, a thorough picture of (non-)closure properties regarding the greedy variant could be obtained. For the non-greedy version that involves randomly skipping right contexts, however, many results could not be replicated, yet. Hence, for the classes of context-free languages $L_{cf}$, linear languages $L_{lin}$, and 1-counter languages $L_{1c}$, we propose the following open question.

\vspace{3mm}
\begin{question}
    Let $\mathcal{L}\in\{L_{cf}, L_{lin}, L_{1c}\}$ be one of these language classes. Is $\mathcal{L}$ closed under iterated bracketed contextual deletion?
\end{question}
\vspace{3mm}

For contextual lariat deletion, according to the literature on observed hairpins during co-transcriptional splicing, different energy models restricting the size of the loop of unbounded bases were considered. Those are either a linear bound of the loop with respect to the length of the stem, a constant size bound of the loop, or a logarithmic contribution of unbounded bases, resulting in an exponential bound of the length of the stem for the hairpin to be stable.

For the model that assumes a linear contribution of free energy regarding unbounded bases, i.e., the loop, similar to the results for iterated right-greedy bracketed contextual deletion, a thorough picture of (non-)closure properties was obtained and the undecidability of some decision problems has been established. But, even though we obtain some undecidability result, what is still left to be shown is whether we can obtain some undecidable language from, e.g., the sets obtained by 1-step or parallel contextual lariat deletion from a context-free or linear language in this setting.

\vspace{3mm}
\begin{question}
    Let $L$ be some context-free or linear language. Does there exist a configuration of parameters for $H_{lin}$ such that $[L]_{\hpdblino}$ or $[L]_{\hpdblinp}$ are undecidable languages?
\end{question}
\vspace{3mm}

For the model assuming a constant upper bound for the length of the loop of unbounded bases, similar (non-)closure results in comparison to the linear model have been established. What is still open, though, is whether linear languages are closed under the 1-step or parallel sets in this setting. Also, in contrast to the linear model, no undecidability results were established so far.

\vspace{3mm}
\begin{question}
    Let $\mathcal{L}_{lin}$ be the class of linear languages. Is $\mathcal{L}_{lin}$ closed under 1-step or parallel contextual lariat deletion assuming a constant bound on the loop length?
\end{question}
\vspace{3mm}

For the logarithmic energy model, similar non-closure properties to the ones of the linear energy model were obtained by analogous arguments. The existence of undecidability results remains an open question, though.

\vspace{3mm}
\begin{question}
    Does there exist some linear- or context-free language $L$ paired with parameters for $H_{log}$ such that $[L]_{\hpdblogo}$ or $[L]_{\hpdblogp}$ are undecidable?
\end{question}

Notice that any result regarding the closure of 1-counter languages under any energy model also remain an an open question for the moment.

\subsection{Final Remarks}

Both models, bracketed contextual deletion and contextual lariat deletion, can serve as a basis for future research, applying these models to simulate computations on RNA sequences. Depending on upcoming results, we hope to be able to simulate at least finite automata, if not even stronger models such as pushdown-automata or Turing machines with co-transcriptional splicing, using this formalism. All questions investigated in this paper can still be considered specifically for the case of the considered alphabet having size $4$, as this has most practical relevance due to $4$ being the size of the alphabets for DNA and RNA sequences. Finally, out of the perspective of formal language theory and combinatorics on words, finding answers for the open questions in the context of this paper might serve as an additional step towards finding practical solutions for the problems motivating this line of research in general.

\bibliography{main_form_cotraspli}


\begin{thebibliography}{15}
\ifx \bisbn   \undefined \def \bisbn  #1{ISBN #1}\fi
\ifx \binits  \undefined \def \binits#1{#1}\fi
\ifx \bauthor  \undefined \def \bauthor#1{#1}\fi
\ifx \batitle  \undefined \def \batitle#1{#1}\fi
\ifx \bjtitle  \undefined \def \bjtitle#1{#1}\fi
\ifx \bvolume  \undefined \def \bvolume#1{\textbf{#1}}\fi
\ifx \byear  \undefined \def \byear#1{#1}\fi
\ifx \bissue  \undefined \def \bissue#1{#1}\fi
\ifx \bfpage  \undefined \def \bfpage#1{#1}\fi
\ifx \blpage  \undefined \def \blpage #1{#1}\fi
\ifx \burl  \undefined \def \burl#1{\textsf{#1}}\fi
\ifx \doiurl  \undefined \def \doiurl#1{\url{https://doi.org/#1}}\fi
\ifx \betal  \undefined \def \betal{\textit{et al.}}\fi
\ifx \binstitute  \undefined \def \binstitute#1{#1}\fi
\ifx \binstitutionaled  \undefined \def \binstitutionaled#1{#1}\fi
\ifx \bctitle  \undefined \def \bctitle#1{#1}\fi
\ifx \beditor  \undefined \def \beditor#1{#1}\fi
\ifx \bpublisher  \undefined \def \bpublisher#1{#1}\fi
\ifx \bbtitle  \undefined \def \bbtitle#1{#1}\fi
\ifx \bedition  \undefined \def \bedition#1{#1}\fi
\ifx \bseriesno  \undefined \def \bseriesno#1{#1}\fi
\ifx \blocation  \undefined \def \blocation#1{#1}\fi
\ifx \bsertitle  \undefined \def \bsertitle#1{#1}\fi
\ifx \bsnm \undefined \def \bsnm#1{#1}\fi
\ifx \bsuffix \undefined \def \bsuffix#1{#1}\fi
\ifx \bparticle \undefined \def \bparticle#1{#1}\fi
\ifx \barticle \undefined \def \barticle#1{#1}\fi
\bibcommenthead
\ifx \bconfdate \undefined \def \bconfdate #1{#1}\fi
\ifx \botherref \undefined \def \botherref #1{#1}\fi
\ifx \url \undefined \def \url#1{\textsf{#1}}\fi
\ifx \bchapter \undefined \def \bchapter#1{#1}\fi
\ifx \bbook \undefined \def \bbook#1{#1}\fi
\ifx \bcomment \undefined \def \bcomment#1{#1}\fi
\ifx \oauthor \undefined \def \oauthor#1{#1}\fi
\ifx \citeauthoryear \undefined \def \citeauthoryear#1{#1}\fi
\ifx \endbibitem  \undefined \def \endbibitem {}\fi
\ifx \bconflocation  \undefined \def \bconflocation#1{#1}\fi
\ifx \arxivurl  \undefined \def \arxivurl#1{\textsf{#1}}\fi
\csname PreBibitemsHook\endcsname

\bibitem[\protect\citeauthoryear{Geary et~al.}{2014}]{GearyRA14}
\begin{barticle}
\bauthor{\bsnm{Geary}, \binits{C.}},
\bauthor{\bsnm{Rothemund}, \binits{P.W.}},
\bauthor{\bsnm{Andersen}, \binits{E.S.}}:
\batitle{A single-stranded architecture for cotranscriptional folding of {RNA} nanostructures}.
\bjtitle{Science}
\bvolume{345}(\bissue{6198}),
\bfpage{799}--\blpage{804}
(\byear{2014})
\end{barticle}
\endbibitem

\bibitem[\protect\citeauthoryear{Geary et~al.}{2019}]{GearyMSS19}
\begin{barticle}
\bauthor{\bsnm{Geary}, \binits{C.}},
\bauthor{\bsnm{Meunier}, \binits{P.-{\'E}.}},
\bauthor{\bsnm{Schabanel}, \binits{N.}},
\bauthor{\bsnm{Seki}, \binits{S.}}:
\batitle{Oritatami: a computational model for molecular co-transcriptional folding}.
\bjtitle{International Journal of Molecular Sciences}
\bvolume{20}(\bissue{9}),
\bfpage{2259}
(\byear{2019})
\end{barticle}
\endbibitem

\bibitem[\protect\citeauthoryear{Merkhofer et~al.}{2014}]{MerkhoferHJ14}
\begin{bbook}
\bauthor{\bsnm{Merkhofer}, \binits{E.C.}},
\bauthor{\bsnm{Hu}, \binits{P.}},
\bauthor{\bsnm{Johnson}, \binits{T.L.}}:
In: \beditor{\bsnm{Hertel}, \binits{K.J.}} (ed.)
\bbtitle{Introduction to Cotranscriptional {RNA} Splicing},
pp. \bfpage{83}--\blpage{96}.
\bpublisher{Humana Press},
\blocation{Totowa, NJ}
(\byear{2014})
\end{bbook}
\endbibitem

\bibitem[\protect\citeauthoryear{Domaratzki and Okhotin}{2004}]{DomaratzkiO04}
\begin{barticle}
\bauthor{\bsnm{Domaratzki}, \binits{M.}},
\bauthor{\bsnm{Okhotin}, \binits{A.}}:
\batitle{Representing recursively enumerable languages by iterated deletion}.
\bjtitle{Theoretical Computer Science}
\bvolume{314}(\bissue{3}),
\bfpage{451}--\blpage{457}
(\byear{2004})
\end{barticle}
\endbibitem

\bibitem[\protect\citeauthoryear{Kari and Thierrin}{1996}]{KariT96}
\begin{barticle}
\bauthor{\bsnm{Kari}, \binits{L.}},
\bauthor{\bsnm{Thierrin}, \binits{G.}}:
\batitle{Contextual insertions/deletions and computability}.
\bjtitle{Information and Computation}
\bvolume{131}(\bissue{1}),
\bfpage{47}--\blpage{61}
(\byear{1996})
\end{barticle}
\endbibitem

\bibitem[\protect\citeauthoryear{Kari and Thierrin}{1995}]{KariT95}
\begin{barticle}
\bauthor{\bsnm{Kari}, \binits{L.}},
\bauthor{\bsnm{Thierrin}, \binits{G.}}:
\batitle{{K}-insertion and {K}-deletion closure of languages}.
\bjtitle{Soochow journal of mathematics}
\bvolume{21}(\bissue{4}),
\bfpage{479}--\blpage{495}
(\byear{1995})
\end{barticle}
\endbibitem

\bibitem[\protect\citeauthoryear{Horn et~al.}{2023}]{HornGLEL23}
\begin{barticle}
\bauthor{\bsnm{Horn}, \binits{T.}},
\bauthor{\bsnm{Gosliga}, \binits{A.}},
\bauthor{\bsnm{Li}, \binits{C.}},
\bauthor{\bsnm{Enculescu}, \binits{M.}},
\bauthor{\bsnm{Legewie}, \binits{S.}}:
\batitle{Position-dependent effects of {RNA}-binding proteins in the context of co-transcriptional splicing}.
\bjtitle{npj Systems Biology and Applications}
\bvolume{9}(\bissue{1}),
\bfpage{1}
(\byear{2023})
\end{barticle}
\endbibitem

\bibitem[\protect\citeauthoryear{S{\'a}nchez-Escabias et~al.}{2022}]{SanchezGR22}
\begin{barticle}
\bauthor{\bsnm{S{\'a}nchez-Escabias}, \binits{E.}},
\bauthor{\bsnm{Guerrero-Mart{\'\i}nez}, \binits{J.A.}},
\bauthor{\bsnm{Reyes}, \binits{J.C.}}:
\batitle{Co-transcriptional splicing efficiency is a gene-specific feature that can be regulated by {TGF}$\beta$}.
\bjtitle{Communications Biology}
\bvolume{5}(\bissue{1}),
\bfpage{277}
(\byear{2022})
\end{barticle}
\endbibitem

\bibitem[\protect\citeauthoryear{Kari et~al.}{2006}]{KariLKST06}
\begin{barticle}
\bauthor{\bsnm{Kari}, \binits{L.}},
\bauthor{\bsnm{Losseva}, \binits{E.}},
\bauthor{\bsnm{Konstantinidis}, \binits{S.}},
\bauthor{\bsnm{Sosik}, \binits{P.}},
\bauthor{\bsnm{Thierrin}, \binits{G.}}:
\batitle{A formal language analysis of {DNA} hairpin structures}.
\bjtitle{Fundamenta Informaticae}
\bvolume{71},
\bfpage{453}--\blpage{475}
(\byear{2006})
\end{barticle}
\endbibitem

\bibitem[\protect\citeauthoryear{Freier et~al.}{1986}]{FreierKJSCNT86}
\begin{barticle}
\bauthor{\bsnm{Freier}, \binits{S.M.}},
\bauthor{\bsnm{Kierzek}, \binits{R.}},
\bauthor{\bsnm{Jaeger}, \binits{J.A.}},
\bauthor{\bsnm{Sugimoto}, \binits{N.}},
\bauthor{\bsnm{Caruthers}, \binits{M.H.}},
\bauthor{\bsnm{Neilson}, \binits{T.}},
\bauthor{\bsnm{Turner}, \binits{D.H.}}:
\batitle{Improved free-energy parameters for predictions of {RNA} duplex stability}.
\bjtitle{Proceedings of the National Academy of Science of the United States of America}
\bvolume{83},
\bfpage{9373}--\blpage{9377}
(\byear{1986})
\end{barticle}
\endbibitem

\bibitem[\protect\citeauthoryear{Cho et~al.}{2018}]{ChoHKS18}
\begin{bchapter}
\bauthor{\bsnm{Cho}, \binits{D.-J.}},
\bauthor{\bsnm{Han}, \binits{Y.-S.}},
\bauthor{\bsnm{Kim}, \binits{H.}},
\bauthor{\bsnm{Salomaa}, \binits{K.}}:
\bctitle{Site-directed deletion}.
In: \bbtitle{Proceedings of the 22nd International Conference on Developments in Langauge Theory (DLT 2018)}.
\bsertitle{LNCS},
vol. \bseriesno{11088},
pp. \bfpage{219}--\blpage{230}.
\bpublisher{Springer}, \blocation{???}
(\byear{2018})
\end{bchapter}
\endbibitem

\bibitem[\protect\citeauthoryear{Middendorf}{1994}]{Middendorf1994}
\begin{barticle}
\bauthor{\bsnm{Middendorf}, \binits{M.}}:
\batitle{More on the complexity of common superstring and supersequence problems}.
\bjtitle{Theoretical Computer Science}
\bvolume{125}(\bissue{2}),
\bfpage{205}--\blpage{228}
(\byear{1994})
\end{barticle}
\endbibitem

\bibitem[\protect\citeauthoryear{Kreowski}{1979}]{Kreowski1979}
\begin{bchapter}
\bauthor{\bsnm{Kreowski}, \binits{H.-J.}}:
\bctitle{A pumping lemma for context-free graph languages}.
In: \beditor{\bsnm{Claus}, \binits{V.}},
\beditor{\bsnm{Ehrig}, \binits{H.}},
\beditor{\bsnm{Rozenberg}, \binits{G.}} (eds.)
\bbtitle{Graph-Grammars and Their Application to Computer Science and Biology},
pp. \bfpage{270}--\blpage{283}.
\bpublisher{Springer},
\blocation{Berlin, Heidelberg}
(\byear{1979})
\end{bchapter}
\endbibitem

\bibitem[\protect\citeauthoryear{Ibarra}{1978}]{Ibarra78}
\begin{barticle}
\bauthor{\bsnm{Ibarra}, \binits{O.H.}}:
\batitle{Reversal-bounded multicounter machines and their decision problems}.
\bjtitle{Journal of the ACM}
\bvolume{25}(\bissue{1}),
\bfpage{116}--\blpage{133}
(\byear{1978})
\end{barticle}
\endbibitem

\bibitem[\protect\citeauthoryear{Daley et~al.}{2003}]{DALEY200319}
\begin{barticle}
\bauthor{\bsnm{Daley}, \binits{M.}},
\bauthor{\bsnm{Ibarra}, \binits{O.H.}},
\bauthor{\bsnm{Kari}, \binits{L.}}:
\batitle{Closure and decidability properties of some language classes with respect to ciliate bio-operations}.
\bjtitle{Theoretical Computer Science}
\bvolume{306}(\bissue{1}),
\bfpage{19}--\blpage{38}
(\byear{2003})
\end{barticle}
\endbibitem

\end{thebibliography}

\end{document}